\newcommand{\ifDraft}[1]{\ifthenelse{\boolean{draft}}{#1}{}}
\newcommand{\ifelseDraft}[2]{\ifthenelse{\boolean{draft}}{#1}{#2}}
\newcommand{\ifArXiV}[1]{\ifthenelse{\boolean{arxiv}}{#1}{}}
\newcommand{\ifelseArXiV}[2]{\ifthenelse{\boolean{arxiv}}{#1}{#2}}
\newlength{\refwidth} 
\newcommand{\R}{{\mathbb R}}
\newcommand{\EE}{{\mathbb E}}
\newcommand{\PP}{{\mathbb P}}
\newcommand{\Prob}[1]{\PP\{ {#1} \}}
\newcommand{\var}[1]{\mbox{{\sf var}}[ {#1} ]}
\newcommand{\indicate}[1]{\boldmaths{1}\{ {#1} \}}
\newcommand{\dee}[1]{\, {\rm d}{#1}}
\newcommand{\boldmaths}[1]{{\ensuremath\boldsymbol{#1}}}
\newcommand{\bx}{\boldmaths{x}}
\newcommand{\by}{\boldmaths{y}}
\newcommand{\bz}{\boldmaths{z}}
\newcommand{\bX}{\boldmaths{X}}
\newcommand{\btheta}{\boldmaths{\theta}}
\newcommand{\bpi}{\boldmaths{\pi}}
\newcommand{\namestyle}[1]{\mbox{\textsf{#1}}}
\newcommand{\FP}{\namestyle{FP}}
\newcommand{\TP}{\namestyle{TP}}
\newcommand{\AUC}{\namestyle{AUC}}
\newcommand{\FPhat}{\widehat{\FP}}
\newcommand{\TPhat}{\widehat{\TP}}
\newcommand{\FPalt}{\widetilde{\FP}}
\newcommand{\hiSymbol}{>}
\newcommand{\loSymbol}{<}
\newcommand{\hi}[1]{{#1}^\hiSymbol}
\newcommand{\lo}[1]{{#1}^\loSymbol}
\newcommand{\symsub}[2]{{#1}_{\mbox{\scriptsize \textsc{#2}}}}
\newcommand{\Fsub}[1]{\symsub{F}{#1}}
\newcommand{\Fpos}{\Fsub{P}}
\newcommand{\Fneg}{\Fsub{N}}
\newcommand{\Ssub}[1]{\symsub{S}{#1}}
\newcommand{\Spos}{\Ssub{P}}
\newcommand{\Sneg}{\Ssub{N}}
\newcommand{\lamtrue}{\lambda_0}
\newcommand{\hyp}[1]{{\mathcal H}_{#1}}
\newtheorem{lemma}{Lemma}
\newtheorem{cor}{Corollary}
\newtheorem{prop}{Proposition}
\title{ROC curves for spatial point patterns and presence-absence data}
\author[1]{Adrian Baddeley}
\author[2]{Ege Rubak}
\author[3,4]{Suman Rakshit}
\author[5]{Gopalan Nair}
\affil[1]{School of Population Health, Curtin University, Perth, Australia}
\affil[2]{Department of Mathematical Sciences, Aalborg University, Aalborg, Denmark}
\affil[3]{School of Electrical Engineering, Computing, and Mathematical Sciences, Curtin University, Perth, Australia}
\affil[4]{Curtin Biometry and Data Analytics, Centre for Crop and Disease Management, Curtin University, Perth, Australia}
\affil[5]{School of Mathematics \& Statistics, University of Western Australia, Perth, Australia} %
\date{}
\begin{document}
\maketitle

\setcounter{tocdepth}{2}

\begin{abstract}
  Receiver Operating Characteristic (ROC) curves have recently been
  used to evaluate the performance of models for spatial presence-absence
  or presence-only data.
  Applications include species distribution modelling %
  and mineral prospectivity analysis. %
  We clarify the interpretation of the ROC curve in this context.
  Contrary to statements in the literature, ROC
  does not measure goodness-of-fit of a spatial model, and its
  interpretation as a measure of predictive ability is weak;
  it is a measure of ranking ability,
  insensitive to the precise form of the model.
  To gain insight we draw connections between ROC
  and existing statistical techniques for spatial point pattern data.
  The area under the ROC curve (AUC) is related to hypothesis tests
  of the null hypothesis that the explanatory variables have no effect.
  The shape of the ROC curve has a diagnostic interpretation.
  This suggests several new techniques, which extend the scope of application
  of ROC curves for spatial data, to support 
  variable selection and model selection, analysis of segregation
  between different types of points, adjustment for a baseline,
  and analysis of spatial case-control data.
  The new techniques are illustrated with several real example datasets.
  Open source \textsf{R} code implementing the techniques is available in the development version of
  our package \texttt{spatstat} \citep{baddturn05,baddrubaturn15} and will be included in the
  next public release.
\end{abstract}

\maketitle

\ifDraft{\centerline{\LARGE >> \today << }}

\ifDraft{{\tableofcontents}\pagebreak}

\section{Introduction}

Receiver Operating Characteristic (ROC) curves
have long been used to assess the performance of decision rules,
classifiers and hypothesis tests \citep{krzahand09,namdago02}.
The area under the ROC curve, AUC, provides a single numerical summary
of performance, useful for comparing several competing tests or classifiers.

In recent decades, these tools have been used 
to evaluate the performance of models for spatial presence-absence
and presence-only data
in spatial ecology (species distribution models, \citealp{fran09})
and in exploration geoscience
(mineral prospectivity indices or mineral potential maps,
\citealp{Porwal2010,fabbchun08,fordetal19,khad25thesis}).
In these fields of application, the interpretation of ROC and AUC has been controversial.
AUC is variously claimed to be
a measure of \emph{goodness-of-fit} of a predictive model \citep{fielbell97},
a measure of \emph{predictive power} \citep{lobojimereal07,aust07,fran09}
or a tool for \emph{validation} of the model \citep{nykaetal15}.
Some writers correctly warn that the ROC for spatial data depends on the choice of the
spatial study domain \citep{jime12} and ignores the overall abundance
or prevalence of the species \citep{manewillorme01}.

The statistical interpretation of ROC curves for spatial data needs clarification.
It is not necessarily valid to translate the original definition
of the ROC curve (for classifying individuals in a population) into the context
of spatial presence-absence data (for classifying pixels in a spatial grid)
because pixels are not independent individuals. At the very least, the interpretation of ROC
is different in the spatial context.

This article aims to clarify the interpretation of ROC and AUC
for spatial presence-absence data,
to identify their methodological strengths and weaknesses,
and to develop improvements and extensions, including a new extension
to spatial point pattern data.

In the literature on spatial data cited above,
the term ``ROC curve'' refers exclusively
to one particular form of the ROC curve,
based on probabilities predicted by a model (which we call the ``M-ROC'').
In this paper we emphasise that many alternative applications of ROC
curves are possible
and useful for spatial data.

We find that the ROC and AUC for models of spatial presence-absence data
do not measure goodness-of-fit of a predictive model. Rather, they measure
``badness-of-fit'' of the \emph{null} model,
i.e.\ the model in which
the probability of presence is the same at all locations.
The M-ROC is the most optimistic of all variants of the ROC curve,
is liable to the effects of over-fitting, and has poor ability to detect
certain departures from the fitted model.

The interpretation of ROC and AUC as measures of predictive power
is also weaker than generally believed.
In the case of a single explanatory variable, ROC and AUC are almost
completely unable to discriminate between different models or methodologies
when applied to the same data; they only measure the ability of the
\emph{explanatory variable} to rank or segregate the study region into
areas of high and low probability of presence.
Furthermore, 
this ability to segregate space always refers to a specific region of space;
the ROC and AUC can change markedly if we restrict attention to a sub-region
of the original study region, or if we consider an adjacent spatial region,
giving rise to instances of Simpson's Paradox. ROC curves cannot be used
to predict behaviour on other spatial regions, or to predict the response
to changes in covariate values (such as predicting the effect of climate
change on a species range).
Nevertheless, the fact that the M-ROC does not depend
on the precise form of the model can be an advantage
for some tasks such as variable selection.

To gain insight, we connect the ROC and AUC 
to other statistical techniques for spatial data,
including hypothesis tests (Berman-Waller-Lawson, Kolmogorov-Smirnov),
P--P plots (capture efficiency curve, fitting rate curve),
and rate estimation (resource selection function, mineral prospectivity index,
mineral potential map,
continuous Boyce index, point process intensity).
AUC is related to a test of the null hypothesis
that the covariates have no effect, rather than a goodness-of-fit test.
These connections improve understanding of the correct interpretation,
statistical behaviour, and strengths and weaknesses of ROC and AUC,
and help to identify conditions where they perform well. The strong connections
also imply that these various techniques cannot be treated as
independent pieces of corroborating evidence in favour of a model.

We develop new applications of the ROC and AUC
to spatial presence-absence data. These new techniques support
nonparametric analysis of dependence on a covariate,
variable selection and model selection, analysis of segregation
between different types of points, adjustment for a baseline,
and analysis of spatial case-control data.
We develop ``fitted'' and ``theoretical'' versions of the ROC curve
for model validation. We provide variance formulae
and strategies for avoiding overfitting.
The new techniques are illustrated with several real example datasets.
We discuss fields of potential application,
including spatial ecology and geological prospectivity.

We also develop the ROC and AUC for \emph{spatial point pattern} data,
that is, data providing the exact spatial locations of the objects of interest.
In geological prospectivity analysis, the primary survey data are
the exact spatial coordinates of the mineral deposits, and these are usually
discretised to obtain presence-absence data for analysis.
This discretisation step is unnecessary: ROC and AUC can be calculated
directly from the exact spatial coordinate data. Analysis based
on the exact coordinates is often simpler than analysis
based on discretised pixel data. These results also imply that ROC curves
derived from presence-absence data using different choices of pixel size
are consistent, asymptotically as pixel size tends to zero,
because the limit is the spatial point pattern case. On the other hand
the results also indicate that the ROC and AUC depend implicitly on the
spatial coordinate system.

We find that the \emph{shape} of the ROC curve has a diagnostic interpretation.
Roughly speaking, if the ROC curve is concave or convex, then thresholding
of the fitted probabilities is appropriate, and otherwise it is inappropriate.

The plan of the paper is as follows.
Section~\ref{S:background:ROC} recalls the definition
of ROC and AUC in a non-spatial context.
Section~\ref{S:background:spatial}
introduces spatial point pattern data, presence-absence data, spatial covariates, and
parametric statistical modelling of point pattern and presence-absence data
as a function of covariates.
Section~\ref{S:ROC:covariate} develops the conceptually simplest application
of the ROC curve to spatial presence-absence data and spatial
point pattern data, for a single explanatory variable. This is not the
most common version of the ROC curve but it gives clearer insight into
strengths and weaknesses.
Section~\ref{S:ROC:model} describes the ``model ROC curve''
that is typically used to assess the performance of a species distribution
model or a mineral prospectivity index.
It also investigates new kinds of ROC curves \emph{predicted by}
a species distribution model or prospectivity index.
Section~\ref{S:tests} shows that AUC and other measures of ``performance''
are related to tests of the null hypothesis that the explanatory variables
have no effect, rather than tests of goodness-of-fit.
In Section~\ref{S:rhohat} we explain how to correctly ``read'' the ROC curve
using its connection with the resource selection function
(prospectivity index, intensity function).
Section~\ref{S:extensions} develops new extensions of the ROC curve
for spatial data, including ROC curves with non-uniform weights and
non-uniform baselines, and applies them to model selection.
Section~\ref{S:modelcheck} proposes new techniques for model checking,
by comparing the empirical and model-predicted ROC curves,
and by assessing convexity.
We end with a Discussion in Section~\ref{S:discussion}.

\section{Background: ROC curves}
\label{S:background:ROC}

Here we briefly recall the definition of ROC and AUC in a
general context \citep{krzahand09}.

\subsection{ROC curve}
\label{S:background:ROC:def}

Consider a classification rule or hypothesis test rule
based on the value of a statistic $S$. An observation is allocated to
the `Positive' group (the null hypothesis is rejected) if $S > t$,
and allocated to the `Negative' group (the null hypothesis is Not rejected)
if $S \le t$, where $t$ is a threshold value to be chosen.
The probability of a true positive (the power of the test)
is
\begin{equation}
  \label{eq:TP}
\TP(t) = \Prob{S > t \mid \mbox{true status is Positive}}
\end{equation}
while the probability of a false positive (the size of the test)
is
\begin{equation}
  \label{eq:FP}
\FP(t) = \Prob{S > t \mid \mbox{true status is Negative}}.
\end{equation}
A good classifier is one which achieves a large value of $\TP(t)$
for any given value of $\FP(t)$.

The ROC curve is loosely defined as a plot of $\TP(t)$ against $\FP(t)$
for all possible thresholds $t$. This may not define a continuous curve
since $\TP$ and $\FP$ may be discontinuous; in that case it seems to be common practice
to use a piecewise linear interpolant. 
We shall instead define the ROC curve as %
the graph of $R(p)$ against $p$ for all $0 \le p \le 1$, where
\begin{equation}
  \label{eq:roc}
  R(p)=\TP( \FP^{-1}(p) ), \quad\quad 0 \le p \le 1
\end{equation}
where $\FP^{-1}(p) = \max\{ t: \FP(t) \ge p \}$ is the left-continuous
inverse function of $\FP$. Then $R(p)$ is a left-continuous, non-decreasing
function of $p$.
Good performance of the classifier is indicated by an ROC curve
lying well above the diagonal line $y=x$.

In some applications it is appropriate to reverse the direction of thresholding,
that is, to treat smaller values of $S$ as more favorable to the Positive group.
The classification rule allocates a subject to the Positive group
if $S \le t$. This ``reversed ROC'' curve will be denoted $\lo R(p)$.

\subsection{Empirical ROC from a finite dataset} 
\label{S:background:ROC:empirical}

Assume we observe data $(s_1,y_1), \ldots, (s_J, y_J)$ for $J$ individuals,
where $s_j$ is the value of the statistic $S$ for the $j$th individual
and $y_j$ is the indicator of true status,
(i.e.\ $y_j=1$ if the $j$th individual truly belongs to the Positive group, and
$y_j=0$ if Negative).
The empirical ROC curve based on these data is the (interpolated) graph of 
the empirical true positive rate
\begin{equation}
  \label{eq:TPhat:finite}
  \TPhat(t) = \frac{
    \sum\nolimits^{J}_{j=1} y_j \ \hat y_j
  }{
    \sum\nolimits^{J}_{j=1} y_j
  }
  = \frac{
    \#\{ j: \; y_j = 1, \; s_j > t \}
  }{
    \#\{ j: \; y_j = 1  \}
  }
\end{equation}
against the empirical false positive rate.
\begin{equation}
  \label{eq:FPhat:finite}
  \FPhat(t) =  \frac{
    \sum\nolimits^{J}_{j=1} (1-y_j) \ \hat y_j
  }{
    \sum\nolimits_{j=1}^{J} (1-y_j)
  }
  = \frac{
    \#\{ j: \; y_j = 0, \; s_j > t \}
  }{
    \#\{ j: \; y_j = 0  \}
  }.
\end{equation}
Equivalently, the empirical ROC curve is the plot of 
$\widehat R(p) = \TPhat(\FPhat^{-1}(p))$ against $p \in [0,1]$.
This should not be conflated with the true (``theoretical'') ROC curve $R(p)$
defined in \eqref{eq:roc}.

Of course,
\eqref{eq:TPhat:finite}
and
\eqref{eq:FPhat:finite}
are not the only possible estimators of the
functions $\TP$ and $\FP$; kernel estimates
\citep{hallhynd03,hallhyndfan04,zouhall00,zouEtal97} are discussed in
Section~\ref{ss:ROC:covariate:casecontrol}.

\subsection{Measures of `performance'}
\label{S:background:ROC:performance}

When a single numerical value of `performance' of a classifier is required, a popular
choice is the area under the ROC curve,
\begin{equation}
  \label{eq:AUC}
  \AUC = \int_0^1 R(p) \dee p = \int_{-\infty}^\infty \TP(t) \dee \FP(t).
\end{equation}
The last integral on the right hand side is a Stieltjes integral.
Clearly $0 \le \AUC \le 1$, and larger values of $\AUC$ are conventionally
interpreted as implying greater discriminatory power.
If the ROC falls on the diagonal line $R(p) \equiv p$ then $\AUC = 1/2$.
Consequently, values of $\AUC$ close to $1/2$
are interpreted as indicating a complete lack of discriminatory power.
The area under the ``reverse'' ROC curve $\lo R(p)$
will be denoted $\lo\AUC$ and satisfies $\lo\AUC = 1 - \AUC$.

An alternative to AUC is the one-sided Youden criterion
(\citealp{youd50}, \citealp[p.\ 30]{krzahand09})
defined as the greatest positive deviation of the ROC curve above the diagonal,
\begin{equation}
  \label{eq:Youden}
  J = \max_t (\TP(t) - \FP(t))_{+} = \max_p (R(p) - p)_{+}
\end{equation}
where $x_{+} = \max(0, x)$ denotes the positive part.
This originated in medical diagnostics and has been used in mineral prospectivity analysis
\citep{ruopetal08,chenwu19,baddetal21thresh,khad25thesis}.

\subsection{Relation of ROC to comparison of probability distributions}
\label{S:background:ROC:P-P}

The ROC curve is closely related to the \emph{P--P plot} \citep{wilkgnan68},
a standard tool for comparing two probability distributions. While the ROC curve is
usually conceived as a comparison between the \emph{disjoint} populations of positive
and negative individuals, the P--P plot is a comparison between any
two probability distributions, including distributions in overlapping populations.

Consider two random variables
$X$ and $Y$ with cumulative distribution functions
$F_X(t) = \Prob{X \le t}$ and $F_Y(t) = \Prob{Y \le t}$.
The P--P plot is a plot of $F_Y(t)$ against $F_X(t)$ for all $t$,
or equivalently, the graph of $H(p)$ against $p$
for $0 \le p \le 1$, where
\begin{equation}
  \label{eq:P-P}
  H(p) = F_Y(F_X^{-1}(p)),
\end{equation}
where $F_X^{-1}(p) = \inf\{ x: F_X(x) \ge p\}$ is the right-continuous
inverse function of $F_X$.

In the context of ROC curves, let $\Spos$ and $\Sneg$ be random variables
denoting the discriminant scores for randomly selected members
of the positive and negative groups
respectively. These random variables have cumulative distribution functions
$\Fpos(t) = 1 - \TP(t)$ and $\Fneg(t) = 1 - \FP(t)$. Accordingly,
the ROC curve is
\begin{equation}
  \label{eq:rocF}
  R(p)= \TP(\FP^{-1}(p)) = 1-\Fpos( {\Fneg}^{-1}(1-p) ),
  \quad\quad 0 \le p \le 1.
\end{equation}
and conversely the P--P plot is the ``reversed'' ROC curve
\begin{equation}
  \label{eq:roc-F}
  \lo R (p)= \Fpos( {\Fneg}^{-1}(p)) = 1 - \TP(\FP^{-1}(1-p)),
  \quad\quad 0 \le p \le 1
\end{equation}
so that $\lo R(p) = 1 - R(1-p)$.
That is, the P--P plot and ROC curve are equivalent when the axes are reversed.

\subsection{Probability integral transformation}

The connection to P--P plots shows that ROC curves 
are related to the probability integral
transformation. If the distributions of $\Sneg$ and $\Spos$ are continuous,
for example, then $\lo R(p)$ is the cumulative distribution function
\begin{equation}
  \label{eq:Rneg=cdf}
  \lo R(p) = \Prob{ \Fneg(\Spos) \le p }
\end{equation}
of the values $U = \Fneg(\Spos)$, and similarly 
\begin{equation}
  \label{eq:R=cdf}
  R(p) = \Prob{ \Fneg(\Spos) \ge 1 - p }
  = \Prob{ 1 - \Fneg(\Spos) \le p }
\end{equation}
is the c.d.f.\ of the values $V = 1 - \Fneg(\Spos)$.
If the distributions of $\Sneg$ and $\Spos$ were identical
(and continuous) then $U$ and $V$ would be uniformly distributed
between 0 and 1.

From this it can be derived that $\AUC$ has the elegant interpretation
\begin{equation}
  \label{eq:AUC=prob}
   \AUC = \Prob{\Spos > \Sneg} 
\end{equation}
where $\Spos$ and $\Sneg$ are values of $S$ for independent, randomly-selected
members of the Positive and Negative groups, respectively.
We also have $\AUC = \EE[U] = 1 - \EE[V]$.

\section{Background: spatial data and models}
\label{S:background:spatial}

We are concerned with the spatial distribution of `individuals',
which may be individual animals or organisms of a particular species,
or mineral deposits of a particular kind. 
We assume the individuals have negligible size at the scale of interest.
The original observations may be either a list of the exact spatial coordinates
of the observed individuals, or an array of binary variables
indicating the presence or absence of individuals in the cells of a grid.

\subsection{Spatial point pattern}
\label{S:background:spatial:point}

In a ``mapped'' spatial point pattern 
\citep{digg83,digg14,ripl81,cres91,illietal08,baddrubaturn15}
the available data are the exact spatial locations of the individuals
observed within a given sampling region $W \subset \R^d$. The point pattern is
a finite set
$
  \bx = \{x_1, \ldots, x_n\}, \quad n \ge 0, \quad x_i \in W
$
where the number of points $n$ is not fixed in advance. It will be assumed
that two individuals cannot exist at the same location.

The left panel of Figure~\ref{F:bei}
shows a typical mapped point pattern dataset giving the 
locations of 3604 trees of the species
\textit{Beilschmiedia pendula} (Lauraceae)
in a 1000-by-500 metre rectangular sampling region
in the tropical rainforest of Barro Colorado Island
\citep{hubbfost83,condhubbfost96,cond98}.

\begin{figure}[!h]
  \centering
  \centerline{
    \includegraphics*[width=0.45\refwidth,bb=15 120 385 310]{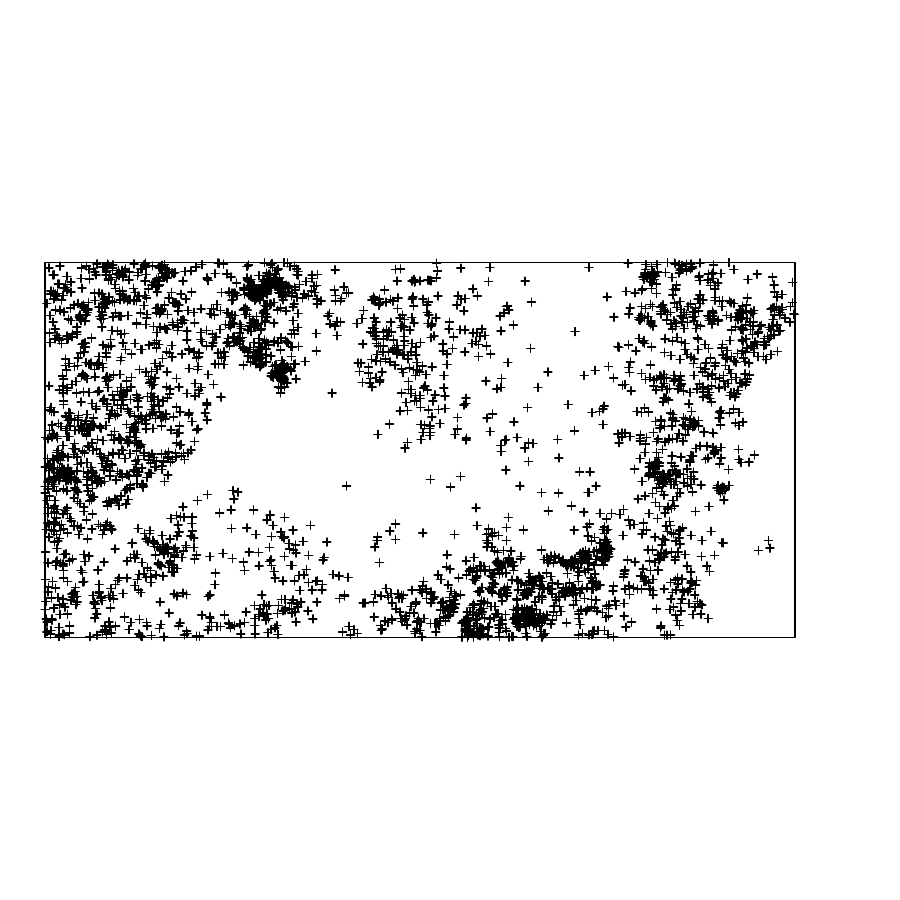}
    \includegraphics*[width=0.45\refwidth,bb=13 120 385 310]{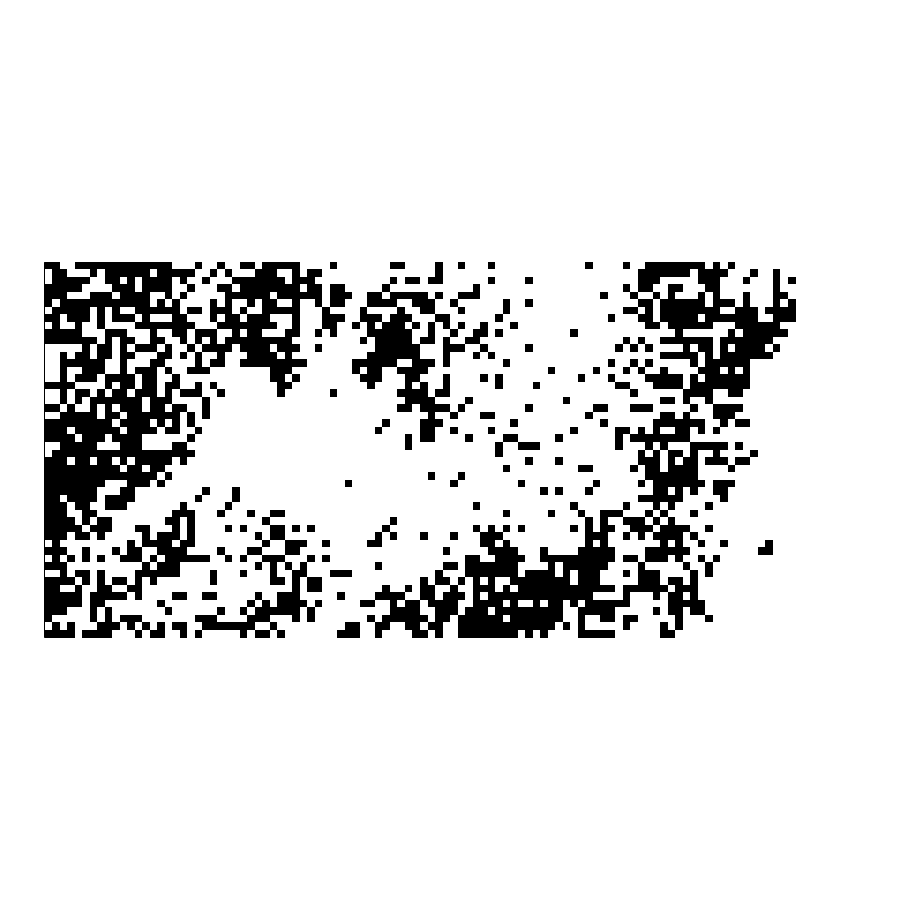}
  }
  \caption{\textit{Beilschmiedia} tree data.
    \emph{Left:}
    Exact tree locations ($+$) in a 1000 by 500 metre survey region.
    \emph{Right:}
    Presence-absence indicators for a grid of 10-metre square pixels.
    Black indicates presence.
  }
  \label{F:bei}
\end{figure}

\subsection{Presence-only and presence-absence data}
\label{S:background:spatial:presence}

In ``presence-only'' and ``presence-absence'' data \citep{fran09}
the study region is subdivided into cells. 
In each cell, individuals may be either ``present'' (observed to be present
in the cell), ``absent'' (determined to be absent from the cell) or
``undetermined'' (not conclusively determined to be either present
or absent). Presence-absence data record 
``presence'' or ``absence'', for those cells whose status is known.
Presence-only data record ``presence'' or ``non-presence'' (absence or
undetermined status) for every cell in the study region.

Assume the study region $W$ is partitioned into cells
$Q_1, \ldots, Q_J$ of equal area $a$.
For each cell $Q_j$ (or for a subset of cells for which observations
are available) let $y_j$ be the \emph{presence indicator}
that equals 1 if the cell $Q_j$ is observed to contain
any individuals of the species (``presence'')
and 0 otherwise (implying ``absence'' for presence-absence data;
``absence'' or ``unknown'' for presence-only data).
The data consist of the vector of indicator variables
$
  \by = (y_1 \; y_2 \; \ldots \; y_J)^\top.
$

A mapped point pattern $\bx$ can be discretised to obtain presence-absence data.
Let $n_{j}= n(\bx \cap Q_{j})$ denote the number of points of the observed
point pattern $\bx$ that fall inside the $j$th cell.
Then $y_{j}=\indicate{n_{j} >0}$ 
takes the value $1$ if the $j$th cell contains any points of $\bx$,
and $0$ if it does not.

The right panel of Figure~\ref{F:bei} shows indicators of presence or absence of
\textit{Beilschmiedia pendula} trees in a grid of 10-metre-square pixels
superimposed on the previous Figure. In this illustrative example,
the presence-absence data were derived by discretising
the exact coordinate data shown in the left panel of Figure~\ref{F:bei}
as described above.

In ecological applications the exact coordinates are typically not recorded,
and the presence-absence or presence-only data are the original observations;
while in many applications to geological prospectivity,
exact coordinates are indeed discretised
to produce pixel-based presence-absence data for further analysis.

In the right panel of Figure~\ref{F:bei} there are 1753 pixels
in which \textit{B.\ pendula} is present,
while 3604 trees are depicted in the left panel. %

\subsection{Multitype point patterns and case-control data}
\label{S:background:spatial:casecontrol}

A ``multitype'' point pattern is a mapped spatial point pattern in which
each point is labelled as belonging to one of several different types.
Examples include survey maps of trees in a forest labelled by species,
catalogues of galaxies labelled by type or colour,
and microscope images of cell populations labelled by cell type.
In spatial case-control data in epidemiology, there are two types of points
which represent respectively disease cases and (a sample from)
the population at risk \citep{jarndiggchet02,diggetal07}.

Figure~\ref{F:mucosa} shows 
the locations of centres of cell profiles in a histological section of the
gastric mucosa (mucous membrane of the stomach) of a rat.
The window is oriented so that the lower edge is closest to the
stomach wall while the upper edge is interior to the stomach.
The cells are classified into two types: there are 86 \emph{ECL cells}
(enterochromaffin-like cells) and 807 \emph{other} cells.
The intensities of each type of cell are clearly non-uniform.
One hypothesis of interest is whether the ratio of ECL cells
to other cells is constant over the region; equivalently, whether the 
spatially-varying intensities of ECL cells and other cells are
proportional.
An alternative hypothesis is that ECL cells are relatively more abundant
at locations close to the stomach wall.
The data were originally collected by Dr Thomas Berntsen,
and have been discussed and analysed in \citet[pp.\ 2, 169]{mollwaag04}.

\begin{figure}[!hb]
  \centering
  \centerline{
    \includegraphics*[width=0.52\refwidth,bb=15 75 425 355]{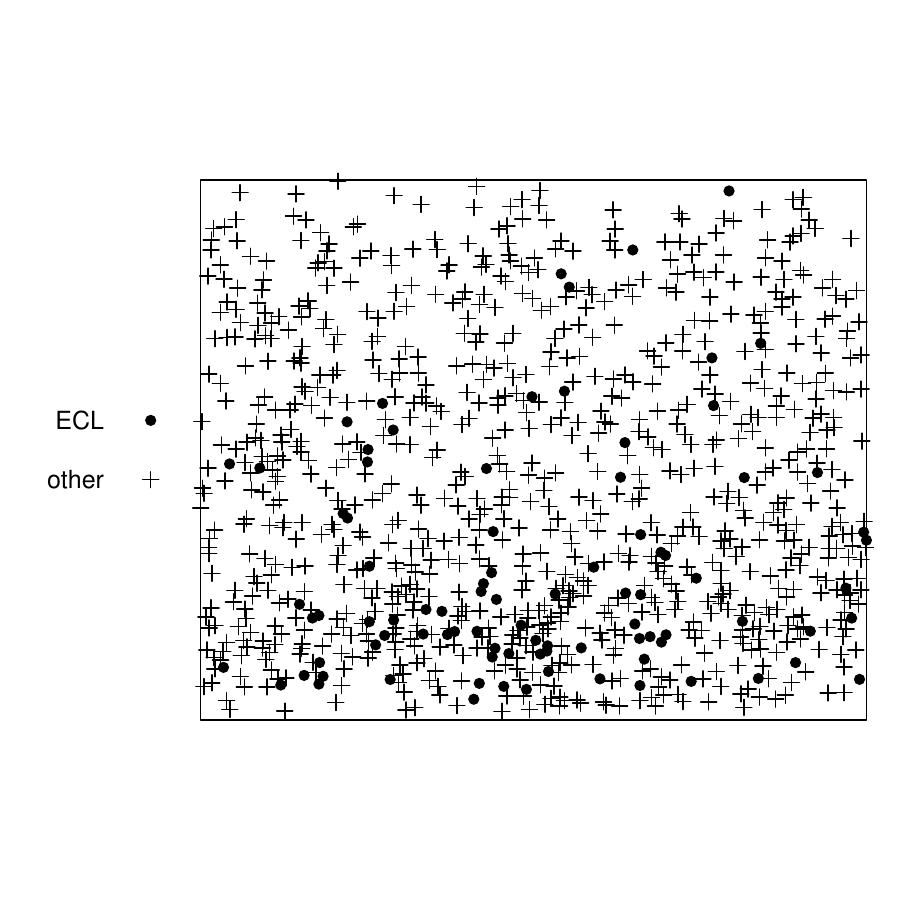}
  }
  \caption{
    Gastric mucosa data, showing locations of enterochromaffin-like cells
    ($\bullet$) and other cells ($+$). Lower edge is closest to stomach wall.
    Collected by Dr Thomas Berntsen.
  }
  \label{F:mucosa}
\end{figure}

\subsection{Spatial covariates}
\label{S:background:spatial:covariates}

A spatial covariate (or ``evidence layer'')
is a real-valued function $Z(u), u \in W$ defined on the study region.
Examples in geological science
include isotope abundance ratio, magnetic field strength, and terrain slope.

Figure~\ref{F:bei.extra} shows contour maps of two spatial covariates,
terrain elevation and terrain slope,
which are available for the survey region of the
\textit{Beilschmiedia pendula} data.

\begin{figure}[!h]
  \centering
\centerline{
    \includegraphics*[width=0.4\refwidth,bb=15 120 385 310]{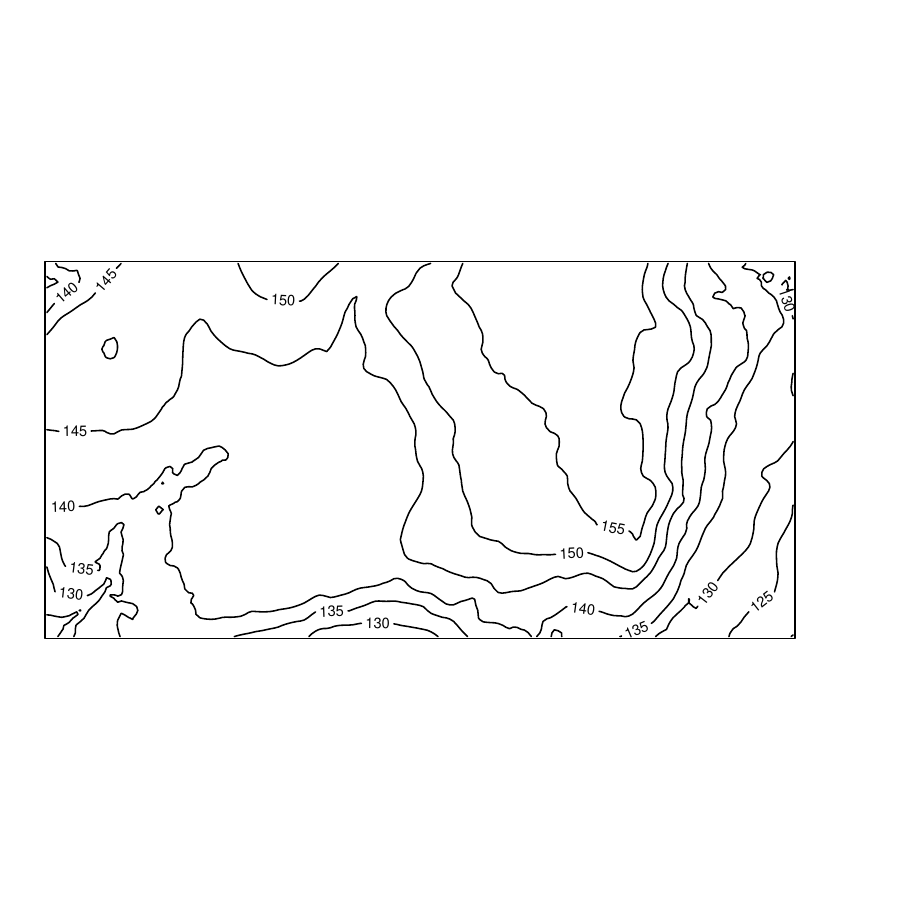}
    \includegraphics*[width=0.4\refwidth,bb=15 120 385 310]{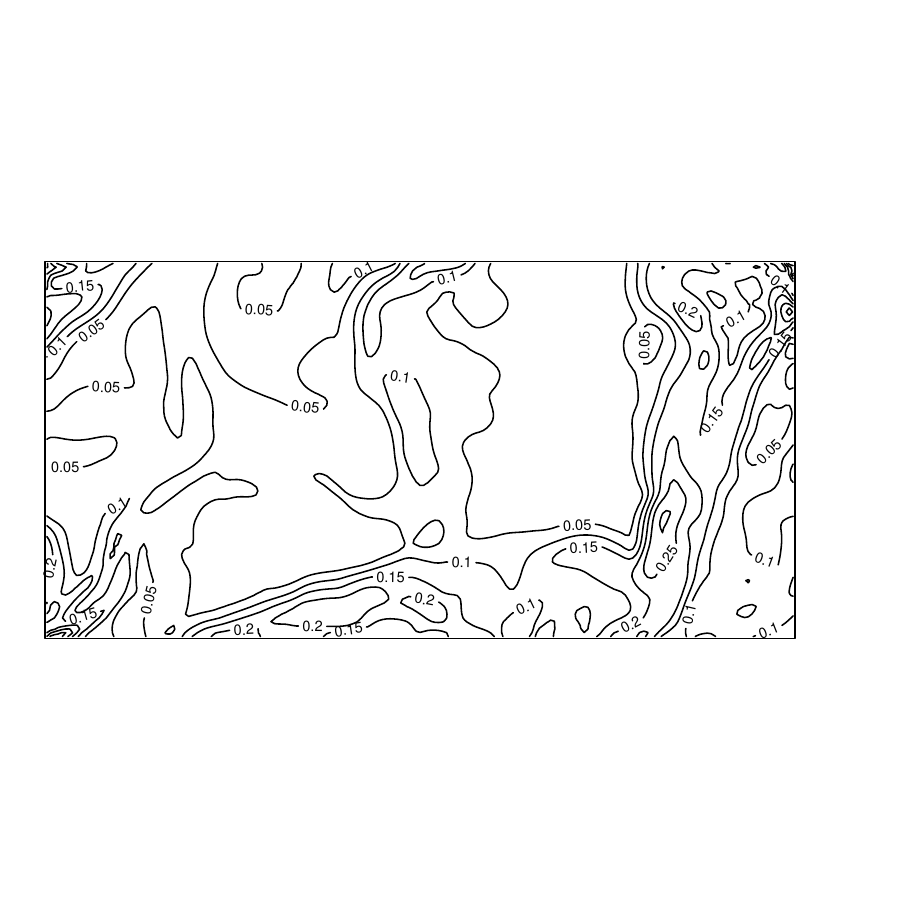}
  }
  \caption{
    Contours of terrain elevation in metres (\emph{Left})
    and contours of terrain slope (\emph{Right})
    for the \textit{Beilschmiedia pendula} data.
  }
  \label{F:bei.extra}
\end{figure}

The left panel of 
Figure~\ref{F:murchison} shows data from a geological survey
of the Murchison region in Western Australia \citep{watkhick90}
first analysed by \citet{knoxgrov97}.
Crosses indicate the locations of known gold deposits.
Lines represent geological faults,
and grey-shaded polygons
represent a particular rock type, greenstone outcrop.
The main aim is to predict the spatially-varying abundance
of gold deposits from the more easily observable fault
pattern and the greenstone outcrop map.
Figure~\ref{F:murchison} strongly suggests that proximity to faults
is predictive for gold prospectivity.

\begin{figure}[!h]
  \centering
  \centerline{
    \includegraphics*[width=0.35\refwidth,bb=40 10 390 420]{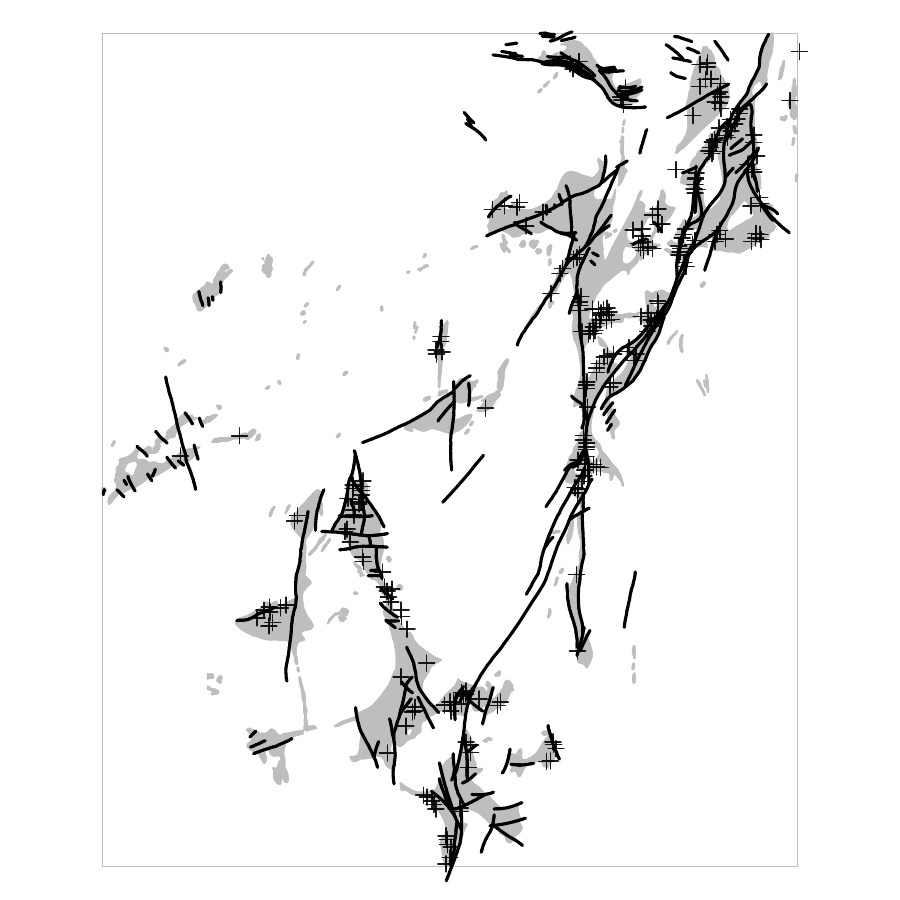}
    \includegraphics*[width=0.34\refwidth,bb=40 20 365 410]{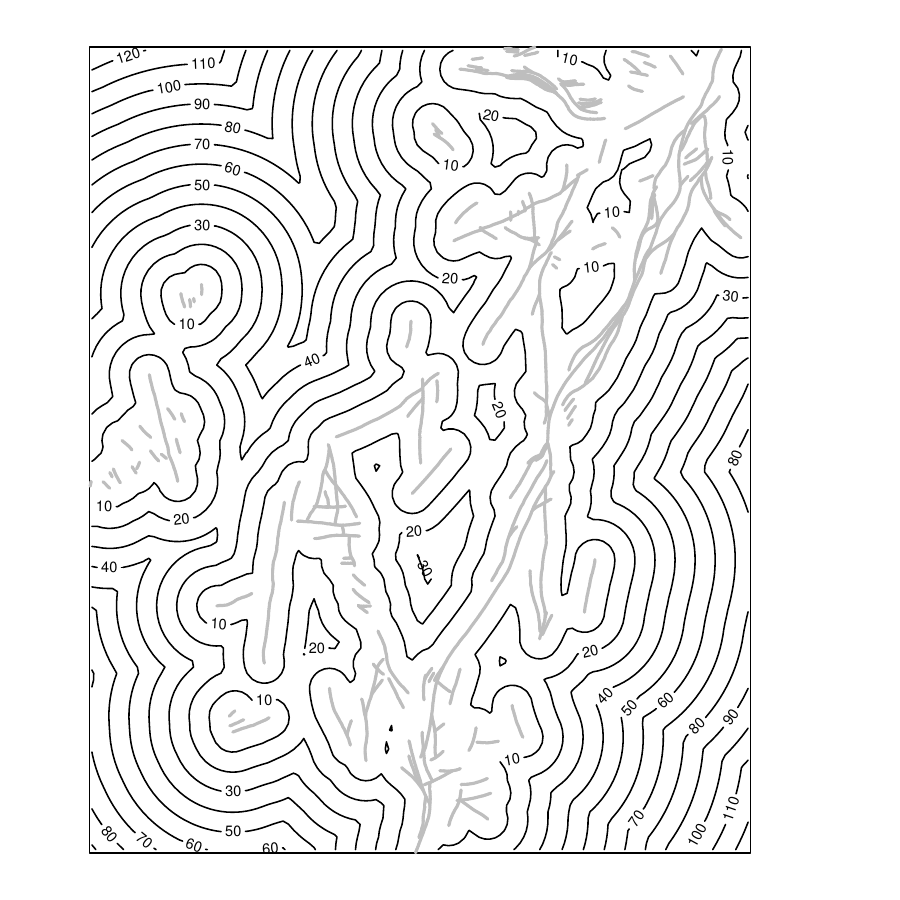}
  }  
  \caption{Murchison data.
    \emph{Left:} gold deposits (+), geological faults (---)
    and greenstone outcrop (grey shading) in a survey region 330 by 400
    kilometres across. Vector (spatial coordinate) data,
    rounded to the nearest metre.
    \emph{Right:}
    Contours of distance (in \textit{km}) to the nearest geological fault
    in the Murchison data. Faults shown as light grey lines.
  }
  \label{F:murchison}
\end{figure}

The covariate information consists of the greenstone
region and the geological faults. For modelling purposes these must be
converted to spatial functions $Z(u), u \in W$. 
For the geological fault lines, a common choice is the
\emph{distance function} $D(u)$ defined for any location $u$
as the shortest distance from $u$ to the nearest geological fault
\citep{knoxgrov97,badd18iamg}.
The right panel of Figure~\ref{F:murchison} shows contours of this function.
For the greenstone polygons, the indicator function defined by
$G(u) = 1$ if location $u$ lies inside the greenstone and $G(u) = 0$ otherwise,
will be used in some of our examples.

\subsection{Models for spatial data which depend on covariates}

\subsubsection{Presence-absence data}

To investigate whether the \textit{B.\ pendula} trees
exhibit a preference for higher terrain elevation (for example),
a standard approach using the presence-absence data
is to fit a logistic regression model
for the probability of presence as a function of terrain elevation,
\begin{equation}
  \label{eq:logistic}
  \log\frac{\pi_j}{1-\pi_j} = \log a + \beta_0 + \beta_1 Z_j
\end{equation}
where, for a given pixel $Q_j$,
the probability of presence of the species is $\pi_j$,
and $Z_j$ is the terrain elevation. Here $\beta_0,\beta_1$ are parameters
to be estimated,
and the offset $\log a$ (where $a$ is pixel area) ensures comparability
of results obtained using different pixel sizes.
This approach was developed independently by
\citet{lewi72survey,tuke72,agte74,bril78,kvam83} and others.
See \citet{badd18iamg} for a recent survey.

A surprising common misconception in the applied literature is that
logistic regression is a nonparametric technique \citep[p.\ 24]{kvam06},
i.e.\ that it does not assume any particular form of the relationship
between $p$ and $Z$.
On the contrary, the logistic regression equation \eqref{eq:logistic}
specifies a linear relationship between the predictor $Z$ and the
log odds of presence.  A truly nonparametric model would take the form
\begin{equation}
  \label{eq:nonparam:presence}
  \frac{\pi_j}{1-\pi_j} = a \rho(Z_j)
\end{equation}
where $\rho(z)$ is an unknown function to be estimated.
This model, and its important connections with ROC curves,
will be discussed in Section~\ref{S:rhohat}.
In real applications, the model typically involves several explanatory
variables, so that in the equations above,
$Z_j$ and $Z(u)$ may be $m$-dimensional vectors
representing the values of $m$ scalar valued covariates.

\subsubsection{Mapped spatial point patterns}

For mapped point patterns, the corresponding approach is to
fit a Poisson point process model in which the point process intensity
is a loglinear function of the covariate. The observed point pattern $\bx$
is assumed to be a realisation of a Poisson point process $\bX$ in $W$,
with intensity (expected number of points per unit area)
\begin{equation}
  \label{eq:loglinear}
  \log\lambda(u) = \beta_0 + \beta_1 Z(u) , \quad\quad u \in W
\end{equation}
at any given spatial location $u \in W$,
where $Z(u)$ is the covariate. 
The close connection between the logistic regression
and the Poisson point process model is explained by
\citet{baddetal10,wartshep10}.
The logistic regression model \eqref{eq:logistic} is asymptotically
equivalent to the loglinear Poisson model \eqref{eq:loglinear}
for small pixel size.

Again, the loglinear relationship \eqref{eq:loglinear} is a simplifying model
assumption, which may be false in a real application.
A nonparametric model would take the form
\begin{equation}
  \label{eq:nonparam:cts}
  \lambda(u) = \rho(Z(u)) , \quad\quad u \in W, 
\end{equation}
where $\rho$ is a function to be estimated. Again we shall find that
$\rho$ is closely related to the ROC curve.

\subsubsection{Interpretation of models}

A properly-formulated species distribution model (SDM) or binary regression or
point process model is a kind of ``law'' expressing
species' preference for particular habitat, etc.
and can conceivably be extrapolated from one study region to another.

In principle, such a model predicts the abundance of data points
at any given location where covariate values are known. However,
in many applications to spatial data,
the main goal of modelling is to identify the conditions
which \emph{maximise} the abundance of data points.
In ecology, attention is often focused on the ``home range''
or ``favorable conditions'' for a species. Models are used
in order to understand species requirements, manage species,
predict change (e.g.\ response to climate change), and assess risks.
Covariates are chosen for their explanatory power in a model
\citep{elitleat09}.
In exploration geology, the goal is to identify the \emph{highest} spatial
concentration of mineral deposits, and there is no interest whatsoever in
modelling the lack of deposits in non-prospective places.
Covariates may be chosen for operational reasons 
rather than for scientific understanding \citep{fordetal19}.

This paper does not make assumptions about the particular models
(such as species distribution models, logistic regressions
and point process models)
or predictive procedures (such as machine learning \citep{Yeomans2018PhD} or
fuzzy predictors \citep{Zhang2015}) that may be in use.
It applies to the use of ROC and AUC for evaluating the performance
of all such models and predictive procedures.

\subsubsection{Spatial case-control data}
\label{S:background:model:casecontrol}

A spatial case-control point pattern can be analysed
with or without conditioning on the observed locations
\citep{digg90,diggrowl94}. 
In the unconditional analysis, the data are treated as a realisation 
of a bivariate spatial point process, requiring a model for the
random spatial locations of the points. The spatial
relative risk (the spatially-varying probability of a case) is determined by
the ratio of the intensities of cases and controls. In the conditional analysis,
the locations $x_i$ are treated as fixed, and the type labels $m_i$
(where $m_i = 1$ denotes a case and $m_i = 0$ denotes a control)
are modelled as a realisation of a binary random field at the sites $x_i$.
The spatially-varying relative risk at a location $x_i$ is the probability that $m_i = 1$.
When the analysis includes covariates, an advantage of the
conditional analysis is that the values of the covariates are required
only at the locations $x_i$. 

In this article we assume that the analysis
of case-control data is conditional on the locations.
Thus, we have $J$
irregularly placed spatial locations $x_j$, $j=1,\dots,J$ where
we model the conditional probability of the point at location $x_j$
being a case, $\pi_j = P(m_j=1)$.
This model may be directly specified
(e.g.\ as having the same form as the pixel based model in
\eqref{eq:logistic}), or it may be derived from an underlying point process model
assumed to have
generated the marked point pattern; see
\citet[pp.\ 358--359]{baddrubaturn15};
\citet{digg90,diggrowl94}.

\section{ROC for a spatial covariate}
\label{S:ROC:covariate}
\label{S:simpsons}

The ROC curve is often regarded
as a tool for assessing the performance of a model.
However, it is possible to construct an ROC curve
without involving a model, using only the raw data
(spatial point pattern or presence-absence data)
and using any spatial covariate as the discriminant variable.
We call this the ``covariate ROC'' (C-ROC) and describe it below.

Many aspects of ROC curves are easier to understand and explain
in this context where the discriminant is a fixed covariate,
than in the more complicated case where the discriminant is
derived from a fitted model.
Many of the conclusions of this section also apply to the ``model ROC''
discussed in subsequent sections.

\subsection{Definition}

Assume there is a single, given, real-valued spatial covariate $Z$,
which may be one of the original explanatory variables, or
may have been constructed from them,
e.g.\ the distance to the nearest fault line in the Murchison example
of Section~\ref{S:background:spatial:covariates}. The objective is to assess
whether $Z$ has utility for predicting the abundance and distribution
of points, without assuming any specific model for the dependence on $Z$.
In applications, the use of a single covariate
may be an initial or interim step in an analysis involving many
explanatory variables.

\subsubsection{Presence-absence data}
\label{ss:rocData.pixel}

In the case of presence-absence data
(Section~\ref{S:background:spatial:presence}),
the data consist of the presence indicators $y_j$
for pixels $Q_j$, $j = 1,\ldots,J$.
For any given threshold $t \in \R$, consider the classifier 
$\hat y_j = \indicate{z_j > t}$.
The empirical C-ROC curve is defined as the 
empirical ROC curve (Section~\ref{S:background:ROC:empirical})
based on these data. The spatial covariate $Z$ plays the role of the
statistic $S$ in equations \eqref{eq:TPhat:finite} and \eqref{eq:FPhat:finite},
and the empirical C-ROC curve is 
the graph of $\TPhat(t)$ against $\FPhat(t)$, where
\begin{equation}
  \label{eq:TPFPhat:covar}
  \TPhat(t)
  =
  \frac{
    \#\{ j: \; y_j = 1, \; z_j > t \}
  }{
    \#\{ j: \; y_j = 1  \}
  }
  ,
  \quad\quad\quad
  \FPhat(t)
  =
  \frac{
    \#\{ j: \; y_j = 0, \; z_j > t \}
  }{
    \#\{ j: \; y_j = 0  \}
  }
\end{equation}
This empirical version of the C-ROC curve will be denoted $R_{Z, \by}(p)$
where $\by = ( y_1,\ldots,y_J)$ is the vector of presence indicators.

This version of the ROC curve is based solely on a spatial covariate,
without reference to a model. It has been used at least once
in mineral prospectivity analysis \citep{goodetal93} but we have
not found other references.

In \eqref{eq:TPFPhat:covar} we note that $\TPhat$ is
computed from the observed data at `presence' pixels 
and $\FPhat$ from `absence' pixels.
This is consistent with the implicit assumption
in Section~\ref{S:background:ROC} that the Positive and Negative populations
are disjoint. 
However, in the spatial context, where the status of each pixel
can be either positive or negative (presence or absence),
it might be more appropriate to replace
$\FPhat(t)$ in \eqref{eq:TPFPhat:covar} by
\begin{equation}
  \label{eq:FPalt:finite}
  \FPalt(t) = \frac{1}{J} \sum_j \hat y_j
  = \frac{1}{J} \sum_j \indicate{z_j > t}
\end{equation}
the fraction of \emph{all} pixels that are predicted to contain a presence.
This is simpler to interpret, since it is the fraction of area
in the survey region where the covariate value $z_j$ exceeds the nominated
threshold $t$.
The discrepancy between $\FPhat(t)$ and $\FPalt(t)$ is small when pixel size
is small. In applications to mineral prospectivity, this approach is useful
because it avoids the difficult task of finding ``true negative'' pixels
\citep{nykaetal15}.

Under suitable conditions 
the empirical true positive rate $\TPhat(t)$ in \eqref{eq:TPFPhat:covar}
is a pointwise consistent estimator of the ``true'' rate of true positives
\begin{eqnarray}
  \label{eq:TP:true:finite}
  \TP(t) &=& \frac{
             \sum\nolimits^{J}_{j=1} \pi_j \ \indicate{z_j > t}
             }{
             \sum\nolimits^{J}_{j=1} \pi_j
             },
\end{eqnarray}
where $\pi_j$ is the (unknown) true probability that pixel $j$ contains
a presence. Similarly,
the empirical false positive rate $\FPhat(t)$ in \eqref{eq:TPFPhat:covar} and
the alternative version $\FPalt(t)$ in \eqref{eq:FPalt:finite} are
both pointwise consistent estimators of the true rate of false positives
\begin{eqnarray}
  \label{eq:FP:true:finite}
  \FP(t) &=&  \frac{
             \sum\nolimits^{J}_{j=1} (1-\pi_j) \ \indicate{z_j > t}
             }{
             \sum\nolimits_{j=1}^{J} (1-\pi_j)
             }.
\end{eqnarray}
A plot of \eqref{eq:TP:true:finite} against \eqref{eq:FP:true:finite}
shall be called the ``true'' or ``theoretical'' C-ROC curve
for the covariate $Z$, and denoted $R_{Z,\bpi}(p)$ where
$\bpi = (\pi_1,\ldots,\pi_J)$ is the vector of presence probabilities.

For a fitted model with estimated presence probabilities $\hat\pi_j$ the ``model-predicted'' C-ROC
curve, $R_{Z,\hat\bpi}$ is defined similarly, with $\bpi$ replaced by $\widehat\bpi$
in \eqref{eq:TP:true:finite} and \eqref{eq:FP:true:finite}.

The calculations above treat every pixel as having equal weight,
and could be inappropriate in applications where pixels do not have equal area.
Examples include geographical projections of observations on the Earth's surface
where the projection does not have unit Jacobian, such as the projection
to latitude-longitude coordinates. Appropriate adjustments to the C-ROC
curve are described in Section~\ref{S:ROC.baseline}.

\subsubsection{Spatial point pattern data}
\label{ss:rocData.cts}

In the case of mapped spatial point pattern data
(Section~\ref{S:background:spatial:point}),
the data consist of the exact coordinates of the observed points,
$\bx = \{x_1,\ldots,x_n\}$.
The continuous-space counterpart of $\TPhat(t)$ in \eqref{eq:TPFPhat:covar} is
\begin{equation}
  \label{eq:TPhat:cts}
    \TPhat(t) = \frac 1 n \sum\limits_{i=1}^{n} \indicate{Z(x_{i}) > t} ,
\end{equation}
the fraction of data points $x_i$
at which the covariate value $Z(x_i)$ exceeds the threshold.
Equivalently, $1-\TPhat(t)$ is the c.d.f.\ of the
covariate value $Z(x_I)$ at a randomly-selected data point $x_I$
(where $I$ is uniformly distributed on $\{1,2,\ldots,n\}$).
The continuous-space counterpart of both $\FPhat(t)$ in \eqref{eq:TPFPhat:covar}
and $\FPalt(t)$ in \eqref{eq:FPalt:finite} is 
\begin{eqnarray}
  \label{eq:FPhat:cts}
  \FPhat(t) = \frac 1 {|W|} \int_W \indicate{Z(u) > t} \dee u,
\end{eqnarray}
the fraction of \emph{area} in the study region where the covariate value
exceeds the threshold.
Here $|W|$ is the area of the study region $W$. Equivalently,
$1-\FPhat(t)$ is the c.d.f.\ of the covariate value $Z(U)$
at a random location $U$ uniformly distributed in $W$.
The graph of $\TPhat(t)$ against $\FPhat(t)$ is the empirical C-ROC curve
$R_{Z,\bx}(p)$.

The continuous space versions \eqref{eq:TPhat:cts}--\eqref{eq:FPhat:cts}
can be derived as the limit of the discrete versions \eqref{eq:TPFPhat:covar}
for the presence-absence case as the pixel size $a$ tends to zero,
under regularity assumptions on the covariate $Z$,
and assuming all points $x_i$ are distinct.
The alternative form \eqref{eq:FPalt:finite} also converges to
\eqref{eq:FPhat:cts}.

The continuous-space version of the C-ROC curve
\eqref{eq:TPhat:cts}--\eqref{eq:FPhat:cts}
resolves issues inherent in the presence-absence case,
about the choice of pixel size, and the loss of information due to
discretisation. It can be used to establish connections
with other methodology for spatial point patterns.

Under suitable conditions, the empirical true positive rate
$\TPhat(t)$ in \eqref{eq:TPhat:cts}
is a consistent estimator of the ``true'' or ``theoretical'' true positive rate
\begin{eqnarray}
  \label{eq:TP:true:cts}
  \TP(t) &=& \frac{
             \int_W \indicate{Z(u) > t} \lambda(u) \dee u
             }{
             \int_W                     \lambda(u) \dee u
             }
\end{eqnarray}
where $\lambda(x)$ is the intensity function of the point process $\bX$.
The empirical false positive rate \eqref{eq:FPhat:cts} is non-random
and can be treated as the ``true'' or ``theoretical'' false positive rate
$\FP(t)$. The graph of $\TP(t)$ against $\FP(t)$ is the ``theoretical''
C-ROC curve $R_{Z,\lambda}(p)$ for the point process $\bX$.

For a point process model with fitted intensity $\widehat\lambda(x)$ the ``model-predicted'' C-ROC
curve, $R_{Z,\widehat\lambda}(p)$ is defined similarly, with $\lambda$ replaced by $\widehat\lambda$
in \eqref{eq:TP:true:cts} (and $\FP(t)$ unchanged).

\subsubsection{Case-control data}
\label{S:ROC.casecontrol}
\label{ss:casecontrol}

For spatial case-control data, it is straightforward to define the C-ROC
of a spatial covariate $S$. 
The point patterns of cases and controls are combined into a
single pattern of points numbered $1, \ldots, J$;
the indicator variable $y_j$ equals $1$ if 
point $j$ is a disease case, and $0$ if it is a control;
then the ROC is based on \eqref{eq:TPhat:finite}--\eqref{eq:FPhat:finite},
effectively treating the cases and controls as samples from the
positive and negative populations, respectively.
The true positive rate \eqref{eq:TPhat:finite} is the fraction of
disease cases where $S > t$; the false positive rate \eqref{eq:FPhat:finite}
is the fraction of controls where $S > t$. The ROC curve is the reverse
P--P plot of the distributions of $S$ in the cases and the controls.
The case-control ROC will be denoted
$R_{S, \bx, \by}(p)$ where
$\bx$ and $\by$ are the point patterns of cases and controls respectively.
The technique can be used to investigate
``relative risk'' in spatial epidemiology \citep{bith91,diggrowl94,hazedavi09}
and ``segregation'' of different types of points \citep{dixo94,diggzhendurr05}.

\subsection{Examples}

Here we compute C-ROC curves for the example datasets
introduced in Section~\ref{S:background:spatial}.

\subsubsection{\textit{Beilschmiedia} trees}

\begin{figure}[!htb]
  \centering
  \centerline{
    \hfill
    \includegraphics*[width=0.3\refwidth,bb=0 0 410 420]{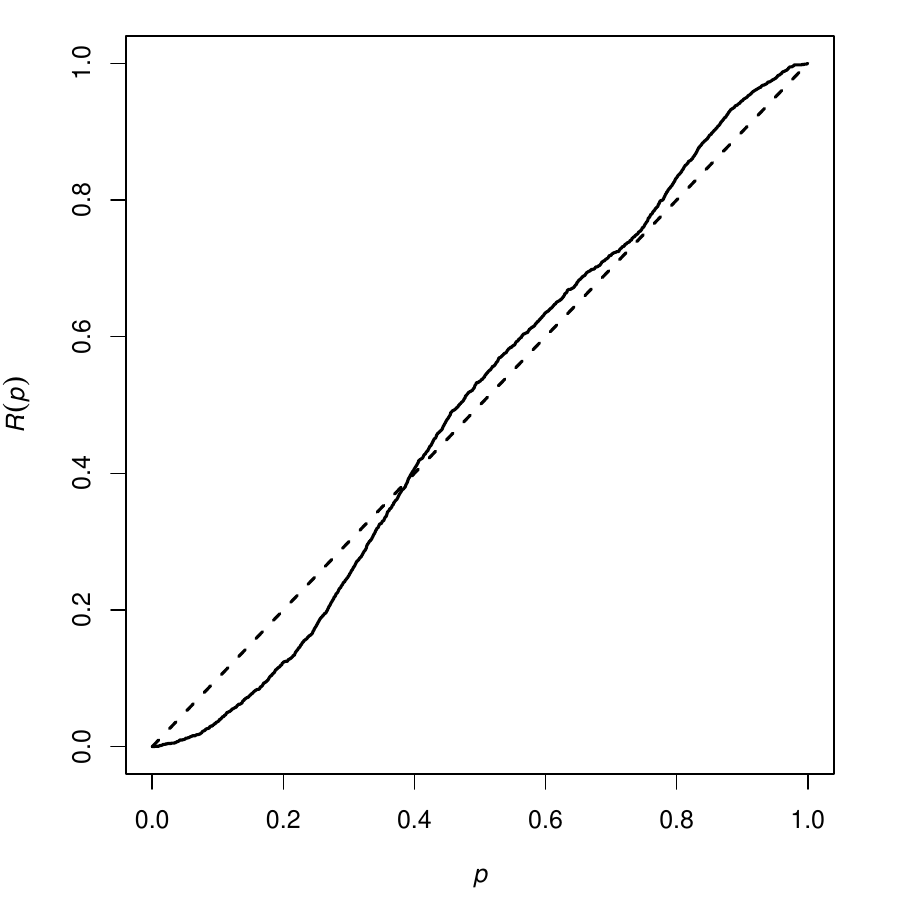}
    \hfill
    \includegraphics*[width=0.3\refwidth,bb=0 0 410 420]{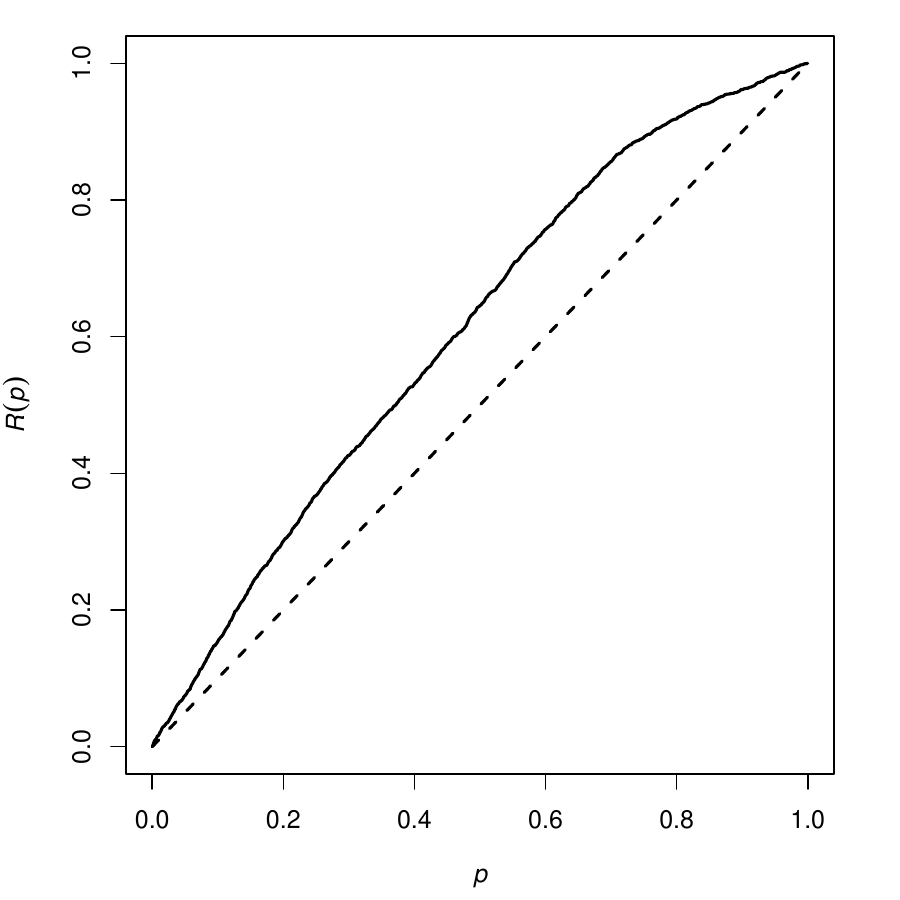}
    \hfill
    \includegraphics*[width=0.3\refwidth,bb=0 0 410 420]{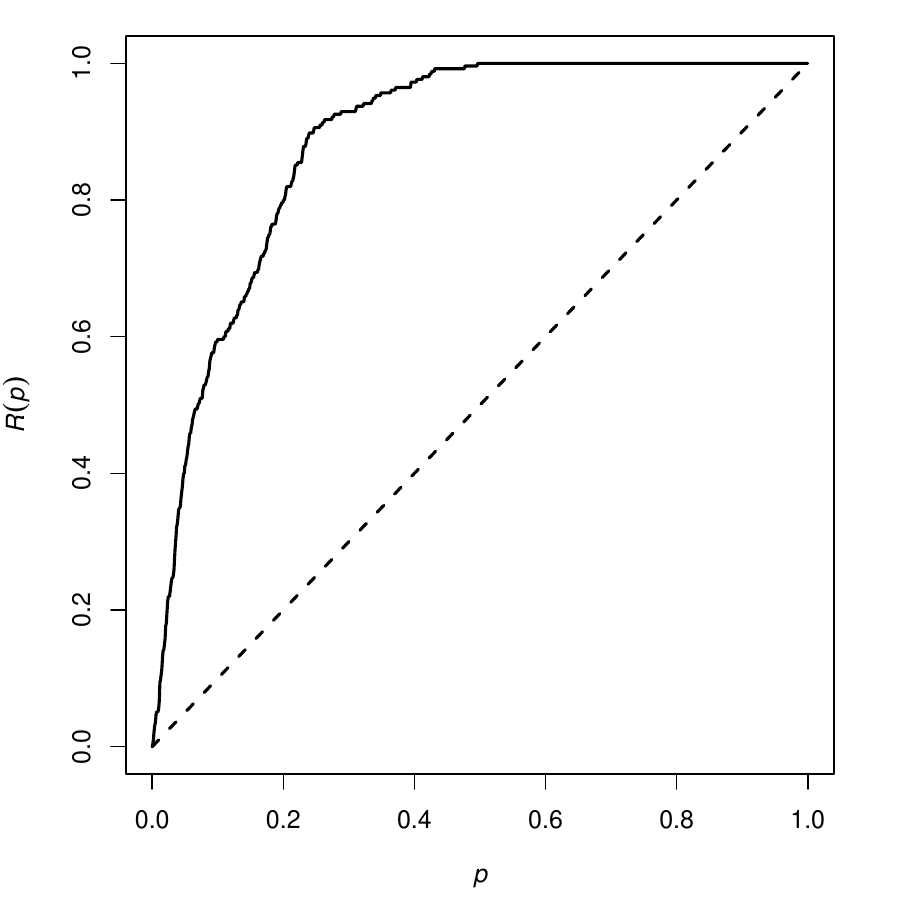}
    \hfill
  }
  \caption{\emph{Left and middle}: Empirical C-ROC curves computed for the \textit{Beilschmiedia} data
    using the presence-absence indicators in 10-metre pixels treating higher values of the
    covariate as favorable to trees where the covariate is either terrain elevation
    (\emph{Left}) or terrain slope (\emph{Middle}). 
    \emph{Right}:
    Empirical C-ROC curve  based on the 
    distance $D(u)$ to the nearest geological fault
    for the Murchison gold data, interpreting short distances
    as more favorable to gold.
  }
  \label{F:bei10ROCmur}
\end{figure}
The left and middle panel of Figure~\ref{F:bei10ROCmur} shows the empirical C-ROC curves
for terrain elevation and terrain slope in the \textit{Beilschmiedia} data
(Figure~\ref{F:bei.extra}),
computed from the presence-absence indicators in
10-metre pixels (right panel of Figure~\ref{F:bei}),
treating higher elevations and steeper slopes as more favorable
to the \textit{Beilschmiedia} trees.
The curve for terrain elevation has an
`S' shape and the computed AUC is 0.51, suggesting a weak
or inconsistent association between elevation and tree abundance.
The curve for terrain slope is a more satisfactory convex curve with
AUC equal to 0.61, suggesting that steeper slopes are
slightly more favorable to trees.
However, using the approach described in Section~\ref{S:rhohat},
  a more accurate explanation is that very flat areas are
  relatively unfavorable, while all other slopes are equally favorable.
  Flat areas include watercourses and swamps which would not be expected to
  support trees. This suggests that there is no evidence of a preference for
  particular terrain slopes.

\subsubsection{Murchison gold data}

The right panel of Figure~\ref{F:bei10ROCmur} shows the empirical C-ROC curve $\lo{\widehat R}_{D, \bx}(p)$ for the
Murchison gold data against distance to the nearest fault,
with short distances treated as favorable to gold. The
curve was computed using the continuous spatial coordinates
according to \eqref{eq:TPhat:cts}--\eqref{eq:FPhat:cts}.
A very similar result is obtained for the discretised presence-absence data.
Effectively, the figure plots the proportion of known gold deposits
within a certain distance from the nearest fault, against the corresponding 
fraction of survey area, for each threshold of distance to the nearest
fault. For example, 10\% of the survey area, $p = 0.1$,
contains about 50\% of known gold deposits, $R(p) = 0.495$.
This curve contains much of the essential information required for
geological prospectivity analysis \citep{badd18iamg}.
The steep initial part of the curve indicates that proximity to the faults
is highly prospective for deposits, in the sense that a large fraction of known
deposits occur in a relatively small fraction of survey area situated within
a relatively short distance from major faults.
However the curve does not show the distance threshold value.

\subsubsection{Gastric mucosa data}

\begin{figure}[!hbt]
  \centering
  \centerline{
    \includegraphics*[width=0.35\refwidth,bb=10 10 390 390]{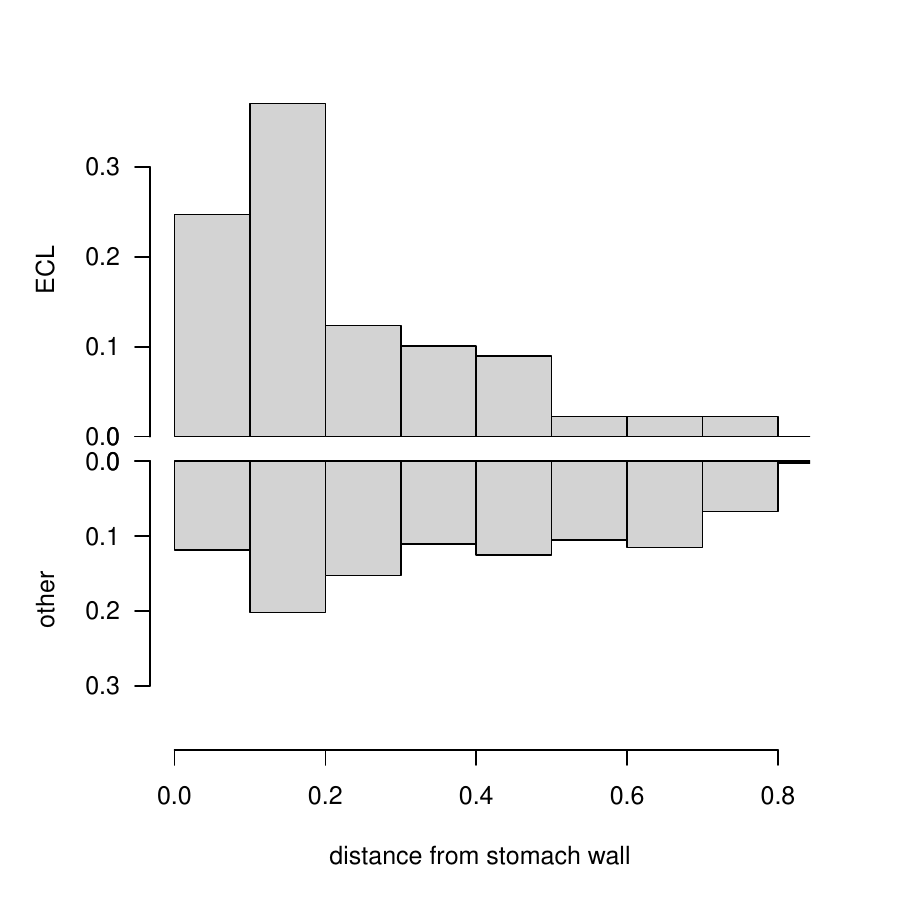}
    \includegraphics*[width=0.35\refwidth,bb=0 0 410 420]{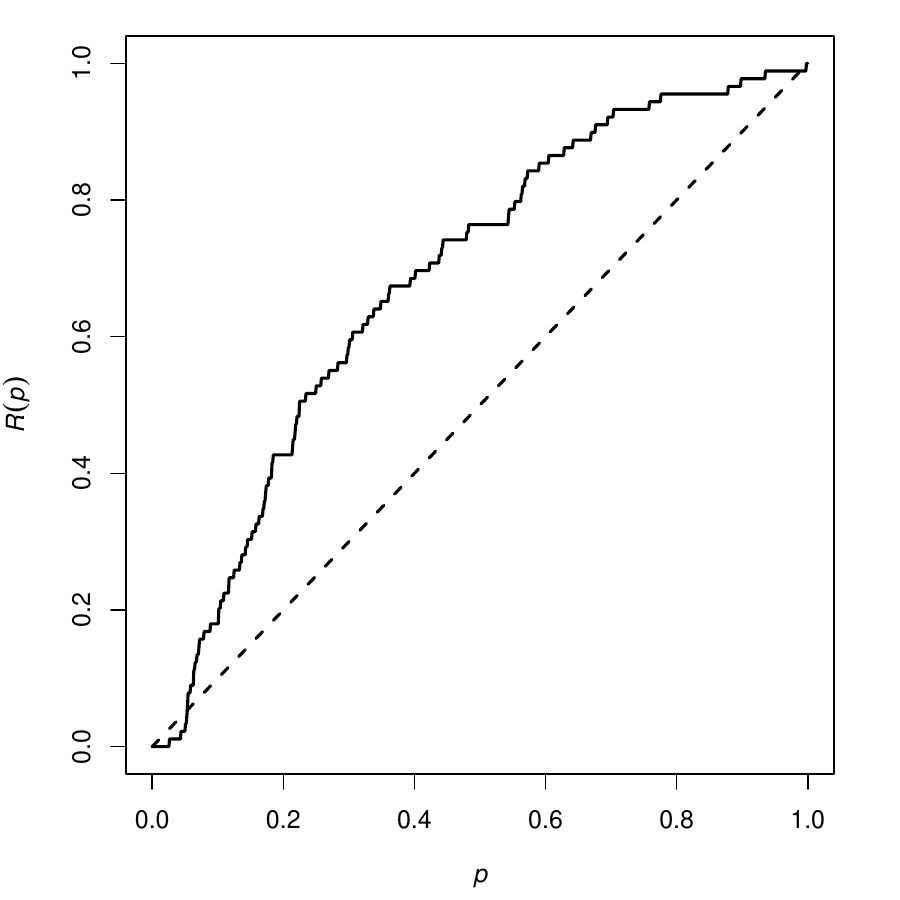}
  }
  \caption{
    Analysis of gastric mucosa data of Figure~\protect{\ref{F:mucosa}}
    using the vertical coordinate $y$
    which represents distance from the stomach wall.
    \emph{Left}:
    normalised histograms of $y$ coordinates of ECL cells (upper histogram)
    and other cells (lower histogram).
    \emph{Right:} C-ROC curve for $y$ coordinate
    for ECL cells relative to other cells,
    treating lower values of $y$ as favorable to ECL cells.
  }
  \label{F:mucosa.roc}
\end{figure}
The right panel of Figure~\ref{F:mucosa.roc} shows the empirical C-ROC curve for
the biological cells example of Figure~\ref{F:mucosa}.
In this example $y_j$ is the class of the $j$'th cell ($y_j=1$ for ECL and $y_j=0$ otherwise), $s_j$ is the vertical
coordinate (distance to the stomach wall) treating short distances as favorable to ECL cells (so
the inequalities in \eqref{eq:TPhat:finite} and \eqref{eq:FPhat:finite} are reversed).
The C-ROC curve lies well above the diagonal which gives strong evidence that
ECL cells have a greater preference for locations near the stomach wall
than do the other cells.
This is also apparent in the left panel of Figure~\ref{F:mucosa.roc} where the distribution of
distances for is much more concentrated at small values for ECL cells than for other cells (more
than 60\% of ECL cells versus
just over 30\% of other cells are within distance 0.2 of stomach wall).

\subsection{Variance}
\label{S:ROC:covariate:variance}

\subsubsection{Mapped point patterns and presence-absence data}

For a mapped point pattern in a study region $W$,
the false positive rate $\FP(t)$ is determined by
the spatial cumulative distribution function of $Z$ over $W$,
and is not subject to random variability. If the point process $\bX$
generating the observed data is a Poisson process (or at least if, conditionally
on the total number of points, the point locations are i.i.d.) then
the conditional variance of $\hat\TP(t)$ for fixed $t$ given the total number
of points $n$ is the binomial variance
\begin{equation}
  \label{e:varTP:mapped}
  \var{\widehat \TP(t) \mid N=n} = \frac 1 n \TP(t) (1 - \TP(t)) .
\end{equation}
Consequently the C-ROC curve $R_{Z,\bX}(p) = \widehat \TP(\FP^{-1}(p))$
has conditional variance
\begin{equation}
  \label{e:varROC:mapped}
  \sigma^2(p) = \var{\widehat R(p) \mid N=n} = \frac 1 n R(p) (1 - R(p)).
\end{equation}
For presence-absence data, if the indicators $y_j$ for different pixels $j$
are independent random variables, then the above
expressions are expected to be the large-sample asymptotic variances, but a formal proof of this is
outside of the scope of this manuscript.

In applications one can use the plug-in estimate
$\widehat\sigma^2(p) = n^{-1}  \widehat R(p) (1 - \widehat R(p))$.

\begin{figure}[!hbt]
  \centering
  \centerline{
    \includegraphics*[width=0.3\refwidth,bb=0 0 410 420]{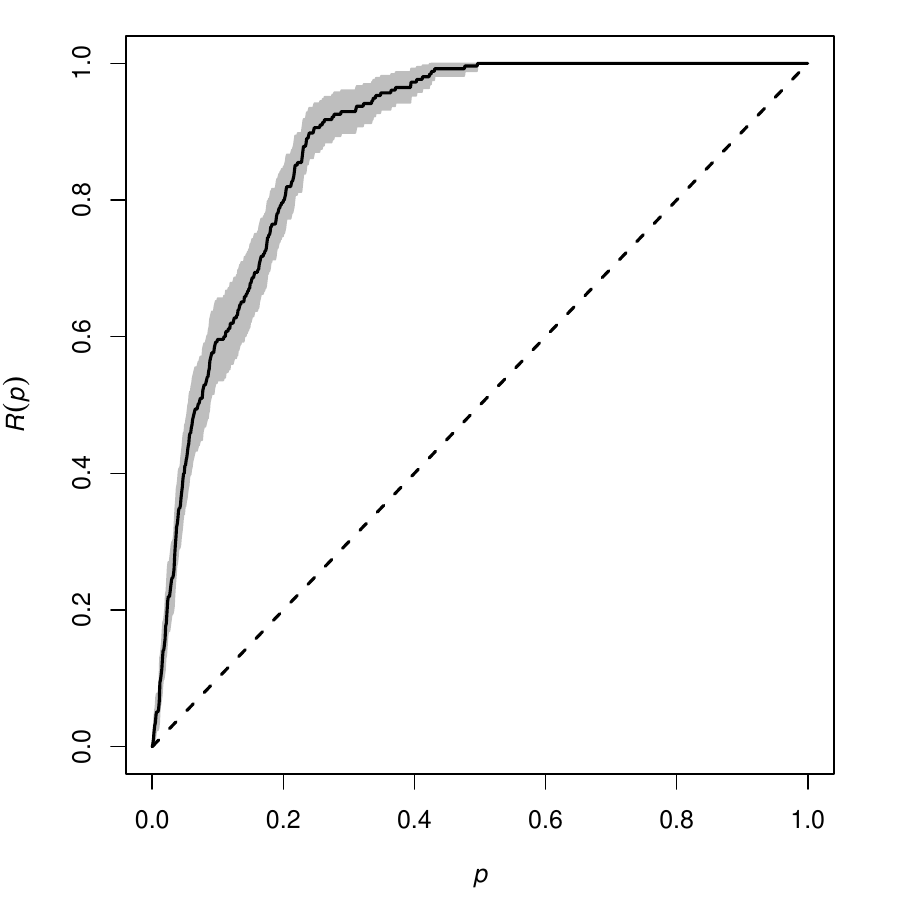}
    \hspace*{4mm}
    \includegraphics*[width=0.3\refwidth,bb=0 0 410 420]{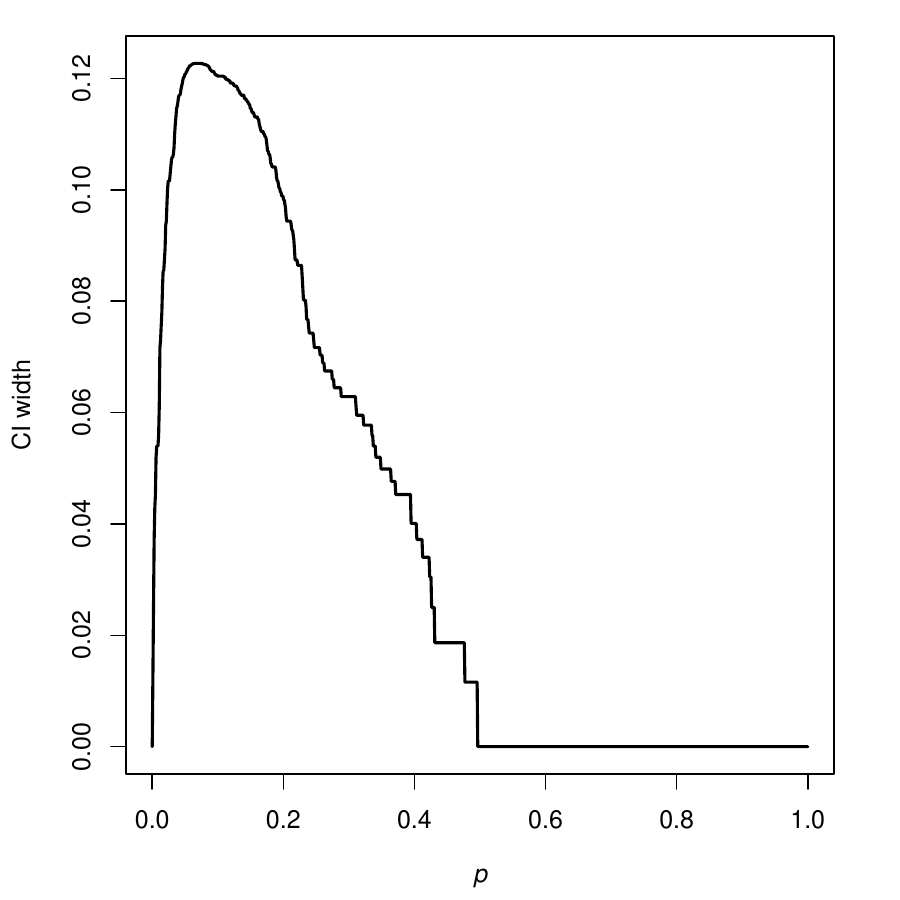}
  }
  \caption{
    \emph{Left:}
    C-ROC curve (calculated by raw method)
    for the Murchison gold data and the distance-to-nearest-fault covariate
    (solid lines) and pointwise approximate $95\%$ confidence bands (grey shading)
    calculated using the plug-in estimate of \eqref{e:varROC:mapped}.
    \emph{Right:} width of confidence bands.
  }
  \label{F:murROCrawCI}
\end{figure}

The left panel of
Figure~\ref{F:murROCrawCI} shows pointwise approximate 95\% confidence bands
for the C-ROC curve for the Murchison gold data using the
distance to nearest fault as the covariate.
Note the optical illusion that the confidence intervals
appear to be extremely narrow; this arises because the C-ROC curve
has a steep slope.
The right panel shows the actual width of the confidence bands
as a function of $p$. 

\subsubsection{Case-control data}
\label{ss:ROC:covariate:casecontrol}

If the cases and controls are mapped spatial point patterns,
assume that they are generated by two independent Poisson point processes.
Then the observed values of $Z$ at the cases and the controls
are i.i.d. samples of size $n$ and $m$ from distributions
with unknown c.d.f's $F$ and $G$ respectively.

Then conditional on the sample sizes and following \cite{hallhyndfan04}
the large-sample asymptotic
variance of the C-ROC curve is 
\begin{equation}
  \label{e:asymvar}
  \sigma^{2}(p) = \var{\widehat{R}(p) \mid N=n, M=m}  = n^{-1} R(p)(1-R(p))
  + m^{-1} \frac{f(G^{-1}(1-p))^{2}}{g(G^{-1}(1-p))^{2}} p(1-p),
\end{equation}
where $f$ and $g$, respectively, are the densities corresponding to $F$ and $G$.

\citet{hallhyndfan04} proposed estimating $f$ and $g$ by kernel smoothing,
and integrating these function estimates to obtain estimates
of $F$ and $G$. This yields a smooth estimate of the C-ROC curve,
and also a plug-in estimate of the variance using \eqref{e:asymvar},
as detailed below.

Let $\kappa$ be the template smoothing kernel,
a probability density function on the
real line, and let $\mathcal{K}$ be the corresponding cumulative distribution
function. Then the smoothed empirical estimators are
\begin{equation}\label{eq:cumest}
  \tilde{F}(t) =
  \frac{1}{n h_1}
  \sum\limits^{n}_{i=1}
  \mathcal{K}\left( \frac{t - Z(x_{i})}{h_{1}} \right),
  \quad\quad 
  \tilde{G}(t) =
  \frac {1}{m h_2}
  \sum\limits^{m}_{i=1} 
  \mathcal{K}\left( \frac{t - Z(x_{n+i})}{h_{2}} \right),
\end{equation} 
where $h_{1}$ and $h_{2}$ are bandwidths.
Methods for bandwidth selection are discussed by
\citet{altmlege95,lloy98}.
The smooth estimate of the C-ROC curve is 
\begin{equation}
  \label{e:roc.smooth}
  \tilde{R}(p) = 1 - \tilde{F}_{Z}(\tilde{G}^{-1}(1-p)).
\end{equation}
To obtain the estimate of $\sigma^2(p)$, the squared gradient term
$f(G^{-1}(1-p))^{2}/g(G^{-1}(1-p))^{2}$ is computed using 
kernel estimates $\tilde f$ and $\tilde g$ with bandwidths $h_f$ and $h_g$ which may be
different from the bandwidths $h_1, h_2$ used above.
The plug-in estimator of  $\sigma^2(p)$ using these smooth estimates is
\begin{equation}
  \label{e:asymvar:est}
  \widetilde\sigma^{2}(p) =
  n^{-1} \widetilde R(p)(1- \widetilde R(p))
  + m^{-1} \frac{\widetilde f(\widetilde G^{-1}(1-p))^{2}}{\widetilde g(\widetilde G^{-1}(1-p))^{2}} p(1-p).
\end{equation}
Asymptotic $(1-\alpha)$-level confidence bands are given by
$\tilde{R}(p) \pm z_{\alpha/2}\widetilde{\sigma}(p)$, where $z_{\alpha}$ is
the upper $1-\alpha$ point of the standard normal distribution. The
coverage of the confidence bands crucially depends on the bandwidths
used in calculating $\widetilde{R}$ and $\widetilde{\sigma}^2$. The optimal
bandwidth for $\widetilde{R}(p)$ is described in \citet{hallhynd03}, while
\citet{hallhyndfan04} make detailed recommendations on bandwidth
selection for computing $\widetilde{\sigma}^2$. 

\begin{figure}[!hb]
  \centering
  \centerline{
    \includegraphics*[width=0.4\refwidth,bb=0 0 410 420]{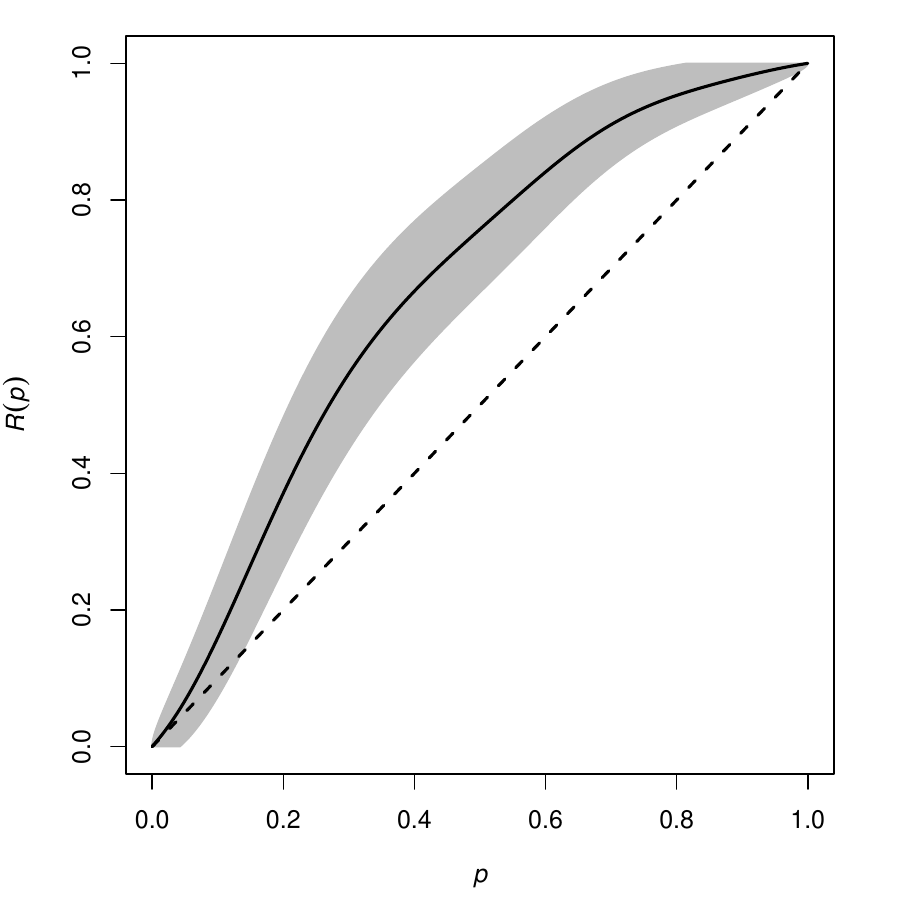}
  }
  \caption{
    Kernel smoothing estimate (solid line) and
    pointwise $95\%$ confidence bands (grey shading)
    calculated using \eqref{e:asymvar:est},
    for the C-ROC curve
    for the mucosa case-control data against vertical coordinate.
  }
  \label{F:mucosaROCsmoothCI}
\end{figure}

Figure~\ref{F:mucosaROCsmoothCI} shows a smoothed estimate of the C-ROC curve
for the mucosa data (treating ECL cells as cases, and other cells as controls)
using the $y$ coordinate as the covariate. Grey shading shows
pointwise $95\%$ confidence bands for the true C-ROC curve
based on the estimate of asymptotic variance \eqref{e:asymvar:est} and truncated to $[0,1]$.
The bandwidths $h_1, h_2$ were determined by 
Silverman's rule of thumb \citep[eq.\ (3.31), p.\ 48]{silv86}
applied separately to the $y$-coordinates of the ECL cells and the other cells,
respectively.

\subsection{Interpretation of C-ROC and C-AUC}

This section discusses the interpretation of the ROC and AUC
for spatial point pattern data based on a spatial covariate $Z$,
as defined above. Similar comments apply to the cases of
presence-absence and spatial case-control data.

\subsubsection{Invariance under rescaling and transformation}
\label{S:monotone}

\begin{lemma}
  \label{L:monotone}
  Let $Z$ be a spatial covariate on a domain $W$,
  and $h:\R\to\R$ a strictly increasing transformation.
  Define the covariate $Z^\dag$ by $Z^\dag(u) = h(Z(u))$ for $u \in W$.
  Then the covariates $Z$ and $Z^\dag$ have the same C-ROC curve, that is
  $R_{Z,\bx}(p) \equiv R_{Z^\dag, \bx}(p)$.
\end{lemma}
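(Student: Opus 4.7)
The plan is to exploit the fact that a strictly increasing transformation $h$ of the covariate induces the \emph{same} change of parameter in the two functions $\TPhat$ and $\FPhat$ that together determine the C-ROC curve, so the parametric plot (and hence the ROC curve as defined in \eqref{eq:roc}) is unchanged.

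First, I would establish the identity
\[
\indicate{Z^{\dag}(u) > t} = \indicate{Z(u) > s(t)}
\qquad\text{for all } u \in W,
\]
where $s(t) = \sup\{z : h(z) \le t\}$. Because $h$ is strictly increasing, the preimage $h^{-1}((t,\infty))$ is an upper half-line, so such an $s(t)$ exists and is itself non-decreasing; moreover $s(h(z)) = z$ on the range of $h$. Substituting into \eqref{eq:TPhat:cts}--\eqref{eq:FPhat:cts} (and analogously into \eqref{eq:TPFPhat:covar}--\eqref{eq:FPalt:finite} for the presence-absence case) gives
\[
\TPhat^{\dag}(t) = \TPhat(s(t)), \qquad \FPhat^{\dag}(t) = \FPhat(s(t)).
\]

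Second, I would use this joint reparametrization to compute the left-continuous inverse: conjugating a non-increasing function by a strictly increasing map conjugates its generalized inverse by $h$, so $(\FPhat^{\dag})^{-1}(p) = h(\FPhat^{-1}(p))$. Plugging this into \eqref{eq:roc} yields
\[
R_{Z^{\dag},\bx}(p) \;=\; \TPhat^{\dag}\!\bigl((\FPhat^{\dag})^{-1}(p)\bigr)
\;=\; \TPhat\bigl(s(h(\FPhat^{-1}(p)))\bigr)
\;=\; \TPhat(\FPhat^{-1}(p)) \;=\; R_{Z,\bx}(p),
\]
using $s \circ h = \mathrm{id}$ in the penultimate step.

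The main obstacle is the careful handling of the generalized inverse and the jumps of $h$: if $h$ has a discontinuity then a range of thresholds $t$ is ``unused'' and produces constant stretches of $\TPhat^{\dag}$ and $\FPhat^{\dag}$ with no counterpart on the original $t$-axis; conversely, at an atom of $h$ or of the spatial distribution of $Z$ the identification $s(h(z))=z$ needs to be understood in the left-continuous sense. Both are invisible at the level of the ROC curve, because that curve depends only on the image of the parametric map $t \mapsto (\FPhat(t), \TPhat(t))$ and this image is preserved by the order-preserving reparametrization $t \leftrightarrow s(t)$. A clean way to sidestep these technicalities is to phrase the argument via the P--P plot representation \eqref{eq:rocF}: the c.d.f.'s $\Fpos$ and $\Fneg$ transform under $h$ in the classical way $F_{h(X)}(t)=F_{X}(s(t))$, and the composition $\Fpos\circ\Fneg^{-1}$ in \eqref{eq:rocF} is manifestly invariant under joint reparametrization of its argument.
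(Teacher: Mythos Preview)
Your proof is correct and follows essentially the same route as the paper: both arguments rest on the single observation that $\{Z^{\dag}(u) > t\} = \{Z(u) > s\}$ for a suitable $s$ depending on $t$, so that $\TPhat^{\dag}$ and $\FPhat^{\dag}$ are joint reparametrizations of $\TPhat$ and $\FPhat$ and the parametric curve is unchanged. The paper's proof is a two-line version that simply writes $s = h^{-1}(t)$ and concludes; your version is more careful in defining $s(t) = \sup\{z : h(z) \le t\}$ to cover the case where $h$ is not surjective, and in tracking the generalized inverse through the formal definition \eqref{eq:roc}, which the paper leaves implicit.
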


\begin{proof}
  The ROC curve for $Z^\dag$ is a plot of $\TP^\dag(t)$ against
  $\FP^\dag(t)$ for all $t \in \R$, where
  from \eqref{eq:TPhat:cts} and \eqref{eq:FPhat:cts}
  \[
    \TP^\dag(t) = \frac 1 n \sum_{i=1}^n \indicate{Z^\dag(x_i) > t},
    \quad\quad
    \FP^\dag(t) = \frac 1 {|W|} \int_W \indicate{Z^\dag(u) > t} \dee u.
  \]
  For any $t\in\R$ and $u \in W$ we have
  $
  Z^\dag(u) > t \mbox{  iff  } Z(u) > h^{-1}(t) = s, %
  $
  say,   
  because $h$ is 1--1. Hence $\TP^\dag(t) = \TP(s)$ and $\FP^\dag(t) = \FP(s)$.
  Hence the ROC curves
  $R_{Z,\bx}(p)$ and $R_{Z^\dag, \bx}(p)$
  are identical.
\end{proof}

In particular Lemma~\ref{L:monotone}
applies to numerical rescaling of the covariate:
$R_{Z,\bx}(p) = R_{aZ+b, \bx}(p)$ for any
scale factor $a > 0$ and any constant $b$.

\begin{lemma}
  \label{L:areapreserving}
  The C-ROC is invariant under rescaling of the spatial coordinates,
  and under any area-preserving transformation of the spatial domain.
\end{lemma}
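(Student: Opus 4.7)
The plan is to show that both $\TP(t)$ and $\FP(t)$ from \eqref{eq:TPhat:cts}--\eqref{eq:FPhat:cts} are invariant under the two classes of transformations, which immediately implies that the C-ROC, being the parametric plot of $\TP$ against $\FP$, is invariant as well. Let $\phi : W \to W'$ be a bijection (either a rescaling or an area-preserving map), define the transformed data $x_i' = \phi(x_i)$, the transformed window $W' = \phi(W)$, and the transformed covariate $Z'(u') = Z(\phi^{-1}(u'))$ on $W'$. I will denote the corresponding rates by $\TP'(t)$ and $\FP'(t)$.

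For the numerator, $\TP'(t) = n^{-1} \sum_i \indicate{Z'(x_i') > t} = n^{-1} \sum_i \indicate{Z(x_i) > t} = \TP(t)$, because $Z'(\phi(x_i)) = Z(x_i)$ by construction. So the only substantive step is the false-positive rate. First I apply the change-of-variables formula to \eqref{eq:FPhat:cts} with $u' = \phi(u)$, obtaining
\[
  \FP'(t) = \frac{1}{|W'|}\int_W \indicate{Z(u) > t}\, |J_\phi(u)|\dee u,
\]
where $|J_\phi(u)|$ is the absolute Jacobian determinant of $\phi$.

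For an area-preserving transformation, by definition $|J_\phi(u)| \equiv 1$ and $|W'| = |W|$, so $\FP'(t) = \FP(t)$ directly. For a rescaling $\phi(u) = cu$ with $c > 0$ on $\R^d$, the Jacobian determinant is the constant $c^d$ and $|W'| = c^d |W|$, so the factor $|J_\phi(u)|/|W'|$ reduces to $1/|W|$ and again $\FP'(t) = \FP(t)$. Combined with the invariance of $\TP$, this shows $R_{Z,\bx}(p) \equiv R_{Z', \bx'}(p)$ in both cases. The argument for presence-absence data is analogous: under an area-preserving transformation the pixel partition can be transported cell-by-cell with preserved areas, so the sums in \eqref{eq:TPFPhat:covar} and \eqref{eq:FPalt:finite} are unchanged; under rescaling, pixel areas change by a common factor $c^d$ which cancels in the normalised ratios.

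The only mildly subtle point—and the one I would be most careful about—is the measure-theoretic hygiene in the change-of-variables step: one needs $\phi$ to be (at least locally) a $C^1$ diffeomorphism for the Jacobian formula, and if instead one only assumes $\phi$ is a measure-preserving bijection in a weaker sense, the argument should be rephrased as: for every Borel set $A \subset \R$, $|\{u \in W : Z(u) \in A\}| = |\{u' \in W' : Z'(u') \in A\}|$, so that the spatial distributions of $Z$ and $Z'$ coincide and hence yield identical $\FP$. That reformulation avoids any appeal to differentiability and covers area-preserving maps in the broadest sense.
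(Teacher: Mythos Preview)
Your proof is correct and follows essentially the same approach as the paper's. The only noteworthy difference is one of packaging: the paper handles both cases at once by assuming a single bijection $\psi$ with $|\psi(B)| = c\,|B|$ for all measurable $B$ (so $c=1$ covers area-preserving maps and $c=c^d$ covers rescalings), and then argues directly from this set-function property rather than via the Jacobian; your final paragraph's measure-theoretic reformulation is exactly that argument, so you have already anticipated the cleaner route.
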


\begin{proof}
  More generally, let $\psi: W \to W^\ast$ be a 1--1 transformation such that
  $|\psi(B)| = c \; |B|$ for any measurable subset $B \subseteq W$,
  where $c$ is a constant. Here $\psi(B)$ denotes the
  image of $B$ under the transformation,
  $\psi(B) = \{ \psi(x): \, x \in B\}$.
  
  The data points $x_i$ in $W$ are transformed to
  data points $y_i = \psi(x_i)$ in $W^\ast$, and
  the covariate $Z$ on $W$ is transformed to $Z^\ast$ on $W^\ast$
  defined by
  \[
    Z^\ast(v) = Z(\psi^{-1}(v)), \quad\quad v \in W^\ast.
  \]
  Consider the C-ROC curve based on 
  \[
    \TP^\ast(t) = \frac 1 n \sum_{i=1}^n \indicate{Z^\ast(y_i) > t},
    \quad\quad
    \FP^\ast(t) = \frac 1 {|W^\ast|} \int_{W^\ast} \indicate{Z^\ast(v) > t} \dee v.
  \]
  We have $Z^\ast(y_i) = Z(\psi^{-1}(y_i)) = Z(x_i)$ so that $\TP^\ast \equiv \TP$.
  Since $\psi$ is area-preserving, $\FP^\ast \equiv \FP$. Hence
  the C-ROC curves are identical.
\end{proof}

Note that the C-ROC is \emph{not} invariant
under a geographic projection (change of spatial coordinates)
unless it is area-preserving.
In the discrete case, calculating the C-ROC using a grid of square pixels
in latitude-longitude coordinate space would give greater weight to
locations at higher latitudes; the calculation should be performed using
a square grid in a conformal projection.

\subsubsection{Ranking ability}
\label{S:ROCcovar:interpret}

ROC and AUC are often claimed to be measures of
``predictive power'' \citep{lobojimereal07,aust07,fran09}.
However, it is clear that the ROC and AUC based on a covariate $Z$
do not evaluate predictive ability in the conventional sense,
because there is no predictive model in this context. A predictive model
would give predictions of the presence probabilities at each pixel;
a measure of predictive ability would indicate how well
the observed pattern of presences and absences
followed these predicted probabilities.

Rather, the C-ROC and C-AUC based on a covariate $Z$
are measures of the ``ranking ability'' of $Z$,
which is the extent to which presence and absence
pixels in the study region would be segregated from each other, if pixels
were sorted by increasing order of $Z$.
Each point on the ROC curve represents a subdivision of the study region
into two subregions of low and high values of $Z$,
and measures the extent to which this subdivision separates space
into regions of low and high density of points.

Since C-ROC and C-AUC are invariant under monotone transformations of the
covariate, we could say that C-ROC and C-AUC measure the ``ranking power''
of \textbf{any} model $\pi_j = \rho(z_j)$ where $\rho(z)$ is
a monotone increasing function of $z$.
When there is such an underlying model the ROC curve in the right panel of Figure~\ref{F:bei10ROCmur} is also known as the
capture-efficiency curve as mentioned in Section~\ref{S:ROC:model:defn:pixel}.

In practice, the highest density of points could occur at
locations where the covariate $Z$ takes an intermediate value
rather than a low or high value. For example, trees may be most abundant
in locations which have moderate amounts of water, being neither arid nor
swampy. The covariate would then have high value as a predictor in an
appropriate model, but would not have high ``ranking ability''
as measured by C-ROC and C-AUC.
In such a case a more appropriate approach would be to fit a model for the occurrence of points and
then use the model ROC described in Section~\ref{S:ROC:model}.

\subsubsection{Statistical interpretations}

The C-ROC curve is nonparametric in nature; it
does not assume any parametric form of dependence of the point pattern
on the spatial covariate. The only implicit assumption is that larger values of
the covariate are more favorable to the presence of points.

By construction, the C-ROC curve is insensitive to extremes,
and to small sub-populations.
In \ifelseArXiV{Appendix~\ref{SUPP:chorley}}{the supplementary material (section S4)}
we show an example,
using the famous Chorley-Ribble cancer data,
in which the ROC and AUC do not suggest any deviation from the diagonal line,
but the effect of the covariate is statistically significant according to
the likelihood ratio test. This can occur when the effect of the covariate
only applies to a small fraction of the population.

As noted above, the C-ROC curve is a comparison of two distributions,
namely, the distribution of the value of the covariate at a typical point
of the pattern, and the distribution of the value of the covariate at a
uniformly-randomly-selected point in the window.

\begin{lemma}
  \label{L:CDF}
  For a spatial point pattern $\bx = \{ x_1,\ldots,x_n\}$ in a study region $W$,
  and a spatial covariate $Z(u)$ defined for all locations $u \in W$,
  let $\TPhat$ and $\FPhat$ be defined as in \eqref{eq:TPhat:cts} and \eqref{eq:FPhat:cts}.
  Then
  \begin{enumerate}
  \item 
    $F(t) = 1 -\TPhat(t)$ is the cumulative distribution function of
    $Z(x_I)$ where  $x_I$ is a uniformly-randomly-selected data point (i.e.\ where $I$
    is a uniformly random integer between 1 and $n$).
  \item 
    $G(t) = 1 -\FPhat(t)$ is the cumulative distribution function of
    $Z(U)$ where $U$ is a uniformly-randomly-selected location in the window.
  \item
    the C-ROC curve is a reversed P--P plot comparing the distribution of
    $Z(x_I)$ to the distribution of $Z(U)$.
  \end{enumerate}
Similarly, given presence indicators $y_j$ and covariate values $z_j$ for $j = 1,\ldots,J$,
let $\TPhat$ and $\FPalt$ be defined as in \eqref{eq:TPFPhat:covar} and \eqref{eq:FPalt:finite}.
Then 
\begin{enumerate}
  \setcounter{enumi}{3}
  \item 
    $F(t) = 1 -\TPhat(t)$ is the cumulative distribution function of
    the covariate value at a randomly selected `presence' pixel, that is,
    the cdf of $z_I$ where $I$ is a random integer selected with uniform probability
    from the set $\{j: y_j = 1\}$;
  \item 
    $G(t) = 1 -\FPalt(t)$ is the cumulative distribution function of
    the covariate value at a randomly selected pixel, that is, the cdf of
    $z_K$ where $K$ is a random integer uniformly selected from 1 to $J$;
  \item
    the C-ROC curve is a reversed P--P plot comparing the distribution of
    $z_I$ to the distribution of $z_K$.
  \end{enumerate}
\end{lemma}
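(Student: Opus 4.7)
The proof is essentially a matter of rewriting the defining sums and integrals so that the empirical CDF interpretation is transparent, and then invoking the general ROC/P--P correspondence from equations \eqref{eq:rocF}--\eqref{eq:roc-F}. The plan is to dispose of parts 1 and 2 by direct computation, then get part 3 as a corollary, and finally observe that parts 4--6 are the discrete analogues with the same argument.

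For part 1, I would start from the definition \eqref{eq:TPhat:cts} and simply complement the indicator to get
\[
  1 - \TPhat(t) \;=\; \frac 1 n \sum_{i=1}^n \indicate{Z(x_i) \le t}.
\]
If $I$ is uniform on $\{1,\ldots,n\}$, then conditioning on the value of $I$ gives $\Prob{Z(x_I) \le t} = n^{-1} \sum_{i=1}^n \indicate{Z(x_i) \le t}$, which is exactly the right-hand side; thus $F(t) = 1 - \TPhat(t)$ is the c.d.f.\ of $Z(x_I)$. For part 2, I would do the analogous complement in \eqref{eq:FPhat:cts}:
\[
  1 - \FPhat(t) \;=\; \frac{1}{|W|} \int_W \indicate{Z(u) \le t} \dee u,
\]
and recognise that the uniform distribution on $W$ has density $|W|^{-1}$, so $\Prob{Z(U) \le t}$ equals this integral. (One should note that $\TPhat$ and $\FPhat$ are right-continuous in $t$ from the strict inequality $Z > t$, and their complements are therefore c.d.f.'s in the usual right-continuous convention.)

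Part 3 then needs no new content. Substituting $F = 1-\TPhat$ and $G = 1-\FPhat$ into the ROC plot $(\FPhat(t), \TPhat(t))$ gives the plot of $(1-G(t), 1-F(t))$, which by \eqref{eq:roc-F} is the reverse of the P--P plot $(G(t), F(t))$ comparing the distribution of $Z(x_I)$ with the distribution of $Z(U)$; so the C-ROC \emph{is} this reversed P--P plot.

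Parts 4--6 proceed identically with sums over pixels. For part 4, from \eqref{eq:TPFPhat:covar}
\[
  1 - \TPhat(t) \;=\; \frac{\#\{j : y_j=1, \; z_j \le t\}}{\#\{j : y_j=1\}},
\]
which is the empirical c.d.f.\ of the multiset $\{z_j : y_j=1\}$, hence the c.d.f.\ of $z_I$ for $I$ uniform on $\{j : y_j=1\}$. For part 5, from \eqref{eq:FPalt:finite},
\[
  1 - \FPalt(t) \;=\; \frac 1 J \sum_{j=1}^J \indicate{z_j \le t},
\]
the empirical c.d.f.\ of all the $z_j$, i.e.\ the c.d.f.\ of $z_K$ for $K$ uniform on $\{1,\ldots,J\}$. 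Part 6 is then the same substitution into \eqref{eq:roc-F} as in part 3.

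There is no real obstacle here; the result is essentially a dictionary between the ROC definitions and the empirical-CDF language. The only care required is to keep straight the direction of the inequalities (so that complements yield c.d.f.'s rather than survival functions) and to use \eqref{eq:roc-F} rather than \eqref{eq:rocF} for the ``reversed'' P--P plot identification.
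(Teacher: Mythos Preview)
Your proposal is correct and mirrors the paper's own proof almost exactly: both complement the indicators in \eqref{eq:TPhat:cts}--\eqref{eq:FPhat:cts} to recognise the empirical c.d.f.'s, compute $\Prob{Z(x_I)\le t}$ and $\Prob{Z(U)\le t}$ by conditioning on $I$ and using the uniform density on $W$, declare part 3 a trivial consequence, and dispatch parts 4--6 by the same argument with sums over pixels. Your added remarks on right-continuity and the explicit appeal to \eqref{eq:roc-F} are a slight elaboration but change nothing of substance.
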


\begin{proof}
  Let $I$ be a random integer between $1$ and $n$, with equal probability
  for each possible outcome. Then the cumulative distribution function
  of $Z(x_I)$ is
  \[
    F(t) = \Prob{Z(x_I) \le t} 
    = \EE \indicate{Z(x_I) \le t} 
    = \sum_{i=1}^n \Prob{I=i} \indicate{Z(x_i) \le t} 
    = \frac 1 n \sum_{i=1}^n \indicate{Z(x_i) \le t},
  \]
  and statement 1 follows.
  For statement 2, a uniformly random point $U$ in $W$ has
  probability density $f(u) = \indicate{u \in W}/|W|$ in $\R^2$ so that
  the cdf of $Z(U)$ is
  \[
    G(t) = \Prob{Z(U) \le t} 
    = \EE \indicate{Z(U) \le t} 
    = \int_{\R^2} f(u) \indicate{Z(u) \le t} \dee u 
    = \frac 1 {|W|} \int_W \indicate{Z(u) \le t} \dee u,
  \]
  and statement 2 follows.
  Statement 3 is a trivial consequence.
  Statements 4--6 are proved in the same way.
\end{proof}

\begin{lemma}
$\AUC$ can be interpreted according to \eqref{eq:AUC=prob}
as the probability that a typical point of the point pattern has
a higher value of the covariate than a
uniformly-randomly-selected location in the window.
That is, $\AUC = \Prob{Z(x_I) > Z(U)}$,
where $x_I$ is a randomly-selected data point
($I = 1,2,\ldots,n$ with equal probability) and
$U$ is a random point uniformly distributed in the window $W$.
\end{lemma}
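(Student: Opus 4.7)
The plan is to reduce the statement to the general identity $\AUC = \Prob{\Spos > \Sneg}$ already established in equation~\eqref{eq:AUC=prob}, by identifying the distributions of $\Spos$ and $\Sneg$ with those of $Z(x_I)$ and $Z(U)$ respectively via Lemma~\ref{L:CDF}.

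First, I would invoke Lemma~\ref{L:CDF}, parts~1 and~2: for the C-ROC curve based on a spatial covariate $Z$ and a point pattern $\bx$, the function $1-\TP(t)$ is exactly the cumulative distribution function of the random variable $Z(x_I)$ (where $I$ is uniform on $\{1,\dots,n\}$), and $1-\FP(t)$ is the cumulative distribution function of $Z(U)$ (where $U$ is uniform on $W$). Comparing with the notation of Section~\ref{S:background:ROC:P-P}, where $\Fpos(t)=1-\TP(t)$ is the c.d.f.\ of $\Spos$ and $\Fneg(t)=1-\FP(t)$ is the c.d.f.\ of $\Sneg$, we may therefore take $\Spos = Z(x_I)$ and $\Sneg = Z(U)$. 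Since $U$ is drawn uniformly on $W$ independently of the data and $I$ is chosen independently of $U$, the two random variables are independent, which is the setting required by \eqref{eq:AUC=prob}. Substituting these identifications directly into \eqref{eq:AUC=prob} yields the claimed formula $\AUC = \Prob{Z(x_I) > Z(U)}$.

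For completeness I would also give a direct verification starting from the definition \eqref{eq:AUC}, which avoids any appeal to continuity assumptions implicit in the probability integral transformation. Writing
\[
\AUC \;=\; \int_{-\infty}^{\infty} \TP(t) \dee \FP(t)
\;=\; \int_{-\infty}^{\infty} \Prob{Z(x_I) > t} \dee\bigl(-G(t)\bigr),
\]
and applying Fubini's theorem together with the independence of $x_I$ and $U$, one rewrites the right-hand side as $\EE\bigl[\Prob{Z(x_I) > Z(U)\mid U}\bigr] = \Prob{Z(x_I) > Z(U)}$. This confirms the identity without needing the distributions of $Z(x_I)$ or $Z(U)$ to be continuous.

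The only mild obstacle is the treatment of ties, i.e.\ the event $\{Z(x_I) = Z(U)\}$. When the covariate $Z$ is continuous as a function on $W$ (and in particular its pushforward distribution under the uniform law on $W$ has no atoms), this event has probability zero, and no adjustment is needed. If atoms are present, the Stieltjes-integral derivation above still delivers the correct value with the convention $\{Z(x_I) > Z(U)\}$ used for the strict inequality in \eqref{eq:TPhat:cts}, so the formula continues to hold as stated.
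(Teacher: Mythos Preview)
Your proposal is correct and your direct verification via the Stieltjes integral $\AUC = \int_{-\infty}^\infty \TP(t)\dee\FP(t) = \int \Prob{Z(x_I)>t}\dee\FP(t) = \Prob{Z(x_I)>Z(U)}$ is essentially the same as the paper's proof, which likewise invokes Lemma~\ref{L:CDF} parts 1--2 and integrates. Your first route, plugging the identifications $\Spos=Z(x_I)$, $\Sneg=Z(U)$ directly into the already-established identity~\eqref{eq:AUC=prob}, is a legitimate and slightly more economical alternative that the paper does not take; your remarks on ties are an additional nicety absent from the paper.
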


\begin{proof}
  Let $R(p)$ be the C-ROC curve, that is, $R(p) = \TP(\FP^{-1}(p))$.
  Then
  \[
    \AUC = \int_0^1 R(p) \dee p 
         = \int_0^1 \TP(\FP^{-1}(p)) \dee p 
         = \int_{-\infty}^\infty \TP(t) \dee \FP(t) .
  \]
  Invoking statements 1 and 2 of Lemma~\ref{L:CDF},
  \[
        \AUC = \int_0^1 \Prob{Z(x_I) > t} \dee \FP(t) 
             = \Prob{Z(x_I) > Z(U)} ,
  \]
  where $I$ and $U$ are as described in Lemma~\ref{L:CDF}.
\end{proof}

The C-ROC curve can also be expressed in terms of the
Probability Integral Transformation.

\begin{lemma}
  The reverse C-ROC curve is the empirical cumulative distribution function
  of the values $v_i = \FP(z_i)$, where $z_i = Z(x_i)$ for $i=1,\ldots,n$.
\end{lemma}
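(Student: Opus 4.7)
The plan is to perform a direct calculation from the definition of the reverse C-ROC, leveraging the probability integral transformation representation established in Section~\ref{S:background:ROC:P-P}.

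First, I would combine the identity $\lo R(p) = 1 - R(1-p)$ with the definition $R(p) = \TP(\FP^{-1}(p))$ to rewrite the reverse C-ROC explicitly. Substituting the empirical $\TPhat$ from \eqref{eq:TPhat:cts} and the (non-random) $\FP$ from \eqref{eq:FPhat:cts} yields
\begin{equation*}
  \lo R(p) \;=\; 1 - \TPhat(\FP^{-1}(1-p)) \;=\; \frac{1}{n} \sum_{i=1}^n \indicate{z_i \le \FP^{-1}(1-p)}.
\end{equation*}

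Second, because $\FP$ is non-increasing in $t$ and continuous under mild regularity on $Z$, the event $\{z_i \le \FP^{-1}(1-p)\}$ is equivalent to $\{\FP(z_i) \ge 1-p\}$, so
\begin{equation*}
  \lo R(p) \;=\; \frac{1}{n} \sum_{i=1}^n \indicate{\FP(z_i) \ge 1-p}.
\end{equation*}
From here the identification with the empirical cumulative distribution function of the values $v_i = \FP(z_i)$ follows directly: the right-hand side is the empirical counterpart of $\Prob{\Fneg(\Spos) \le p}$ appearing in \eqref{eq:Rneg=cdf}, once one substitutes $\Fneg = 1 - \FP$ from Lemma~\ref{L:CDF}(2) and reads the sum as integrating against the empirical distribution of $\Spos$ supported on $\{z_1,\ldots,z_n\}$ via Lemma~\ref{L:CDF}(1). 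In other words, the PIT identity \eqref{eq:Rneg=cdf}, which was derived in full generality, specialises to precisely the claimed empirical c.d.f.\ when the underlying distribution of $\Spos$ is the empirical one.

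The main obstacle is purely technical: $\FP$ may fail to be strictly decreasing or continuous if the covariate $Z$ is constant on regions of positive area, in which case $\FP^{-1}$ is only a pseudo-inverse. Under the standard assumption that $Z(U)$ has a continuous distribution on its support, the indicator equivalence used above is exact. In the general case one would work with the left-continuous inverse $\FP^{-1}(p) = \max\{t: \FP(t) \ge p\}$ and verify that the two sides agree at all but countably many $p$; since both are monotone and (left-)continuous, agreement off a countable set upgrades to equality of functions. Beyond this bookkeeping, the proof is essentially a one-line consequence of the PIT machinery already in place and of Lemma~\ref{L:CDF}.
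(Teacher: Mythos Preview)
Your approach is essentially the same as the paper's: expand the empirical $\TPhat$ as a sum of indicators and use the monotonicity of $\FP$ to rewrite the event in terms of $\FP(z_i)$. The paper does this in one line starting from $R(p)$ rather than from $\lo R(p)$, but the substance is identical.

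However, your final identification step does not land. You correctly arrive at
\[
  \lo R(p) \;=\; \frac{1}{n}\sum_{i=1}^n \indicate{\FP(z_i) \ge 1-p}
  \;=\; \frac{1}{n}\sum_{i=1}^n \indicate{1-\FP(z_i) \le p},
\]
which is the empirical c.d.f.\ of the values $1-\FP(z_i) = \Fneg(z_i)$, \emph{not} of $v_i = \FP(z_i)$ as stated in the lemma. Your own appeal to \eqref{eq:Rneg=cdf} confirms this: that identity yields the c.d.f.\ of $\Fneg(\Spos)$, and $\Fneg = 1-\FP$, not $\FP$. So what you have derived does not match the claimed $v_i$; either the lemma should read $v_i = 1-\FP(z_i)$, or an additional argument is needed that you have not supplied. (The paper's own one-line proof glosses over exactly this point by writing $\indicate{Z(x_i) > \FP^{-1}(p)} = \indicate{\FP(Z(x_i)) > p}$, which reverses the inequality direction since $\FP$ is non-increasing; so the sloppiness is inherited from the statement rather than introduced by you, but your proposal should flag the discrepancy rather than paper over it.)
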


For the proof we notice that
\[
    R(p) = \TP(\FP^{-1}(p)) 
    = \frac 1 n \sum_{i=1}^n \indicate{Z(x_i) > \FP^{-1}(p)} 
    = \frac 1 n \sum_{i=1}^n \indicate{\FP(Z(x_i)) > p} 
    = \frac 1 n \sum_{i=1}^n \indicate{v_i > p} .
\]
Similarly
$
    \lo R(p) = \frac 1 n \sum_{i=1}^n \indicate{v_i \le p} .
$

\subsection{Weaknesses}
\label{S:ROCcovar:weaknesses}

\subsubsection{Dependence on study region}
\label{S:ROCcovar:region}

The ROC based on a covariate has the very important weakness
that it depends on the choice of study region.
(Similar warnings have been issued in the literature \citep{jime12}
about the model ROC discussed in Section~\ref{S:ROC:model} below.)

The horizontal axis of the C-ROC plot is the fraction of area
\emph{in the study region} where the covariate exceeds a particular
threshold. For example in the \textit{Beilschmiedia} data the horizontal
axis is the fraction of area of this particular rectangle, with its
particular topography, where the terrain elevation or terrain slope exceeds
a particular threshold. In the Murchison data, the horizontal axis is
the fraction of area of the mapped rectangle, not the exploration lease, nor
the prospective province. 

This crucial fact is mainly attributable to inhomogeneity.
The C-ROC expresses the ability of the covariate
to segregate \emph{the study region} into
subregions of relatively high and low intensity of points.
This ranking ability can
only be high if there is substantial spatial variation in the
intensity of points within the study region (the point process is
``inhomogeneous''), and if there is spatial variation
in the covariate values within the study region, and furthermore if
high covariate values are associated with high intensity of points
within the study region (or high covariate values are associated with
low intensity of points).

To put it another way, the only case in which the C-ROC for a subregion
is guaranteed to be the same as the C-ROC for the entire study region,
is the case where the covariate has no effect and the C-ROC curve is
the diagonal.

The C-ROC is completely bound to the choice of study region.
Unlike a species distribution model, point process model or other
statistical model, the C-ROC does not represent a scientific ``law''
that can be extrapolated from one study region to another,
or considered to have validity independent of the choice of study region.
Findings based on ROC/AUC cannot be generalised/extended to
other contexts; they are completely bound to the original study region.
In the \textit{Beilschmiedia} data, the C-ROC for the terrain slope covariate
does not express the intrinsic affinity or preference of individual trees
for steep terrain.
In spatial ecology we cannot use ROC analysis to
predict the response of a species to climate change or habitat loss
(or in general, any cases where the covariate values change over time)
\emph{even within the same spatial region}.
In the Murchison data, the C-ROC for distance to nearest
fault does not reveal a set of physical conditions which are always favorable to
the presence of gold deposits. 
In exploration geology we cannot use ROC analysis to identify
the best predictor variables for gold, except in the context of the
original survey region (and not any subregion).

The fact that the C-ROC is bound to a particular study region
only becomes an advantage when the objective is to segregate
this specific region, for example, when it is desired to
identify the most promising locations for mine exploration within
an area that is available for mining, or when it is desired to find
a location for a wind turbine which will minimise harm to wild bird life.

\subsubsection{Restriction to a more homogeneous subset}
\label{S:subset:homogen}

In many real examples, the AUC decreases when we restrict attention
to a sub-region of the original study region.
This typically occurs when the density of points is more homogeneous
in the sub-region.

\begin{figure}[!h]
  \centering
  \centerline{\includegraphics*[width=0.4\refwidth,bb=0 0 410 420]{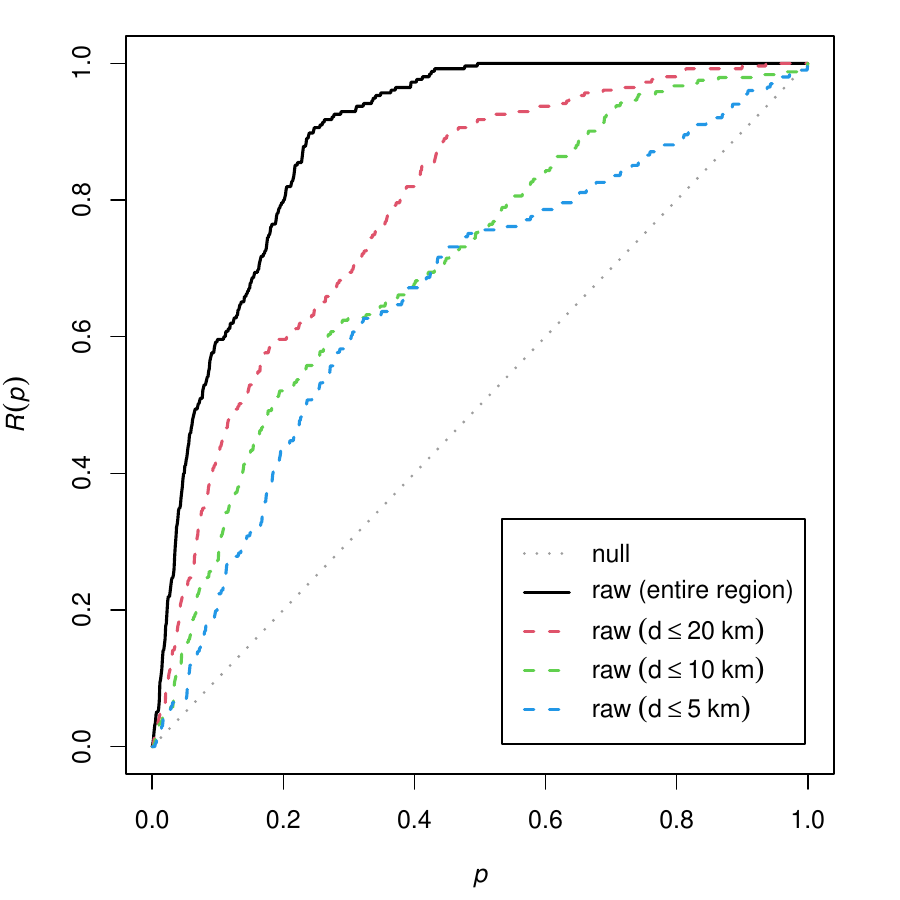}}
  \caption{
    Effect of restriction to a subregion.
    C-ROC curves for the distance-to-nearest fault covariate
    in the Murchison data, computed for the entire study region (solid lines)
    and for regions where the distance to nearest fault is at most 20,
    10 or 5 km (dashed lines). Corresponding AUC values are
    $0.89$, $0.79$, $0.71$ and $0.66$.
  }
  \label{F:murdist:restrict}
\end{figure}

In the Murchison data, the original study region is the
bounding rectangle of the map data. For the distance-to-nearest-fault
covariate, we obtain $\AUC = 0.89$ in this rectangle. The maximum
covariate value (distance to nearest fault) in this rectangle is 128.9 km,
while the maximum covariate value at a gold deposit point is 17.9 km. If we
restrict attention to the region lying at most 20 km from the nearest
fault, the AUC value falls to $0.79$. This value drops further if we consider
regions closer to the faults ($0.71$ for 10 km, $0.66$ for 5 km).
Figure~\ref{F:murdist:restrict} shows
the C-ROC curves obtained by restricting attention to
progressively shorter distances from the faults.

A synthetic example of the same phenomenon is shown in
\ifelseArXiV{Appendix~\ref{SUPP:egeStripExample}}{the supplementary material (section S5)}.

\subsubsection{Simpson's Paradox}

Since the C-ROC and C-AUC are completely bound to the choice of
study region, we can expect paradoxes to arise when the study region
is subdivided. These can be seen as instances of 
Simpson's Paradox \citep{simp51,yule1903}.

Figure~\ref{F:simpson:data} shows a synthetic example. The unit square is
divided into two subregions separated by the dashed lines. The 
intensity of points is uniform in each subregion, but different between
the two subregions. In Figure~\ref{F:simpson:effect} the left panel shows
the C-ROC curve for the $x$ coordinate, calculated for the entire dataset,
with an AUC value of 0.73, indicating that the $x$ coordinate has some
predictive ability to separate the data into high- and low-intensity
regions. However the middle and right panels show the C-ROC curves for the $x$ coordinate
calculated in the two subregions, indicating that within these subregions
the $x$ coordinate has no predictive value at all
(AUC values 0.47 and 0.50).

\begin{figure}[!h]
  \centering
  \centerline{\includegraphics*[width=0.4\refwidth,bb=10 110 420 320]{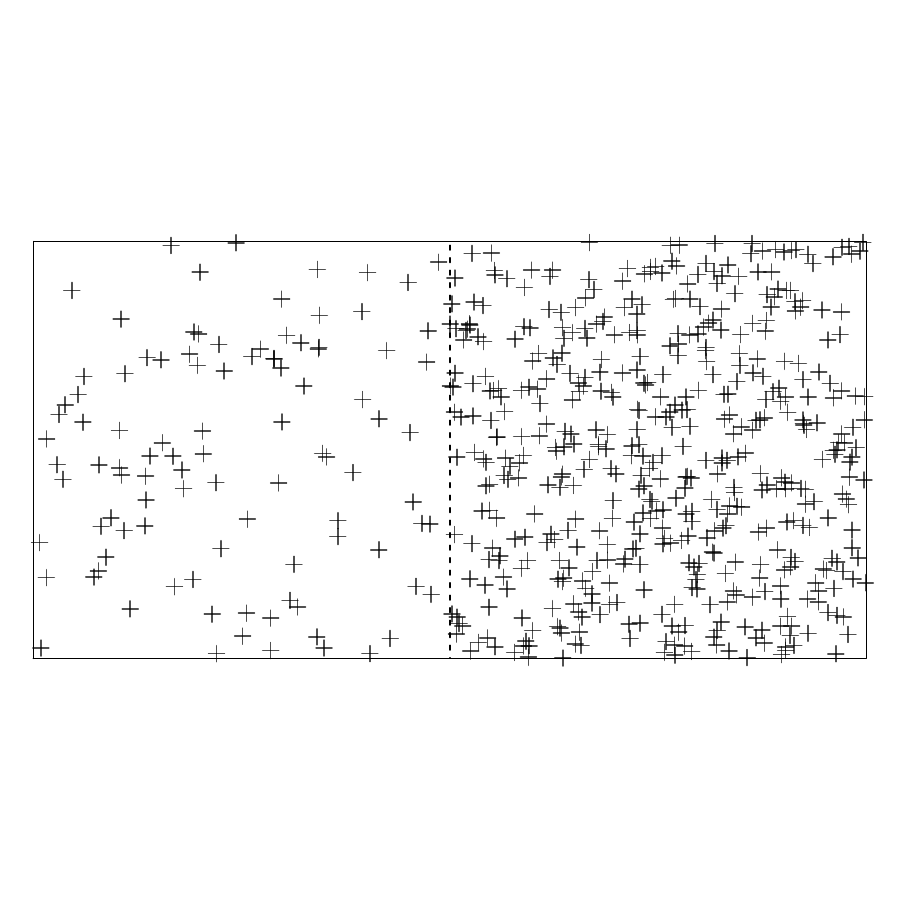}}
  \caption{Synthetic example illustrating Simpson's Paradox for ROC.}
  \label{F:simpson:data}
\end{figure}

\begin{figure}[!h]
  \centering
  \centerline{
    \includegraphics*[width=0.25\refwidth,bb=0 0 410 420]{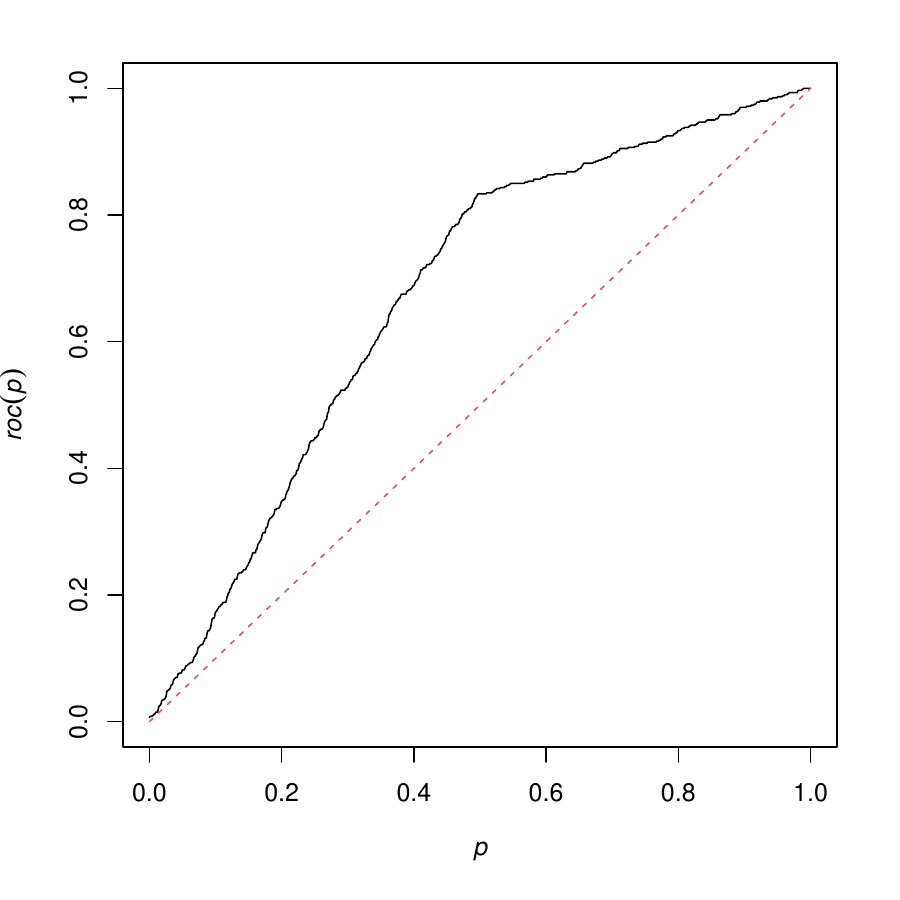}
    \includegraphics*[width=0.25\refwidth,bb=0 0 410 420]{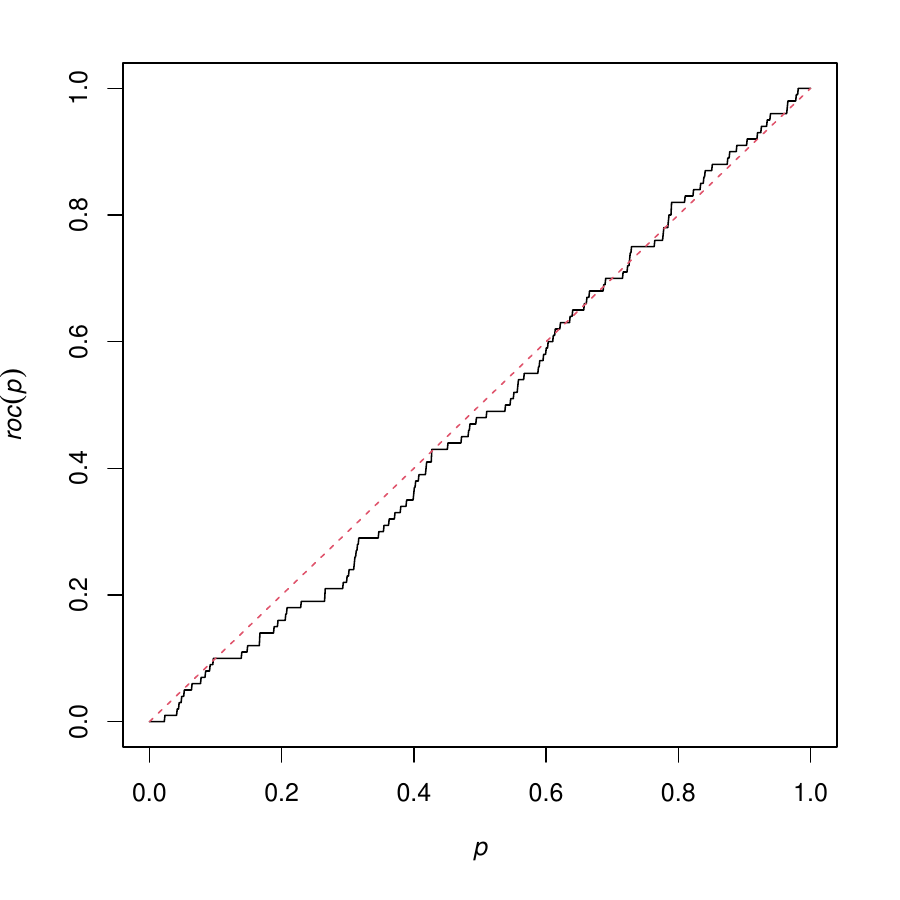}
    \includegraphics*[width=0.25\refwidth,bb=0 0 410 420]{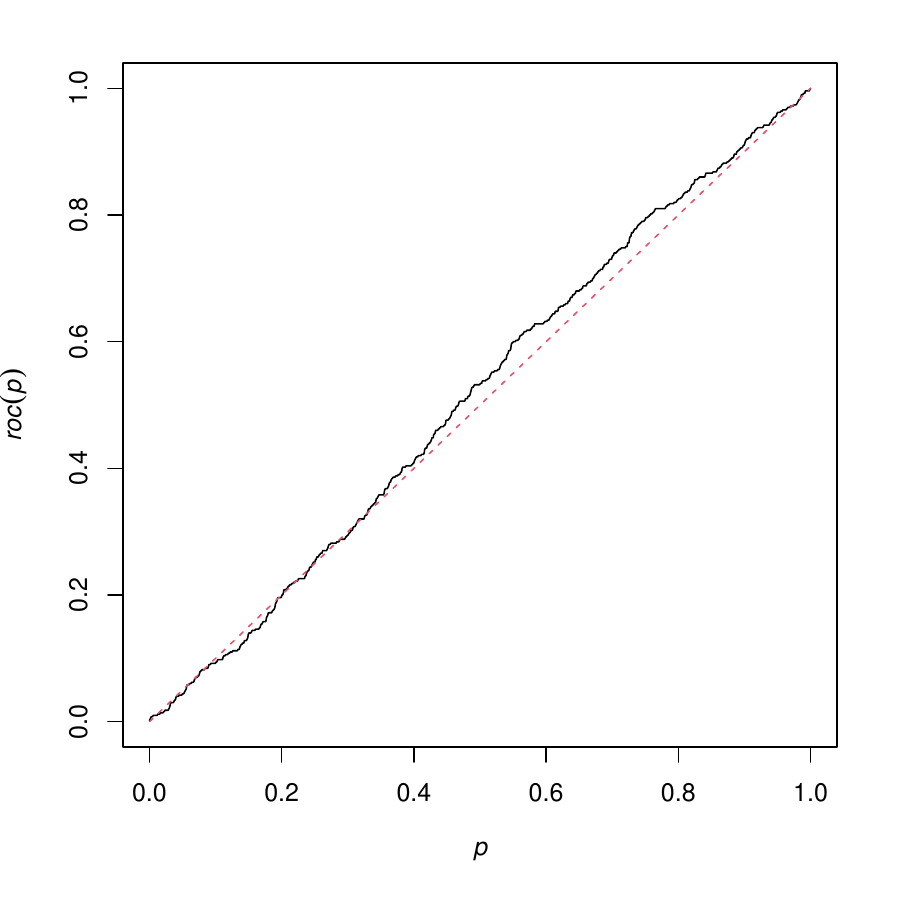}
  }
  \caption{
    Demonstration of Simpson's Paradox
    for the data in Figure~\ref{F:simpson:data}.
    ROC curves for the $x$ coordinate
    calculated for the entire dataset \emph{(Left)},
    for the left half \emph{(Middle)}
    and the right half \emph{(Right)}.
    AUC values are respectively 0.67, 0.48, 0.51.
  }
  \label{F:simpson:effect}
\end{figure}

\ref{app:split_region} describes the connection between the ROC curve for a full study region and
the two ROC curves obtained when splitting the region into two disjoint subsets.

\FloatBarrier

\pagebreak
\section{ROC for a fitted model}
\label{S:ROC:model}

This section examines the particular version of the ROC curve that is most
commonly used in species distribution modelling \citep[p.\ 222]{fran09}
and has been used in mineral prospectivity analysis
\citep{Porwal2010,fabbchun08}.
A predictive model (typically involving several explanatory variables)
is fitted to a presence-absence dataset;
the discriminant variable $S$ is taken to be 
the \emph{fitted presence probability} using the model,
and the ROC curve compares the distributions of $S$
at the ``presence'' and ``absence'' pixels.
We call this the ``model ROC'' (M-ROC).

\subsection{Definition}
\label{S:ROC:model:defn}

\subsubsection{Presence-absence pixel data}
\label{S:ROC:model:defn:pixel}

For presence-absence data (Section~\ref{S:background:spatial:presence})
we assume that a predictive model has been fitted. This may be a
logistic regression, another binary regression, or any model
which specifies the probability of presence $\pi_j$ at each pixel $Q_j$.
Let $\widehat \pi_j$ denote the predicted presence probability at $Q_j$
according to the fitted model. The ``model ROC curve'' (M-ROC) is
conventionally constructed from the estimates
\begin{equation}
  \label{eq:TPFPmodel:finite}
 \TPhat(t) = \frac{
   \sum\nolimits^{J}_{j=1} y_j \indicate{ \widehat \pi_j > t }
}{
  \sum\nolimits^{J}_{j=1} y_j
},
\quad\quad
\FPhat(t) =  \frac{
  \sum\nolimits^{J}_{j=1} (1-y_j) \indicate{ \widehat \pi_j > t }
}{
  \sum\nolimits_{j=1}^{J} (1-y_j)
}.
\end{equation}
Thus $\TPhat(t)$ is the fraction, of those pixels containing data points,
in which the predicted probability of a data point exceeds $t$,
and $\FPhat(t)$ is the fraction, amongst those pixels that do not contain
data points, in which the predicted probability of a data point exceeds $t$.
The M-ROC curve is the reverse P--P plot comparing the distributions
\emph{of the fitted presence probability} $\widehat \pi_j$
over pixels that do and do not contain data points.
If pixels are sorted into increasing order of $\widehat \pi_j$,
then M-ROC and M-AUC measure how well this ordering
segregates ``presence'' from ``absence'' pixels.
In our shorthand, this empirical M-ROC curve based on a fitted model
with presence probability vector $\widehat \bpi$
will be denoted $R_{\widehat \bpi, \by}(p)$.

Geologists use the terms
``fitting rate curve'' \citep{fabbchun08},
``prediction rate curve'' \citep{jcu40323},
``capture efficiency curve'' \citep{Porwal2010}
and ``frequency-area curve'' \citep{fordetal19}
for a plot of the fraction of pixels containing deposits
against the fraction of area
captured by thresholding the fitted probability according to a model.
The area fraction is computed using
both ``deposit'' (presence) and ``barren'' (absence)
pixels; that is, $\FPhat(t)$ in \eqref{eq:TPFPmodel:finite}
is replaced by $\FPalt(t)$ in \eqref{eq:FPalt:finite} with
$z_j$ defined as $\widehat \pi_j$.
\citet{nykaetal15} identify this as a version of the ROC curve,
and point out the practical advantage of using $\FPalt(t)$,
which can be based on a random sample of locations,
instead of $\FPhat(t)$, which requires finding ``true negative'' pixels.

If the true probabilities $\bpi$ were known,
one could define the ``true'' or ``theoretical'' M-ROC curve
$R_{\bpi,\bpi}(p)$ constructed from
\begin{equation}
  \label{eq:TPFPM:finite2}
  \TP(t) = \frac{
             \sum\nolimits^{J}_{j=1} \pi_j \ \indicate{\pi_j > t}
             }{
             \sum\nolimits^{J}_{j=1} \pi_j
             }
 , \quad \quad
  \FP(t) =  \frac{
             \sum\nolimits^{J}_{j=1} (1-\pi_j) \ \indicate{\pi_j > t}
             }{
             \sum\nolimits_{j=1}^{J} (1-\pi_j)
             }.
\end{equation}
Under suitable conditions, $R_{\hat\bpi, \by}(p)$ is a consistent estimate of $R_{\bpi, \bpi} (p)$ if the model is true.

One statistical consideration is the risk of overfitting, since
the fitted presence probability $\widehat \pi_j$ could depend strongly
on the observed response $y_j$, causing $\TPhat(t)$ to be inflated
and $\FPhat(t)$ to be deflated, causing bias in the M-ROC curve.
Overfitting could be a severe problem with nonparametric models, or parametric
models with many degrees of freedom, or small datasets.
A standard remedy is to use the
\emph{leave-one-out} fitted value $\widehat \pi_j^{-j}$ calculated by
fitting the model to all of the data except the data from pixel $j$,
then predicting the probability at pixel $j$ using the covariates in
pixel $j$. Thus $\widehat \pi_j$ in \eqref{eq:TPFPmodel:finite} would be
replaced by $\widehat \pi_j^{-j}$.
In our experience with real data,
using the leave-one-out procedure has a visible but modest effect on
the ROC curve. The change in the AUC value is small,
typically less than 0.01. The AUC may either increase or decrease.
See Figure \ref{F:bei10modelLOO} for an example.

When a model has been fitted, another estimate of the true M-ROC curve
is the \textbf{``model-predicted'' M-ROC curve}, which is the theoretical M-ROC
curve determined by the fitted model. This is based on
\begin{equation}
  \label{eq:TPFP:Mfit:finite}
  \TP(t) = \frac{
             \sum\nolimits^{J}_{j=1} \widehat\pi_j \ \indicate{\widehat\pi_j > t}
             }{
             \sum\nolimits^{J}_{j=1} \widehat\pi_j
             },
             \quad\quad
  \FP(t) =  \frac{
             \sum\nolimits^{J}_{j=1} (1-\widehat\pi_j) \ \indicate{\widehat\pi_j > t}
             }{
             \sum\nolimits_{j=1}^{J} (1-\widehat\pi_j)
             }
\end{equation}
which we shall denote $R_{\hat\bpi,\hat\bpi}(p)$.
Under suitable conditions, the model-predicted M-ROC $R_{\hat\bpi,\hat\bpi}(p)$.
is a pointwise consistent estimator of the true M-ROC $R_{\bpi,\bpi}(p)$
provided the model is true.

\subsubsection{Continuous mapped point pattern data}
\label{S:ROC:model:defn:cts}

For a mapped point pattern $\bx$, instead of a model predicting the
probability of presence in each pixel, we will have a model specifying the
point process intensity $\lambda(u)$ at each spatial location $u$.
The empirical M-ROC will be formed from
\begin{equation}
  \label{eq:TPFPmodel:cts}
  \TPhat(t) = \frac 1 n \sum_{i=1}^n \indicate{\widehat\lambda^{-i}(x_i) > t}
  , \quad\quad
  \FPhat(t) = \frac 1 {|W|}
                \int_W \indicate{\widehat\lambda(u) > t} \dee u
\end{equation}
where $\widehat\lambda^{-i}(x_i)$ denotes the
leave-one-out estimate of intensity at location $x_i$
based on the point pattern data excluding $x_i$.
Here $\TPhat(t)$ is the fraction of data points at which the
fitted intensity value exceeds the threshold, and $\FPhat(t)$ is the fraction
of area in the study region where the fitted intensity function value
exceeds the threshold. 
In our shorthand, the empirical M-ROC curve based on a fitted model
with intensity $\widehat \lambda$ fitted to a point pattern $\bx$
will be denoted $R_{\widehat \lambda, \bx}(p)$.

The theoretical M-ROC will be formed from
\begin{equation}
  \label{eq:TPFPM:cts}
  \TP(t) = \frac{
             \int_W \indicate{\lambda(u) > t} \lambda(u) \dee u
             }{
             \int_W                            \lambda(u) \dee u
             }
, \quad\quad
  \FP(t) = \frac 1 {|W|} \int_W \indicate{\lambda(u) > t} \dee u
\end{equation}
and will be denoted $R_{\lambda,\lambda}(p)$.
When a model has been fitted, the model-predicted M-ROC, denoted $R_{\widehat\lambda,\widehat\lambda}(p)$, is formed similarly, with $\lambda$ replaced by $\widehat\lambda$ in \eqref{eq:TPFPM:cts}.

\subsubsection{Case-control point pattern data}
\label{S:ROC:model:casecontrol}

For spatial case-control point pattern data,
analysed conditionally on the spatial locations
as described in Section~\ref{S:background:model:casecontrol},
a model specifies the probability $\pi_j = \Prob{m_j = 1}$
that location $x_j$ is a case, for $j = 1,\ldots,J$.
The theoretical and empirical M-ROC curves are then constructed using the same
formulas as for pixel data in Section~\ref{S:ROC:model:defn:pixel}:
\eqref{eq:TPFPM:finite2} and
\eqref{eq:TPFPmodel:finite},
with $y_j=m_j=1$ indicating that a point (rather
than a pixel) is a case rather than a control ($y_j=m_j=0$).
The interpretation of the rates, M-ROC and M-AUC
is analogous to the pixel case with case and control
locations taking the role of presence and absence pixels, respectively.
Overfitting may occur in this context and we again recommend the use
of leave-one-out estimates $\widehat\pi_j^{-j}$.

\subsection{Examples}
\label{S:ROC:model:examples}

Consider the logistic regression model \eqref{eq:logistic}
for presence of \textit{Beilschmiedia} in 10-metre pixels
as a function of terrain elevation.
The left panel of Figure~\ref{F:bei10modelconventional} shows the M-ROC curve
for this model, constructed according to standard practice in
statistical ecology \citep[p.\ 222]{fran09};
this is an ROC curve comparing the values
\emph{of the fitted probability of presence}
computed at the %
pixels where trees are present,
against those values computed at
the pixels where trees are absent.
That is, the score $s_j$ at pixel $j$ is defined to be the
fitted presence probability $\widehat \pi_j$;
the observed true positive rate $\TPhat(t)$ and false positive rate $\FPhat(t)$
are computed as in
\eqref{eq:TPFPmodel:finite}
and $\TPhat(t)$ is plotted against $\FPhat(t)$.
Leave-one-out estimates $\widehat \pi_j^{-j}$ of the presence probabilities
could have been used instead, as we recommend,
but the discrepancy is negligible.
The middle panel shows the same calculations for the logistic regression
as a function of terrain slope.

\begin{figure}[!hbt]
  \centering
  \centerline{
    \includegraphics*[width=0.25\refwidth,bb=0 0 410 420]{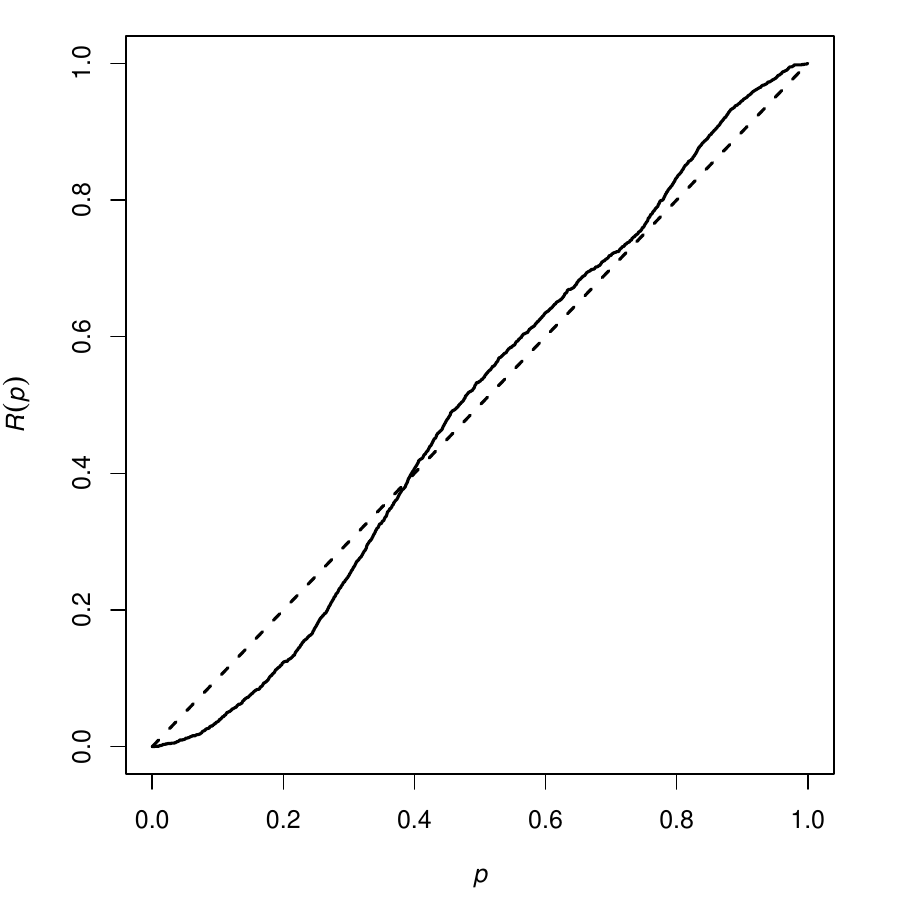}
    \includegraphics*[width=0.25\refwidth,bb=0 0 410 420]{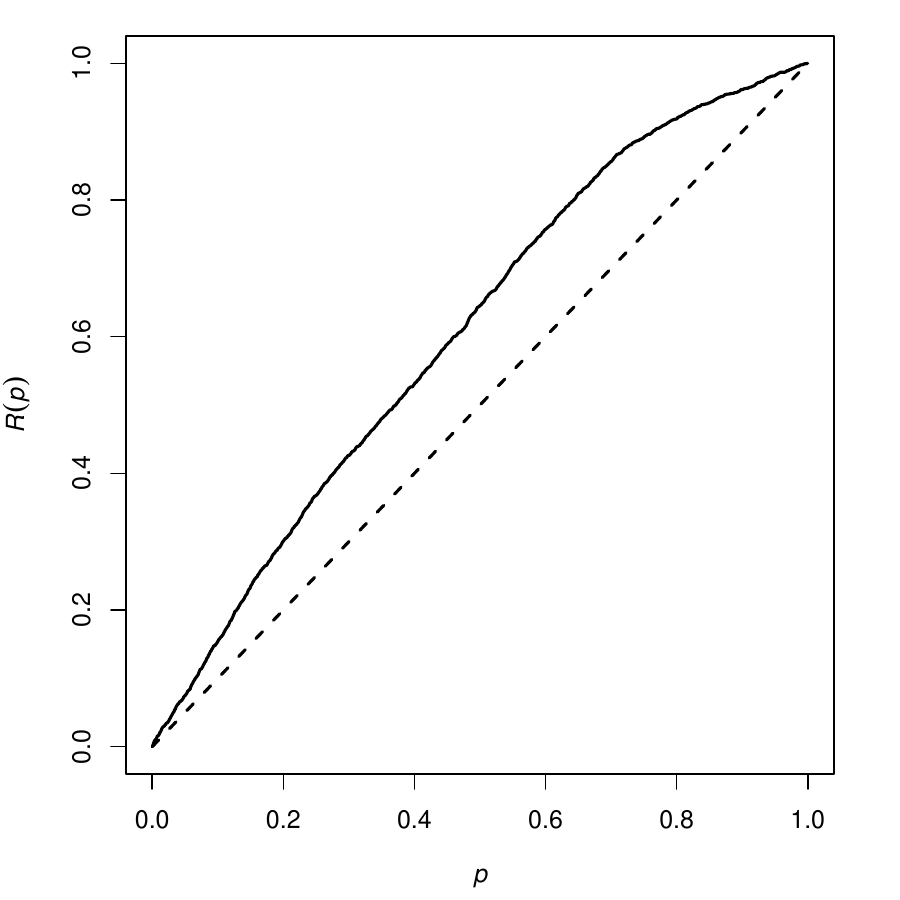}
    \includegraphics*[width=0.25\refwidth,bb=0 0 410 420]{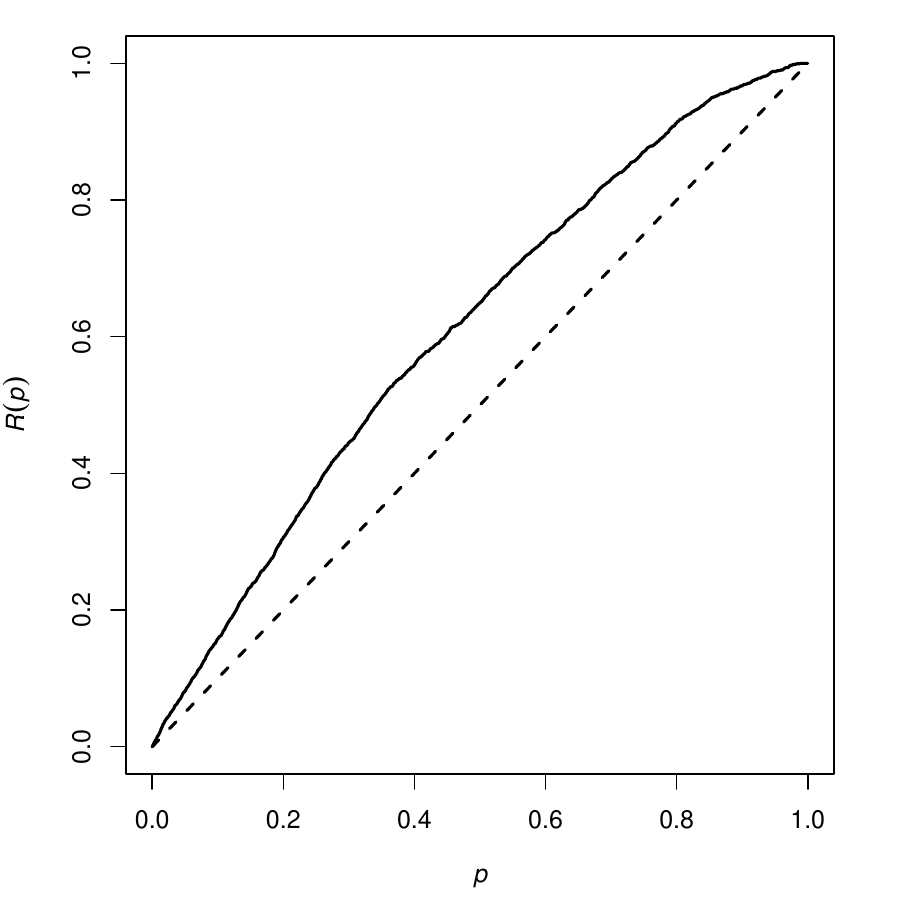}
  }
  \caption{
    M-ROC curves for the logistic regression model
    of \textit{Beilschmiedia} presence against terrain elevation
    (\emph{Left}), against terrain slope (\emph{Middle}),
    and against both terrain elevation and terrain slope (\emph{Right}).
    Pixel size 10 metres.
    AUC values are 0.51, 0.61 and 0.61 respectively,
    Youden statistic values are 0.07, 0.20, 0.17 respectively.
    Note the left and middle panels are identical to
    the left and middle panels of Figure~\protect{\ref{F:bei10ROCmur}}.
  }
  \label{F:bei10modelconventional}
\end{figure}

Note that the left and middle panels of Figure~\ref{F:bei10modelconventional}
are identical to the left and middle panels of Figure~\ref{F:bei10ROCmur},
as a consequence of Lemma~\ref{L:collapse} in Section~\ref{S:ROCmodel:collapse}.
On the other hand,
the right-hand panel of Figure~\ref{F:bei10modelconventional}
is obtained when the model is multiple logistic regression
with additive terms for terrain elevation and terrain slope,
and this does not reduce to a C-ROC.

Interestingly, the right panel shows that the model with
additive terms for elevation and slope has slightly \emph{less} ranking ability
than the model depending only on terrain slope. 
The coefficients of elevation and slope are positive in each fitted model,
but larger slopes tend to occur at smaller elevations (correlation -0.36),
so a model which is expected to produce a better fit actually exhibits
worse ranking ability.
For reference, the fitted presence probabilities for the three logistic regression models
are shown in \ifelseArXiV{Appendix~\ref{SUPP:models}}{Section S2 of the Supplement}.

For the exact spatial locations of the \textit{B.\ pendula} trees,
we may fit the corresponding loglinear Poisson point process models
\eqref{eq:loglinear} and compute the M-ROC curves using
\eqref{eq:TPFPmodel:cts}.
The resulting curves are very similar to those in
Figure~\ref{F:bei10modelconventional} which is expected since these models are the continuous
analogues of the models appearing in Figure~\ref{F:bei10modelconventional}.

\begin{figure}[!hbt]
  \centering
  \centerline{
    \includegraphics*[width=0.3\refwidth,bb=0 0 410 420]{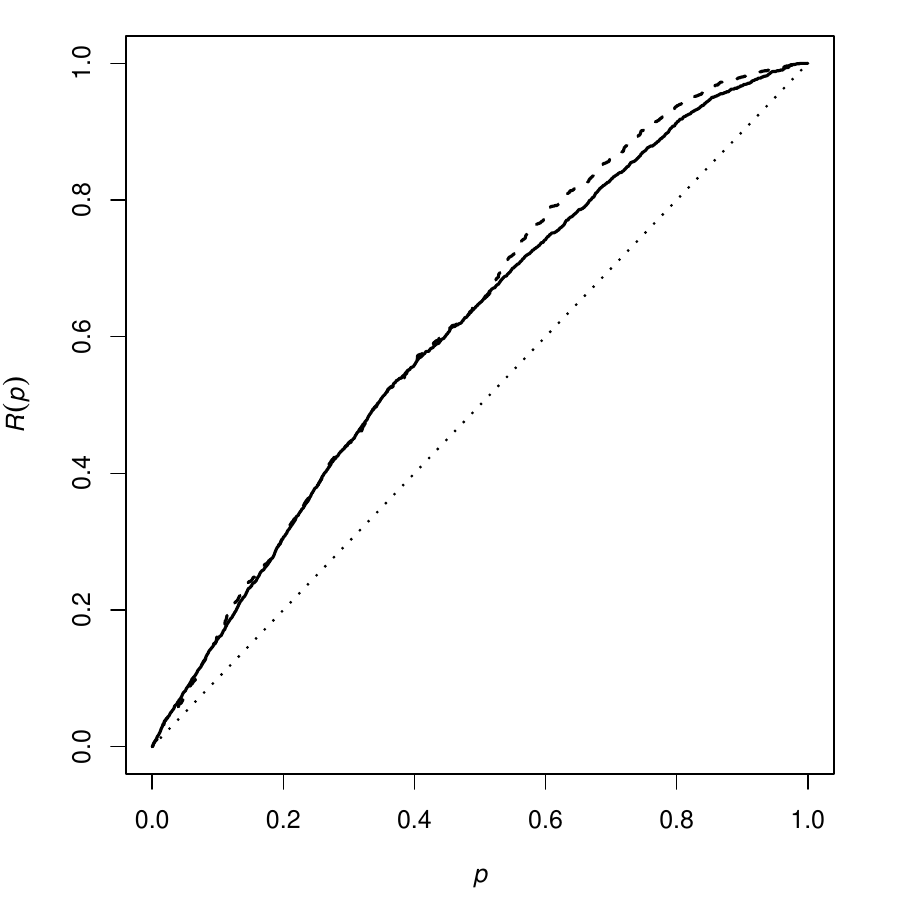}
  }
  \caption{
    M-ROC curves for the logistic regression model
    of \textit{Beilschmiedia} presence against 
    both terrain elevation and terrain slope,
    using the conventional estimate (solid lines)
    and the leave-one-out estimate (dashed lines).
    Pixel size 10 metres.
  }
  \label{F:bei10modelLOO}
\end{figure}

For the model which is multiple logistic regression
with additive terms for terrain elevation and terrain slope,
Figure~\ref{F:bei10modelLOO} shows the M-ROC curves 
computed using the raw estimator applied to the fitted probability of presence
with and without the option of using a leave-one-out estimate.
Somewhat surprisingly, the leave-one-out estimator produces a slightly more optimistic
M-ROC curve with slightly higher AUC value, but as previously noted our experience with real data
is that the leave-one-out estimator has a modest effect on the ROC curve and it may change AUC in
either direction.

\begin{figure}[!hbt]
  \centering
  \centerline{
    \includegraphics*[width=0.3\refwidth,bb=0 0 410 420]{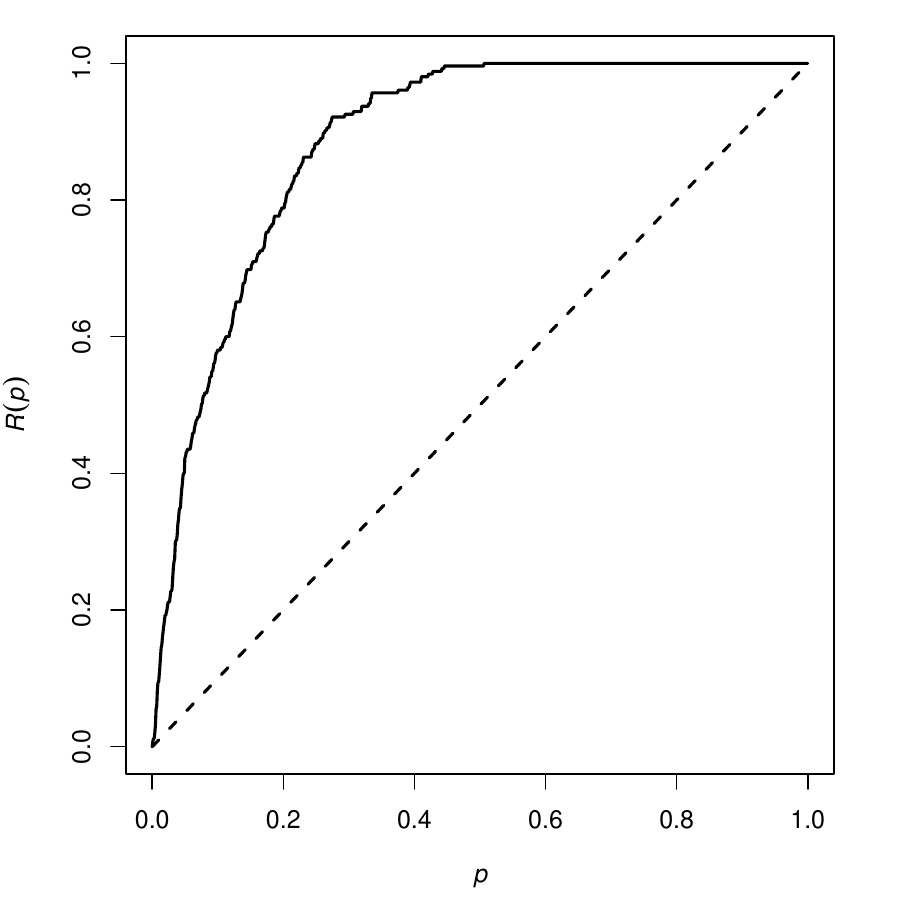}
    \includegraphics*[width=0.3\refwidth,bb=0 0 410 420]{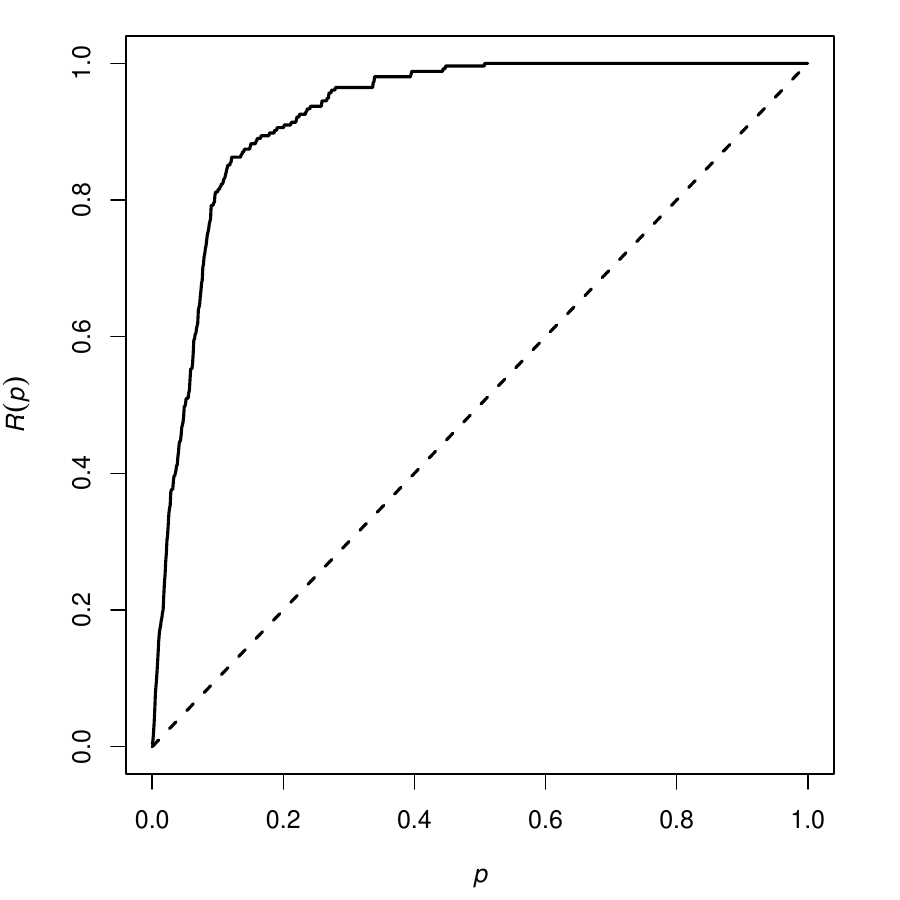}
  }
  \caption{
    M-ROC curves for Poisson point process models
    for the Murchison gold deposits in which the intensity is
    a loglinear function of the distance to nearest fault (\emph{Left})
    or a log linear function of distance to nearest fault and
    greenstone indicator (\emph{Right}).
    AUC values are 0.886 and 0.926 respectively.
    Youden statistic values are 0.647 and 0.742 respectively.
    Values at $p = 0.1$ are 0.580 and 0.813 respectively.
    Note the left panel is identical to
    the right panel of Figure~\protect{\ref{F:bei10ROCmur}}.
  }
  \label{F:murModels}
\end{figure}

Figure~\ref{F:murModels} shows the M-ROC curves for
the loglinear Poisson point process model
for Murchison gold deposits as a function of distance to nearest fault $D(u)$, 
(left panel) 
and as a function of both distance to nearest fault and
greenstone indicator (right panel).
The right panel suggests substantially better performance
for the small values of area fraction $p$ which are of interest
for geological exploration. For example at $p=0.1$ the curves in the
left and right panels have height 0.580 and 0.813 respectively.

\subsection{Variance}

\begin{figure}[!hbt]
  \centering
  \centerline{
    \includegraphics*[width=0.3\refwidth,bb=0 0 410 420]{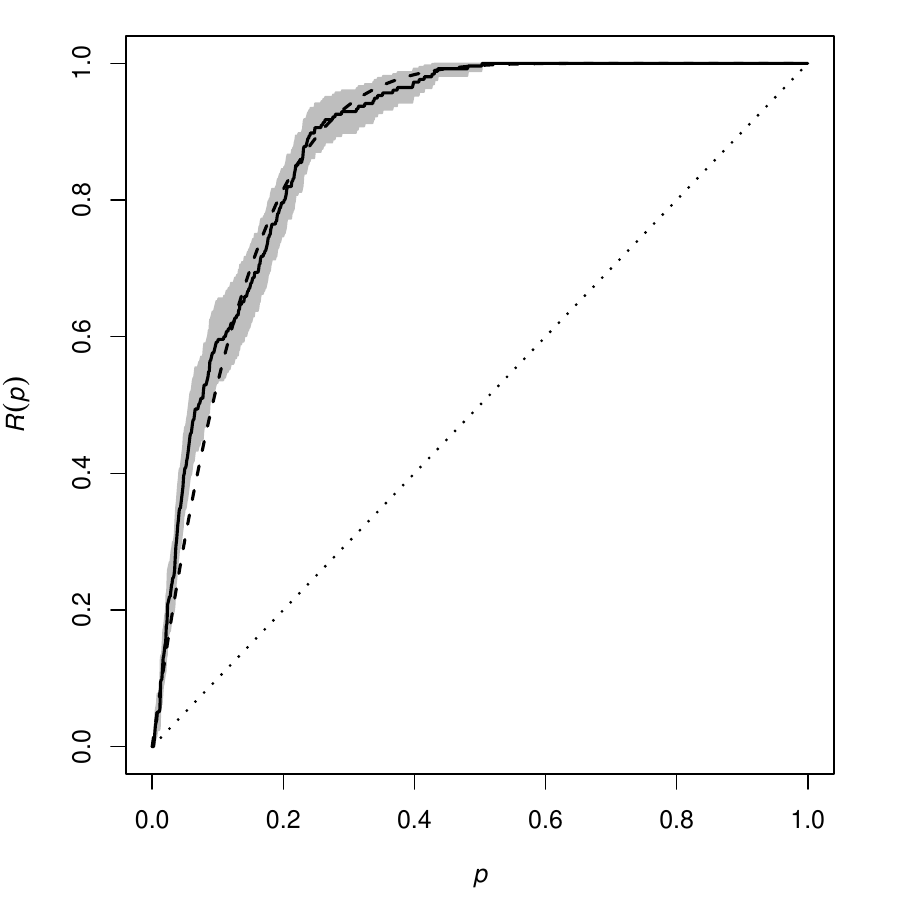}
    \hspace*{4mm}
    \includegraphics*[width=0.3\refwidth,bb=0 0 410 420]{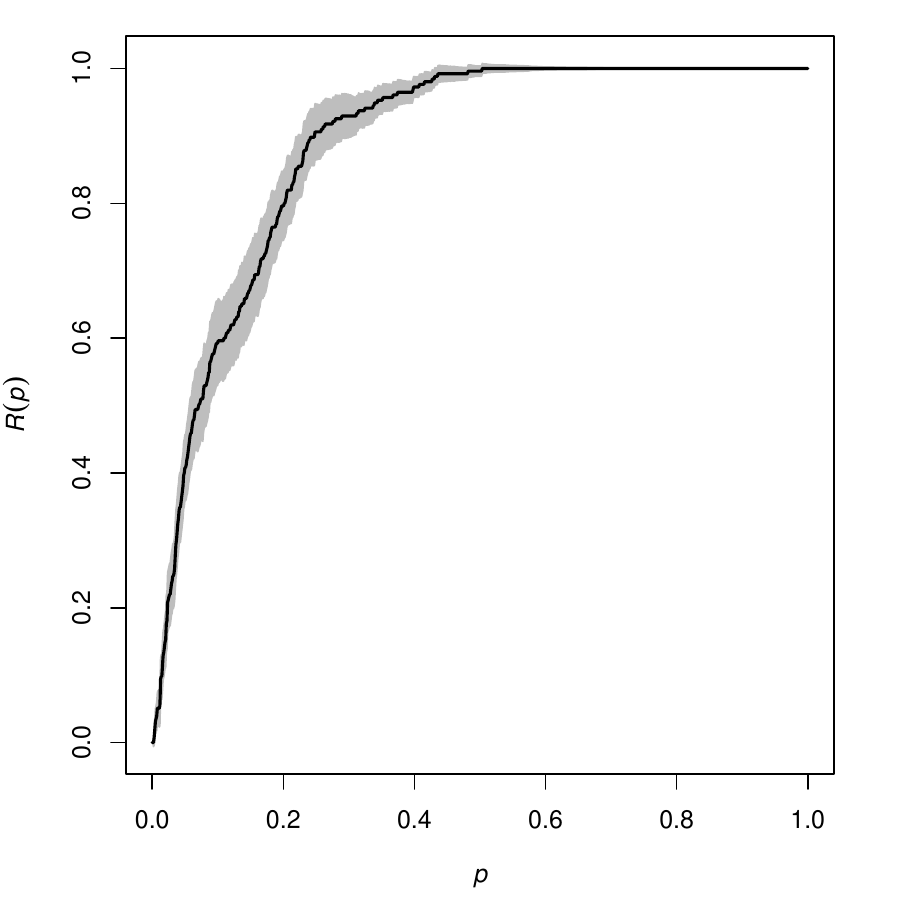}
  }
  \caption{
    \emph{Left:}
    M-ROC curve for the Poisson point process model
    fitted to the Murchison gold data
    assuming intensity is a loglinear function of distance-to-nearest-fault.
    Solid lines: M-ROC curve, raw estimate (leave-one-out).
    Grey shading: pointwise $95\%$ confidence bands 
    calculated using the plug-in estimate of asymptotic variance \eqref{e:varROC:mapped}.
    Dashed lines: M-ROC curve predicted by model.
    \emph{Right:} pointwise 95\% confidence bands
    based on Monte Carlo variance estimate.
  }
  \label{F:murdistROCci}
\end{figure}

The left panel of Figure~\ref{F:murdistROCci} shows ROC curves
for the Poisson point process model fitted to the Murchison gold data
assuming intensity is a loglinear function of distance-to-nearest-fault.
Solid lines show the raw estimate \eqref{eq:TPFPmodel:cts}
of the M-ROC curve for the fitted model,
calculated using leave-one-out estimates of intensity.
Dashed lines show the predicted M-ROC curve $R_{\widehat\lambda,\widehat\lambda}(p)$
of the fitted model, which is based on \eqref{eq:TPFPM:cts}
with $\lambda$ replaced by $\widehat\lambda$.
Grey shading shows asymptotic pointwise $95\%$ confidence bands
around the raw estimate, calculated for the Poisson model
using the plug-in estimate of asymptotic variance \eqref{e:varROC:mapped},
ignoring variability due to parameter estimation.
The right panel of Figure~\ref{F:murdistROCci}
shows pointwise $95\%$ confidence bands
around the raw estimate based on Monte Carlo estimates of pointwise
variance, based on simulations of the fitted Poisson process model,
suggesting that the asymptotic variance approximation is acceptable.
The Monte Carlo approach has the advantage that it can easily be applied
to other point process models and that it takes the extra variability due to parameter estimation
into account.

The extra source of variability associated with the model fit is hard to analyse in any generality because it depends on the kind of model and the method of fitting the model.
In large samples, this source of variability is expected to be negligible relative to the variances given in \eqref{e:varTP:mapped} and \eqref{e:varROC:mapped} so that the same expressions can be used.
If there is only one covariate, and the model intensity is always a monotone increasing function of the covariate, then this source of variability collapses, because the M-ROC is identical to the C-ROC, see Section \ref{S:ROCmodel:collapse}, and the variance of the M-ROC is identical to the variance of the C-ROC.

\subsection{Interpretation of ROC for a fitted model}

The central question is whether these curves can be used
for model criticism or model selection, and in what way.

\subsubsection{M-ROC is insensitive to the model}
\label{S:ROCmodel:monotone}

\begin{lemma}
  The M-ROC for a model with fitted presence probabilities $\hat \pi_j$ is
  \emph{identical} to the M-ROC for another model with fitted presence
  probabilities $\hat \pi^\ast_j$, if $\hat \pi^\ast_j$ is a
  strictly increasing function
  of $\hat \pi_j$, that is $\hat \pi^\ast_j = \Psi(\hat \pi_j)$ where
  $\Psi(p)$ is a strictly increasing function of $p$.
\end{lemma}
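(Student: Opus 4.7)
The plan is to reduce this to the same monotone-transformation argument used for Lemma~\ref{L:monotone}, since the fitted presence probability is playing exactly the role that the covariate $Z$ played in the C-ROC setting. The M-ROC curve is determined by the parametric curve $\{(\FPhat(t), \TPhat(t)) : t \in \R\}$ defined by \eqref{eq:TPFPmodel:finite}, where the only way the scores $\widehat\pi_j$ enter is through the indicator $\indicate{\widehat\pi_j > t}$. So it suffices to show that applying $\Psi$ to all the scores merely reparametrises the threshold.

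First I would fix a threshold $t^\ast \in \R$ and rewrite the indicator for the transformed scores: since $\Psi$ is strictly increasing, it is injective on its domain, and $\widehat\pi_j^\ast > t^\ast$ if and only if $\Psi(\widehat\pi_j) > t^\ast$ if and only if $\widehat\pi_j > \Psi^{-1}(t^\ast)$, where $\Psi^{-1}(t^\ast) = \inf\{p : \Psi(p) > t^\ast\}$ is well-defined (using an inf/sup convention to cover thresholds outside the range of $\Psi$, for which the indicator is trivially $0$ or $1$ everywhere). Setting $t = \Psi^{-1}(t^\ast)$, we obtain
\begin{equation*}
  \indicate{\widehat\pi_j^\ast > t^\ast} = \indicate{\widehat\pi_j > t}
  \quad\text{for every } j,
\end{equation*}
and substituting into \eqref{eq:TPFPmodel:finite} gives $\TPhat^\ast(t^\ast) = \TPhat(t)$ and $\FPhat^\ast(t^\ast) = \FPhat(t)$.

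Next I would argue that as $t^\ast$ ranges over $\R$, the corresponding $t = \Psi^{-1}(t^\ast)$ covers the same set of effective thresholds (equivalently, the same partition of the pixels into $\{j:\widehat\pi_j > t\}$ and its complement), so the sets $\{(\FPhat(t),\TPhat(t))\}$ and $\{(\FPhat^\ast(t^\ast),\TPhat^\ast(t^\ast))\}$ coincide as subsets of $[0,1]^2$. Taking the left-continuous inverses in the definition \eqref{eq:roc} of $R(p)$, one then has $R_{\hat\bpi^\ast,\by}(p) = R_{\hat\bpi,\by}(p)$ for every $p \in [0,1]$, which is the claim.

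The only obstacle is the book-keeping around values of $t^\ast$ lying outside the range of $\Psi$ and around ties in $\widehat\pi_j$: these need to be handled consistently with the left-continuous inverse in \eqref{eq:roc}, but they cause no genuine difficulty because the M-ROC depends on the scores only through the finitely many sub-level sets they induce. The same argument carries over verbatim to the point-pattern version \eqref{eq:TPFPmodel:cts} with $\widehat\lambda$ in place of $\widehat\bpi$ and to the case-control setting of Section~\ref{S:ROC:model:casecontrol}.
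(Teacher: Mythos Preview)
Your proof is correct and follows essentially the same approach as the paper's, which simply observes that $\hat\pi_j > t$ if and only if $\hat\pi^\ast_j > \Psi(t)$ (equivalently, your reverse direction $\hat\pi^\ast_j > t^\ast$ iff $\hat\pi_j > \Psi^{-1}(t^\ast)$), so the two parametric curves coincide. Your version is more careful about thresholds outside the range of $\Psi$, ties, and the left-continuous inverse, and explicitly notes the extension to \eqref{eq:TPFPmodel:cts} and the case-control setting, but the core idea is identical.
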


\begin{proof}
  For any threshold $t \in \R$ we have
  \[
    \hat \pi_j > t \quad \mbox{ iff } \quad \hat \pi^\ast_j > \Psi^{-1}(t) = s
  \]
  so that the M-ROC curves are identical.
\end{proof}

Thus in Figure~\ref{F:bei10modelconventional} 
each panel would be unchanged if the
fitted presence probabilities were reduced by a factor of 10,
or transformed by the square root, etc.

The M-ROC and M-AUC depend only on the
normalised predictions $\pi_j/\sum_k \pi_k$ or
normalised intensity $\lambda(u)/\int_W \lambda(v) \dee v$,
and cannot detect errors which introduce a multiplicative factor
into the presence probability or intensity.
Hence, the M-ROC is ``not affected by species prevalence''
\citep{manewillorme01}.
Consequently an ROC analysis cannot be used to predict changes in
species abundance. 

This could be considered an advantage when dealing with sampling designs
or modelling techniques which are only able to predict the presence probability
up to an unknown constant factor.
It is a disadvantage insofar as the M-ROC is insensitive to errors
in quantitative prediction (such as over- or under-estimating species
abundance) and cannot be used to compare
the performance of two different models if the models
are related by a monotone transformation.

The same M-ROC will often be obtained for different SDM techniques.

\subsubsection{Collapses to C-ROC}
\label{S:ROCmodel:collapse}

If there is only one explanatory variable $Z$, then the M-ROC curve
is the \emph{same for all models} which depend monotonically on $Z$.
Such models cannot be distinguished from each other on the basis of the
ROC curve or the AUC value.

For any such model, the M-ROC curve is the same as the
C-ROC curve based on the covariate $Z$, and contains information only
about the ranking ability of the covariate, not about the particular model.

\begin{lemma}
  \label{L:collapse}
  Suppose there is only a single spatial covariate $Z$.
  Consider any model which depends only on $Z$, and only
  in a monotone fashion, $\widehat \pi_j = f(z_j)$ where $f$ is a
  strictly increasing function. Then the empirical M-ROC 
  $R_{\widehat \bpi, \by}(p)$
  defined by \eqref{eq:TPFPmodel:finite}
  is identical to the empirical C-ROC $R_{Z, \by}(p)$
  defined by \eqref{eq:TPhat:finite}--\eqref{eq:FPhat:finite}.
\end{lemma}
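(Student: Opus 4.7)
The plan is to reduce the lemma to an essentially trivial threshold reparameterization, following the same pattern as Lemma~\ref{L:monotone} but in the discrete pixel setting. The decisive observation is that since $f$ is strictly increasing, for every threshold $t\in\mathbb{R}$ the event $\{\widehat\pi_j > t\}$ coincides exactly with $\{z_j > s\}$ where $s = f^{-1}(t)$ (taking $f^{-1}(t) = -\infty$ when $t$ is below the range of $f$ and $+\infty$ when it is above). This equivalence holds pixelwise and is the only nontrivial ingredient.

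First, I would substitute $\widehat\pi_j = f(z_j)$ into the defining expressions for the empirical M-ROC in \eqref{eq:TPFPmodel:finite} and rewrite each indicator using the equivalence above. This converts the numerators and denominators into the corresponding expressions in \eqref{eq:TPhat:finite}--\eqref{eq:FPhat:finite} for the empirical C-ROC, but evaluated at the reparameterized threshold $s = f^{-1}(t)$. Explicitly, I would establish
\[
  \widehat{\TP}_{\mathrm{M}}(t) = \widehat{\TP}_{\mathrm{C}}(f^{-1}(t)),
  \qquad
  \widehat{\FP}_{\mathrm{M}}(t) = \widehat{\FP}_{\mathrm{C}}(f^{-1}(t)),
\]
where the subscripts M and C refer to the model-based and covariate-based rates respectively.

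Finally I would appeal to the fact that the ROC curve is defined via the parametric plot of $(\widehat{\FP}(t),\widehat{\TP}(t))$ for $t\in\mathbb{R}$, equivalently through the composition $\widehat R(p) = \widehat{\TP}(\widehat{\FP}^{-1}(p))$. Because the map $t\mapsto s = f^{-1}(t)$ is a strictly increasing bijection between $\mathbb{R}$ and the range of $f$, reparameterizing the threshold does not change the traced set of points; the left-continuous inverse $\widehat{\FP}_{\mathrm{M}}^{-1}$ cancels the $f^{-1}$ when composed with $\widehat{\TP}_{\mathrm{M}}$, yielding $\widehat R_{\widehat\bpi,\by}(p) = \widehat R_{Z,\by}(p)$ for every $p\in[0,1]$.

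The main obstacle, such as it is, lies in book-keeping the boundary cases where $t$ falls outside the range of $f$: these correspond to the endpoints $(0,0)$ and $(1,1)$ of the ROC curve, which both constructions include, so no genuine difficulty arises. The argument is essentially the discrete counterpart of Lemma~\ref{L:monotone}, and an analogous proof applies verbatim to the point pattern case \eqref{eq:TPFPmodel:cts} and to case-control data.
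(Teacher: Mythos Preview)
Your proposal is correct and follows essentially the same approach as the paper: the paper's proof is the one-line observation that $\widehat\pi_j > t$ iff $z_j > f^{-1}(t) = s$, from which the identity of the ROC curves is immediate. Your version simply spells out the reparameterization and boundary bookkeeping in more detail.
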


\begin{proof}
  For any threshold $t \in \R$ we have
  \[
    \hat \pi_j > t \quad \mbox{ iff } \quad z_j > f^{-1}(t) = s
  \]
  so that the ROC curves are identical.
\end{proof}

For example, for presence-absence data, if the model is
logistic regression \eqref{eq:logistic} and if $\widehat\beta_1 > 0$,
then the M-ROC collapses to the C-ROC.
For continuous mapped point pattern data,
if the model is a loglinear Poisson process \eqref{eq:loglinear}
and if $\widehat\beta_1 > 0$, then
the M-ROC $R_{\widehat \lambda, \bx}(p)$
defined by \eqref{eq:TPFPmodel:cts}
is identical to the C-ROC $R_{Z, \bx}(p)$
defined by \eqref{eq:TPhat:cts}--\eqref{eq:FPhat:cts}.

Thus in Figure~\ref{F:bei10modelconventional} the left and middle panels
do not depend on the details of the fitted model,
only on the explanatory variable.

\subsubsection{M-AUC does not measure goodness-of-fit}
\label{S:ROCmodel:badness}

\citet{fielbell97} argue that AUC is a measure of
\emph{goodness-of-fit} of a predictive model.
This is contested by \citet[p.\ 146]{lobojimereal07}.

A straightforward way to see that the M-AUC cannot be a measure of
goodness-of-fit for a fitted model is to consider the case where
the model is true and the fitted model parameters are close to the true values.
For presence-absence data, the M-ROC $R_{\widehat \bpi, \by}(p)$
defined by \eqref{eq:TPFPmodel:finite}
is close to the ``true'' or ``theoretical'' M-ROC
$R_{\bpi,\bpi}(p)$ defined by \eqref{eq:TPFPM:finite2}.
Although the fitted model is almost perfectly correct, the associated M-AUC
is not equal to 1, and could range anywhere between 1/2 and 1.
The M-ROC and M-AUC simply
reflect the spatial non-uniformity of the presence probabilities $\pi_j$.
Similar comments apply for the case of spatial point pattern data.

\subsubsection{Dependent on study region}
\label{S:ROCmodel:subregion}

The M-ROC depends strongly on the choice of the study region:
it is ``region-specific'' \citep{jime12}. 
The justification for this statement is
the same as in Section~\ref{S:ROCcovar:region}.

Even if an SDM is perfectly correct, the AUC of this SDM
will reflect how well the study region can be segregated.
If we restrict attention to a more homogeneous sub-region,
the AUC should be expected to decrease.
See the synthetic example in Figure~\ref{F:theo:data}.

In general, if the study region is divided into sub-regions,
the AUC values computed in the sub-regions will be different
from each other, and different from the AUC computed for
the entire study region, even if the fitted model is the same
in each sub-region.
Hence the AUC cannot be used to measure agreement between model
and data, neither for testing purposes nor for cross-validation.

\subsubsection{Ranking ability}
\label{S:ROCmodel:ranking}

It should now be clear that 
the interpretation of M-ROC and M-AUC as measuring ``predictive ability''
of a fitted model \citep{lobojimereal07,aust07,fran09}
is weak.
Instead one should think of ``ranking ability'' as the qualitative ability to
identify regions of relatively high or low density of points,
rather than the ability to predict numerically the density of points.
The following result connects the ROC to the Lorenz curve,
a common measure of inequality,
which reinforces the interpretation of the ROC curve as a
measure of ranking.

\begin{lemma}
  For a point process model with intensity function $\lambda(u)$,
  the theoretical M-ROC  $R_{\lambda,\lambda}(p)$
  is identical to the reverse Lorenz curve for the function $\lambda(u)$.
\end{lemma}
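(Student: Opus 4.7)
The plan is to unwrap both curves to a common parametric description and observe they match. Let $U$ be a uniform random location in $W$ and set $\Lambda = \lambda(U)$, a nonnegative random variable with cumulative distribution function $F_\Lambda(t) = |W|^{-1} \int_W \indicate{\lambda(u) \le t} \dee u$. The Lorenz curve for $\lambda$ can be described as follows: for the quantile level $q \in [0,1]$, identify the threshold $t = F_\Lambda^{-1}(q)$; then $L(q)$ equals the fraction of total $\lambda$-mass carried by the bottom $q$-fraction of area, namely
\[
L(q) = \frac{\int_W \indicate{\lambda(u) \le t} \lambda(u) \dee u}{\int_W \lambda(u) \dee u}.
\]
The \emph{reverse} Lorenz curve $\bar L$ is obtained by ordering from the \emph{largest} values of $\lambda$ instead, so with $p = 1 - q$ we have
\[
\bar L(p) = \frac{\int_W \indicate{\lambda(u) > t} \lambda(u) \dee u}{\int_W \lambda(u) \dee u}.
\]

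I would then match this against the M-ROC. By \eqref{eq:TPFPM:cts} with the fitted intensity replaced by the true $\lambda$, the denominator of $\TP(t)$ is $\int_W \lambda(u)\dee u$, and its numerator is $\int_W \indicate{\lambda(u)>t}\lambda(u)\dee u$. The false-positive rate $\FP(t) = |W|^{-1}\int_W \indicate{\lambda(u)>t}\dee u$ equals $1 - F_\Lambda(t) = \Prob{\Lambda > t}$. Hence the parameter used to index the reverse Lorenz curve, $p = 1 - F_\Lambda(t) = \FP(t)$, is exactly the horizontal coordinate of the M-ROC, and the two curves agree pointwise:
\[
R_{\lambda,\lambda}(p) = \TP(\FP^{-1}(p)) = \bar L(p), \qquad 0 \le p \le 1.
\]

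The only mildly delicate step is handling sets where $\lambda$ is constant on a set of positive area, so that $F_\Lambda$ has jumps and $\FP$ has flat pieces; here I would use the left-continuous inverse $\FP^{-1}(p) = \max\{t : \FP(t) \ge p\}$ already adopted in \eqref{eq:roc}, which is the standard convention for the Lorenz curve as well, so the identification extends across plateaus. Beyond that, the argument is purely a matching of definitions, so I do not anticipate a substantive obstacle; the key conceptual point is simply that replacing $\lambda(u)\dee u$ by the probability measure on $W$ weighted by $\lambda$, and $\dee u$ by the uniform measure on $W$, makes the M-ROC a P--P plot between these two measures, which is exactly the reverse Lorenz curve construction.
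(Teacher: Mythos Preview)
Your proof is correct and follows essentially the same route as the paper: both arguments simply write out the Lorenz curve (plot of cumulative $\lambda$-mass against cumulative area, ordered by $\lambda$) and the theoretical M-ROC (plot of \eqref{eq:TPFPM:cts}), then observe that setting $w=\lambda$ makes the two coincide up to reversal of the threshold inequality. You are somewhat more careful than the paper in handling plateaus via the left-continuous inverse convention, but this is a refinement rather than a different idea.
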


\begin{proof}
  Suppose $w:W \to [0,\infty)$ is the
  spatially-varying density of ``wealth''.
  Then the Lorenz curve \citep{lore05} is
  the graph of $F_1(t)$ against $F_0(t)$ where
  \[
    F_1(t) = \frac{
               \int_W  w(u) \indicate{w(u) \le t} \dee u
               }{
               \int_W w(u) \dee u
             },
             \quad\quad
    F_0(t) = \frac 1 {|W|} \int_W \indicate{w(u) \le t} \dee u.
  \]
  For a spatial point process model with intensity $\lambda(u)$,
  the theoretical (model-predicted) M-ROC is the graph of 
  $\TP(t)$ against $\FP(t)$ where
  $\TP(t)$ and $\FP(t)$ are given by \eqref{eq:TPFPmodel:cts}.
  If we set $w(u) = \lambda(u)$ for $u \in W$,
  the two curves are equivalent, apart from the reversal of the inequality
  from $w(u) \le t$ to $\lambda(u) > t$.
\end{proof}

In the discrete case of presence-absence data,
a similar statement holds approximately, when all presence probabilities are
sufficiently small.

\subsubsection{M-ROC is more optimistic than C-ROC}
\label{ss:NeymanPearson}

It is a consequence of the Neyman-Pearson Lemma that the
M-ROC is always more ``optimistic'' than the C-ROC for any spatial covariate
in the same context \citep[pp.\ 24--25]{krzahand09}, \citep{egan75,pepe03}.

\begin{lemma}
  For a point process model with intensity function $\lambda(u)$,
  the theoretical M-ROC dominates the theoretical C-ROC
  for any spatial covariate $Z(u)$:
  \begin{equation}
    \label{eq:M>=C}
    R_{\lambda,\lambda}(p) \ge R_{Z,\lambda}(p)
    \quad\mbox{ for all }\quad 0 \le p \le 1.
  \end{equation}
  Consequently, the AUC associated with the M-ROC
  is always greater than or equal to the AUC associated with the C-ROC.
\end{lemma}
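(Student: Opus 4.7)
The plan is to recast both ROC curves as ROC curves for distinguishing two probability distributions on the spatial domain $W$, and then invoke the Neyman--Pearson Lemma. Let $P_0$ denote the uniform probability distribution on $W$ (density $1/|W|$) and let $P_1$ denote the probability distribution on $W$ with density proportional to the intensity, $\lambda(u)/\int_W \lambda(v) \dee v$. A direct rewriting of \eqref{eq:TPFPM:cts} shows that if $U$ is drawn from $P_0$ or $P_1$, then
\begin{equation*}
\TP_\lambda(t) = P_1(\lambda(U) > t), \qquad \FP_\lambda(t) = P_0(\lambda(U) > t),
\end{equation*}
and likewise $\TP_Z(t) = P_1(Z(U) > t)$, $\FP_Z(t) = P_0(Z(U) > t)$. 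Thus the theoretical M-ROC and the theoretical C-ROC are both ROC curves for testing $P_0$ vs.\ $P_1$ using the discriminants $\lambda$ and $Z$ respectively.

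The key observation is that the likelihood ratio of $P_1$ with respect to $P_0$ is
\begin{equation*}
\frac{\dee P_1}{\dee P_0}(u) = \frac{|W|\,\lambda(u)}{\int_W \lambda(v)\dee v},
\end{equation*}
which is a strictly increasing function of $\lambda(u)$. Hence thresholding on $\lambda$ is equivalent to thresholding on the likelihood ratio, so the M-ROC is exactly the optimal (Neyman--Pearson) ROC envelope for distinguishing $P_0$ from $P_1$.

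To establish \eqref{eq:M>=C} pointwise without relying on distributional continuity, I would argue directly from the Neyman--Pearson construction. Fix $p \in [0,1]$ and set $t = \FP_Z^{-1}(p)$, so the set $A = \{u \in W : Z(u) > t\}$ satisfies $P_0(A) \le p$ and $R_{Z,\lambda}(p) = P_1(A)$. Choose $c$ so that $B = \{u \in W : \lambda(u) > c\}$ satisfies $P_0(B) \ge P_0(A)$ with $P_0(B)$ as close as possible to $P_0(A)$ from above (taking $c = \lambda$-quantile via the left-continuous inverse); then $P_0(B) \le p$, so $P_1(B) \le R_{\lambda,\lambda}(p)$ by the definition of $R_{\lambda,\lambda}$ via left-continuous inverses. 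It remains to show $P_1(A) \le P_1(B)$. Writing
\begin{equation*}
P_1(B) - P_1(A) = \int_{B \setminus A} \frac{\dee P_1}{\dee P_0}\dee P_0 - \int_{A \setminus B} \frac{\dee P_1}{\dee P_0}\dee P_0,
\end{equation*}
the likelihood ratio is $> c'$ on $B\setminus A$ and $\le c'$ on $A\setminus B$ for the corresponding threshold $c'$ on the ratio, and $P_0(B\setminus A) \ge P_0(A\setminus B)$, so the right-hand side is non-negative. This is the standard Neyman--Pearson swap argument and is the only step requiring any care; the subtlety is handling possible atoms/ties in the distribution of $\lambda(U)$ under $P_0$, but the left-continuous convention in \eqref{eq:roc} makes the inequality go in the right direction.

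Finally, integrating the pointwise inequality $R_{\lambda,\lambda}(p) \ge R_{Z,\lambda}(p)$ over $p \in [0,1]$ and using \eqref{eq:AUC} yields the AUC comparison. The main obstacle is purely notational rather than mathematical, namely checking that the Neyman--Pearson optimality, which is classically stated for tests of a fixed size, transfers cleanly to the left-continuous ROC curves defined in \eqref{eq:roc}.
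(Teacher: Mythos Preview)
Your approach is essentially the same as the paper's: both recast the M-ROC and C-ROC as arising from the testing problem $P_0$ (uniform on $W$) versus $P_1$ (density proportional to $\lambda$), observe that the likelihood ratio is a monotone function of $\lambda$, and invoke the Neyman--Pearson Lemma to conclude that thresholding on $\lambda$ dominates thresholding on any other statistic $Z$. The paper's proof is terser, simply citing Neyman--Pearson for the optimality of level sets $\{ \lambda > t\}$ among all sets of the same $P_0$-measure, and (like you) it implicitly assumes the level sets hit the target size $p$ exactly, i.e.\ it does not work through the atomic case either.

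One small correction to your detailed argument: with the paper's convention $\FP^{-1}(p) = \max\{t:\FP(t)\ge p\}$ and $\FP(t)=P_0(Z>t)$ right-continuous, setting $t=\FP_Z^{-1}(p)$ gives $P_0(A)=\FP_Z(t)\ge p$, not $P_0(A)\le p$; the subsequent chain ``$P_0(B)\ge P_0(A)$ and $P_0(B)\le p$'' then does not go through as written. This is exactly the atom/tie issue you flag at the end, and the fix is the usual randomised Neyman--Pearson construction (or, as the paper tacitly does, assuming enough regularity on $\lambda$ and $Z$ that the level-set sizes are continuous in the threshold).
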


\begin{proof}
  For brevity, write $I_A(f) = \int_A f(u) \dee u$ for any
  subset $A \subseteq W$ and any integrable function $f$ on $W$,
  and $L(f,t) = \{ u \in W: f(u) > t\}$ for the upper level set.
  Let $p \in [0,1]$ be fixed.
  From \eqref{eq:TPFPM:cts} we have
  $R_{\lambda,\lambda}(p) = I_{L(\lambda,t)}(\lambda)/I_W(\lambda)$
  where $t$ satisfies $I_{L(\lambda,t)}(1)/I_W(1) = p$.
  Similarly, from \eqref{eq:TP:true:cts} and \eqref{eq:FPhat:cts},
  $R_{Z,\lambda}(p) = I_{L(Z,s)}(\lambda)/I_W(\lambda)$
  where $s$ satisfies $I_{L(Z,s)}(1)/I_W(1) = p$.
  The Neyman-Pearson Lemma states that, 
  given two integrable functions $f(u), g(u) \ge 0$ on $W$,
  considering all subsets $A \subseteq W$ satisfying the constraint that
  $I_A(g) = G$
  where $G$ is a fixed value,
  the subset that achieves the largest value of
  $I_A(f)$
  is of the form
  $A = L(f/g, c)$
  for some constant $c$.
  Applying the Neyman-Pearson Lemma
  to the case $f(u) = \lambda(u)$ and $g(u) \equiv 1$,
  we find that
  $A = L(\lambda,t)$ and $B = L(Z,s)$ satisfy
  $I_A(1) = I_B(1) = p|W|$
  so that
  $I_A(\lambda) \ge I_B(\lambda)$,
  and therefore  $I_A(\lambda)/I_W(\lambda) \ge I_B(\lambda)/I_W(\lambda)$,
  that is, $R_{\lambda,\lambda}(p) \ge R_{Z,\lambda}(p)$.
\end{proof}

\section{Connection to hypothesis tests}
\label{S:tests}

To gain further insight about the ROC and AUC,
we now study their connection to hypothesis tests.

\subsection{AUC and the Wilcoxon-Mann-Whitney test}
\label{S:tests:numerical}

First we consider numerical data.
For the ROC and AUC described in Section~\ref{S:background:ROC},
we noted in Section~\ref{S:background:ROC:P-P} that the ROC
is effectively a comparison between two probability distributions,
namely the distributions of the statistic $S$ in the Positive and Negative
populations.

\citet{hanlmcne82} show that AUC is proportional to the 
Wilcoxon rank sum test statistic (Mann-Whitney $U$ statistic)
of a difference between two groups. 
Assume that independent random samples are taken from the
Positive and Negative groups, with sizes $n_1$ and $n_2$ respectively.
The null hypothesis $\hyp 0$ is that 
the distribution of the statistic $S$ is the same in each group.
The Wilcoxon-Mann-Whitney test statistic $U_1$ is calculated
by pooling the values of $S$ obtained from the two groups,
arranging these values into ascending order,
and computing the sum of the ranks of the values from the Positive group.
Large values of $U_1$ favour the alternative hypothesis that values of $S$
are higher in the Positive group than in the Negative group.

\citet{hanlmcne82} show that $\AUC = U_1/(n_1 n_2)$.
Thus, $\AUC$ is a measure of departure from the null hypothesis
of equality between two groups --- what might be called a
measure of \emph{badness-of-fit} for the \emph{null} model of equality.
Larger values of $\AUC$ indicate greater discrepancy
between the Positive and Negative groups, rather than indicating
greater agreement with a particular alternative hypothesis.

Additionally \citet{handtill01} note that 
the AUC is equivalent to the Gini index, $G = 2 \, \AUC - 1$,
another index of discrepancy between two distributions.

These results may be applied to case-control data.
Consider the mucosa data of Figure~\ref{F:mucosa} consisting of two
groups of cells labelled ECL and Other.
Using the vertical coordinate $y$ as the discriminant variable,
the AUC is 0.726.
The one-sided Wilcoxon rank sum test has $p$-value $4.5 \times 10^{-9}$
indicating very strong evidence against the null hypothesis that the two types
of cells have the same spatial distribution, in favour of the alternative that
ECL cells tend to be relatively more abundant at higher values of $y$.

\subsection{AUC and the Berman-Waller-Lawson test}
\label{S:tests:Berman}

Next we consider spatial point pattern data
(Section~\ref{S:background:spatial:point}).
\citet{berm86} developed two statistical tests for determining whether
a mapped spatial point pattern depends on a spatial covariate.
Independently \citet{walletal92} and \citet{laws93a} developed a
similar test for presence-absence data, to determine whether the
presence probability depends on a spatial covariate.
The Lawson-Waller test is asymptotically equivalent
to Berman's first test as pixel size tends to zero,
so we shall focus on Berman's tests.

\subsubsection{Derivation of Berman's tests}

Suppose that a mapped point pattern $\bx = \{x_1, \ldots, x_n\}$ has been
observed in the study region $W$.
The aim is to determine whether the intensity of points is homogeneous,
or whether the intensity depends on the real-valued spatial covariate $Z(u)$.

The derivation of Berman's tests assumes that the point pattern
is a realisation of a Poisson process $\bX$ in $W$
with some (unknown) intensity function $\lambda(u), \; u \in W$.
The null hypothesis is the assumption of ``no effect'', that is,
the point process intensity is uniform:
\[
  \hyp 0: \; \lambda(u) \equiv \lambda,
\]
while the alternative hypothesis $\hyp 1$ is 
that the intensity is not spatially constant, or more specifically
that the intensity is a non-constant function of $Z$,
\[
  {\hyp 1}^\prime: \; \lambda(u) = \rho(Z(u)),
\]
where $\rho(z)$ is some unknown function that is not constant.

The total number of points in $\bx$ follows a Poisson distribution
with mean
$
  \Lambda = \int_W \lambda(u) \dee u,
$
and given that there are exactly $n$ points, their locations are
i.i.d.\ with probability density $\lambda(u)/\Lambda$.
Let $z_i = Z(x_i)$ be the values of the covariate observed at the
data points of $\bx$.
The key insight \cite[Sec.\ 3.1]{berm86}
is that, under these assumptions, the values $z_1,\ldots,z_n$ 
are i.i.d.\ with cumulative distribution function
\begin{equation}
\label{e:cdf:Z:general}
F(z) = \int_W \indicate{Z(u) \le z} \lambda(u)/\Lambda \dee u.
\end{equation}
Under $\hyp 0$ we have $\lambda(u)/\Lambda\equiv1/|W|$ and the distribution function $F$ is
\begin{equation}
\label{e:cdf:Z:null}
  F_0(z) = \frac{1}{|W|} \int_W \indicate{Z(u) \le z} \dee u,
\end{equation}
which can be interpreted as the distribution function of $Z(U)$
where $U$ is a randomly selected location, uniformly distributed in $W$.
Under $\hyp 1$ the values $z_i$ have some other cumulative distribution
function. Under ${\hyp 1}^\prime$ this distribution is
a weighted distribution
\begin{equation}
\label{e:cdf:Z:gen=null}
  F(z) = \frac{|W|}{\mu} \int_{-\infty}^z \rho(t) \dee F_0(t).
\end{equation}
Under these assumptions, the values $z_1,\ldots,z_n$ are sufficient,
so the test may be conducted using these values alone.

Therefore the problem reduces to testing
whether the probability distribution of
$z_1,\ldots,z_n$ conforms to the null distribution $z_i \sim F_0$.
This can be performed using any of the classical tests of
goodness-of-fit for real random variables
\citep{dagostep86,readcres88}.

Berman's first test (which he called the $Z_1$ test) 
is based on a standardised version of the statistic $S = \sum_i z_i$.
Under $\hyp 0$ it follows from \cite[p.57, eq 3.3]{berm86}, that $S$ has mean
and variance 
\begin{eqnarray}
  \label{e:mu}
  \EE[S] = \mu &=& \lambda \int_W Z(u) \dee u \\
  \label{e:sig2}
  \var{S} = \sigma^2 &=& \lambda \int_W Z(u)^2 \dee u.
\end{eqnarray}
Estimating $\widehat\lambda = n/|W|$, or conditioning on $n$,
we may plug in to \eqref{e:mu}--\eqref{e:sig2}
and compute the standardised statistic $T = (S-\mu)/\sigma$
whose null distribution is approximately standard normal.

In Berman's second test (which he called the $Z_2$ test),
the values $z_i$ are first transformed
to $u_i = F_0(z_i)$, the probability integral transformation
for the null hypothesis. For simplicity we assume that $F_0$ is
continuous and strictly increasing.
Under $\hyp 0$, the values $u_1,\ldots,u_n$
are i.i.d.\ and uniform on $[0,1]$, with mean $\EE[U_i] = 1/2$ and
variance $\var{U_i} = 1/12$. Under $\hyp 0$,
the sample mean $\bar U = \frac 1 n \sum_i u_i$
has mean $\EE[\bar U] = 1/2$ and variance $\var{\bar U} = 1/(12 n)$.
Accordingly the test statistic
\begin{equation}
  \label{e:Berman2}
  V_2 =  \sqrt{12 n} \left( \bar U - \frac 1 2 \right)
\end{equation}
has null distribution approximately standard normal.  

\subsubsection{Connection to C-ROC}

Connections between the C-ROC (Section~\ref{S:ROC:covariate})
and Berman's tests are now clear.
\begin{lemma}
  Assume $F_0$ is continuous and monotone increasing. Then
  \begin{enumerate}
  \item
    The empirical cumulative distribution function of
    the values $u_1,\ldots,u_n$ is
    \[
      \widehat G(p) = \frac 1 n \sum_{i=1}^n \indicate{u_i \le p}
      = \widehat F(F_0^{-1}(p))
    \]
    where $\widehat F(z)$ is the empirical cumulative distribution function
    of $Z(x_1),\ldots,Z(x_n)$.
  \item 
    $\widehat G(p) =  \lo R_{Z,\bz}(p)$.
  \item 
    The test statistic in Berman's second test is related to the AUC by
    \begin{equation}
      V_2 = \sqrt{12 n} \left( \frac 1 2  - \lo\AUC\right)
          = \sqrt{12 n} \left( \hi\AUC - \frac 1 2 \right)
    \end{equation}
  \end{enumerate}
\end{lemma}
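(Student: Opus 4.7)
The three statements are all essentially algebraic identities, and my plan is to handle them in sequence, with each building on the last.

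For statement 1, the idea is to use monotonicity of $F_0$ to convert the event $\{u_i \le p\}$ into an event on the $z_i$ directly. Since $u_i = F_0(z_i)$ and $F_0$ is continuous and strictly increasing, we have $u_i \le p$ iff $z_i \le F_0^{-1}(p)$. Plugging this equivalence into the empirical CDF $\widehat G(p) = \frac{1}{n}\sum_i \indicate{u_i \le p}$ immediately gives $\widehat F(F_0^{-1}(p))$. No significant obstacle here.

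For statement 2, my plan is to express the empirical C-ROC in terms of $\widehat F$ and $F_0$ via \eqref{eq:TPhat:cts} and \eqref{eq:FPhat:cts}, then apply the reversal identity $\lo R(p) = 1 - R(1-p)$ recalled in Section~\ref{S:background:ROC:P-P}. Specifically: from \eqref{eq:FPhat:cts} we have $\FPhat(t) = 1 - F_0(t)$, which is continuous and strictly decreasing under the stated assumption, so $\FPhat^{-1}(p) = F_0^{-1}(1-p)$. Then \eqref{eq:TPhat:cts} gives $\TPhat(t) = 1 - \widehat F(t)$, so that
\[
R_{Z,\bz}(p) = \TPhat(\FPhat^{-1}(p)) = 1 - \widehat F(F_0^{-1}(1-p)).
\]
Applying $\lo R_{Z,\bz}(p) = 1 - R_{Z,\bz}(1-p)$ and then invoking statement 1 yields $\lo R_{Z,\bz}(p) = \widehat F(F_0^{-1}(p)) = \widehat G(p)$.

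For statement 3, my plan is to rewrite the sample mean $\bar U$ as an integral of the survival function $1-\widehat G$ over $[0,1]$ and then recognise the resulting integral as $\lo\AUC$. Since $u_i \in [0,1]$, we have
\[
\bar U = \frac{1}{n}\sum_{i=1}^n u_i = \int_0^1 \bigl(1 - \widehat G(p)\bigr) \dee p,
\]
and by statement 2 this equals $1 - \int_0^1 \lo R_{Z,\bz}(p) \dee p = 1 - \lo\AUC$. Substituting into \eqref{e:Berman2} gives $V_2 = \sqrt{12n}\bigl(\tfrac{1}{2} - \lo\AUC\bigr)$; the alternative form then follows from the identity $\lo\AUC = 1 - \hi\AUC$ noted in Section~\ref{S:background:ROC:performance}.

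There is no real obstacle here beyond careful bookkeeping: the only point that needs attention is the monotonicity/continuity assumption on $F_0$, which is used to justify both the equivalence $\{u_i\le p\}\iff\{z_i\le F_0^{-1}(p)\}$ and the expression $\FPhat^{-1}(p) = F_0^{-1}(1-p)$ without ties complicating the inversion. The rest is rearrangement of integrals and definitions.
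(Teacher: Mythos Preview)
Your proposal is correct and follows essentially the same route as the paper's proof: both argue statement~1 via the monotonicity of $F_0$, deduce statement~2 by identifying $\widehat G$ with the reversed C-ROC, and obtain statement~3 by writing $\bar U = \int_0^1 (1-\widehat G(p))\dee p$ (the paper invokes Fubini here) and substituting into \eqref{e:Berman2}. Your treatment of statement~2 is more explicit than the paper's terse ``the next statement follows,'' but the underlying argument is the same.
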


A sufficient condition for the continuity and monotonicity of $F_0$
is that $Z$ is Lipschtz continuous with nonzero gradient almost everywhere.

\begin{proof}
  By the assumptions on $F_0$ we have
  $u_i \le p$ if and only if $F_0^{-1}(u_i) \le F_0^{-1}(p)$
  which is equivalent to $z_i \le F_0^{-1}(p)$ so that
  $\widehat G(p) = \widehat F_1(F_0^{-1}(p))$.
  The next statement follows.
  Using Fubini's Theorem it can be verified that
  $\bar u = \int_0^1 (1-\widehat G(p)) \dee p$.
  Consequently
  \[
    V_2 =  \sqrt{12 n} \left( \bar u - \frac 1 2 \right) 
    =  \sqrt{12 n} \left( \frac 1 2  -  \int_0^1 \lo R_{Z,\bz}(p) \dee p \dee p \right) 
    = \sqrt{12 n} \left( \frac 1 2  - \lo\AUC\right).
  \]
  The last equation follows because $\lo\AUC = 1 - \hi\AUC$.
\end{proof}

For practical interpretation, the most important finding is that AUC
is related to the test statistic for a goodness-of-fit test of the
null hypothesis of uniform intensity, that is,
the null hypothesis of ``no effect'' (no dependence on the covariate $Z$),
rather than a test of goodness-of-fit to a fitted model.
The primary role of the covariate $Z$ in Berman's tests is
to define the \emph{alternative} hypothesis, that is, the type of departure
from the null hypothesis of uniformity against which the test is
designed to be sensitive. 

Berman's tests can also be viewed in the context of a parametric model
as significance tests for an effect term in the model \citep{berm86}.
Assume that the intensity is a loglinear function of $Z$
as in \eqref{eq:loglinear}. Then the uniformly most powerful test
of ${\hyp 0}^\dag: \beta_1 = 0$ against ${\hyp 1}^\dag: \beta_1 < 0$
is Berman's first test.
A similar statement holds for presence-absence data for the
logistic regression model \eqref{eq:logistic}.
As Berman notes, this test is not optimal against
other kinds of departure from the null hypothesis.
In an effort to improve performance against some alternatives,
Berman proposes the second test based on $V_2$,
and also recommends that the analyst
should plot and inspect the empirical CDF $\widehat G(p)$ as a diagnostic.

Some practical examples follow.
For the \textit{Beilschmiedia} pattern in the left panel of Figure~\ref{F:bei}
using the terrain elevation covariate,
Berman's first and second tests have two-sided $p$-values of
$0.466$ and $0.014$ respectively.
The discrepancy between these test outcomes arises because
the loglinear model \eqref{eq:loglinear} is inappropriate;
this is explained carefully by %
\citet{berm86}.
Using the terrain slope covariate, the corresponding $p$-values are
effectively zero.

For the Murchison data using the distance-to-nearest-fault covariate,
Berman's first and second tests have $p$-values effectively zero.

\subsection{Insensitivity to dependence on other covariates}
\label{S:tests:insensitivity}

The C-ROC curve for a covariate $Z$ represents the ``effect'' of $Z$,
more precisely, the extent to which an increase
in the value of $Z$ is associated with increased probability of presence
or increased point process intensity. The C-ROC 
is entirely insensitive to the possible effects of other covariates
which may be present.

\begin{figure}[!htb]
  \centering
  \centerline{
    \includegraphics*[width=0.31\refwidth,bb=10 10 420 420]{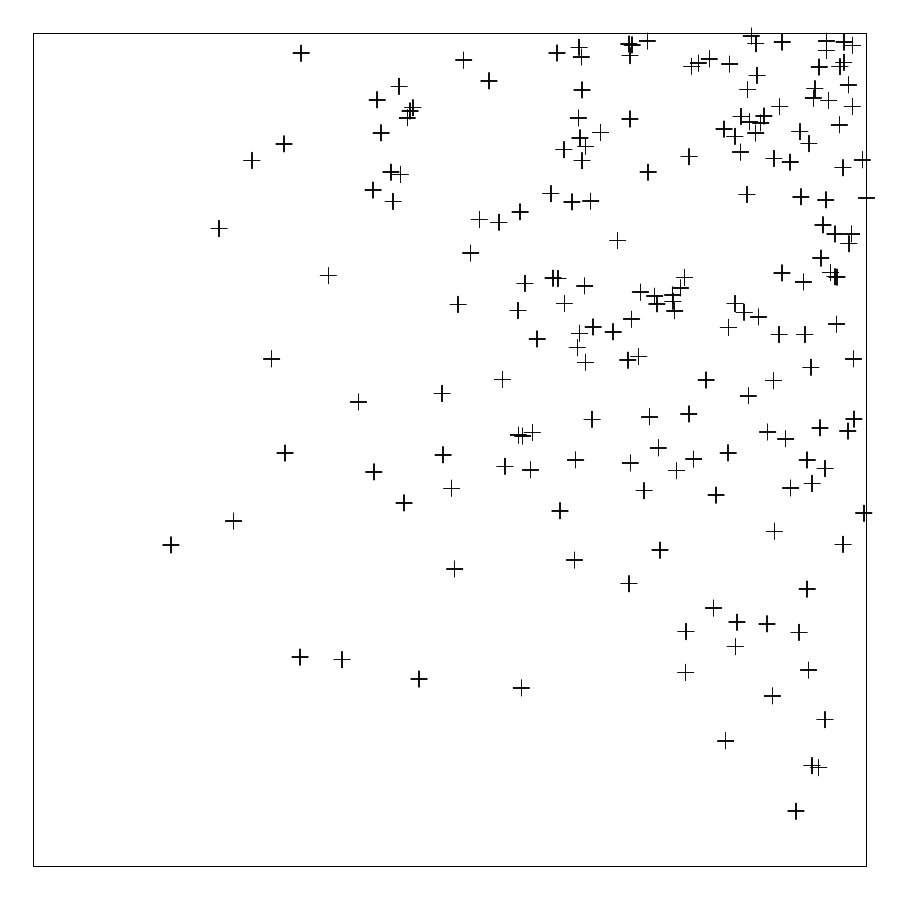}
    \hspace*{7mm}
    \includegraphics*[width=0.3\refwidth,bb=0 0 405 420]{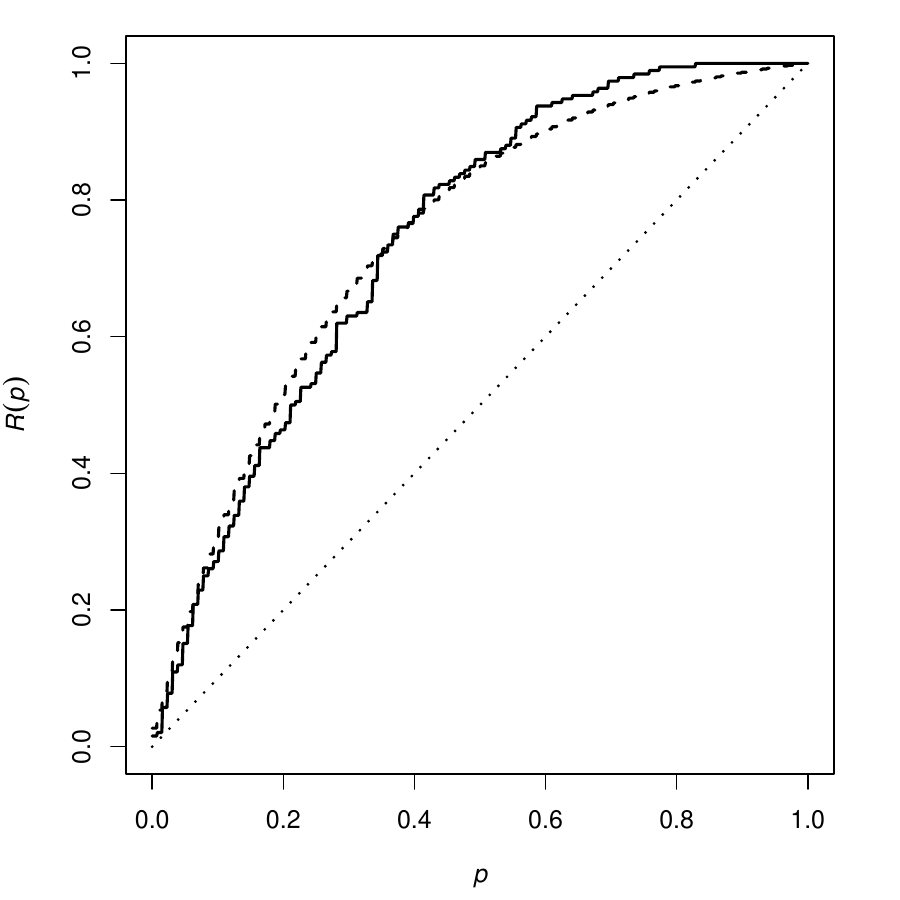}
  }
  \caption{
    Insensitivity of C-ROC to other covariates.
    \emph{Left:}
    Simulated point pattern with intensity depending on
    both the $x$ and $y$ coordinates.
    \emph{Right:} empirical C-ROC curve based on the $x$ coordinate.
  }
  \label{F:corner}
\end{figure}

Figure~\ref{F:corner} shows a synthetic example.
The left panel is a point pattern generated according to the
intensity function $\lambda(x,y) = 100 x^2 y$ in the unit square.
The right panel shows the empirical M-ROC curve for the fitted
loglinear Poisson model
with intensity $\lambda(x,y) = \exp(\beta_0 + \beta_1 x)$ depending only
on the $x$ coordinate, which is also the empirical C-ROC curve
based on the $x$ coordinate. The corresponding AUC is 0.744.
According to some writers, these results suggest a high level
of ``goodness of fit'' of the fitted model. However, this is false for two
reasons.

Firstly, as explained above, while the M-ROC curve
does suggest that the point process intensity depends on $x$, it does not
carry any information about the correctness or incorrectness of the 
functional form of the fitted model, because it collapses to the C-ROC curve
based on the covariate $x$.

Secondly, the fact that the true intensity
also depends on the $y$ coordinate, is completely hidden in this analysis.
A more appropriate test described in \citet[Sec.\ 10.3.5]{baddrubaturn15},
using the $y$ coordinate as the
covariate for the test statistic, yields a $p$-value less than 0.0001,
indicating strong evidence against the fitted model.

\subsection{Youden index and Kolmogorov-Smirnov test}
\label{S:tests:KS}

The Youden index  was mentioned in
Section~\ref{S:background:ROC:performance} as an alternative
to the $\AUC$ as a measure of `performance'. The Youden index
is closely related to the Kolmogorov-Smirnov test
\citep{kolm33test,smir48}. In the general context of ROC and AUC
for numerical data (Section~\ref{S:tests:numerical}),
the Kolmogorov-Smirnov test of the null hypothesis that $F \equiv F_0$
is based on the statistic $\sqrt n D$, where $D$ is
the maximum vertical separation between
the graphs of $\widehat F(z)$ and $F_0(z)$:
\begin{equation}
  \label{eq:KS.stat_new}
  D = \max_z | \widehat F(z) - F_0(z) |
  = \max_p | \widehat F(F_0^{-1}(p)) - p | .
\end{equation}
The two-sided Youden index $J$ is equivalent to $D$ as defined above.
The one-sided Youden index $J_{+}$ defined in \eqref{eq:Youden}
is equivalent to the one-sided maximum deviation
$D_{+} = \max_z ( \widehat F(z) - F_0(z) | )_{+}
= \max_p ( \widehat F(F_0^{-1}(p)) - p )_{+}$
and corresponds to the test statistic $\sqrt n D_{+}$ for the one-sided
Kolmogorov-Smirnov test \citep{baddetal21thresh}.

Assuming that $\widehat F(z)$ is the empirical cumulative
distribution function obtained from $n$ independent realisations from $F$,
the null distributions of $\surd n \, D$ and $\surd n \, D_{+}$ are known
exactly. In the application to spatial point pattern data or spatial
presence-absence data, this assumption is valid if the point process is
Poisson, or the presence-absence indicators for different pixels
are independent. 

Thus the performance measures (AUC, Youden) for ROC curves
are related to \textbf{tests of the presence of an effect}
(Berman-Waller-Lawson, Kolmogorov-Smirnov),
rather than measures of ``goodness of fit'' or a model.
The AUC and Youden index are measures of the size of the effect,
and are not adjusted for sample size.

For the \textit{Beilschmiedia} pattern in the left panel of Figure~\ref{F:bei}
using the terrain elevation covariate, the Kolmogorov-Smirnov test
has $p$-value less than $0.001$. Likewise for the terrain slope covariate.
Likewise for the Murchison data using the distance-to-nearest-fault covariate.

\subsection{Other tests based on the spatial CDF}
\label{S:tests:CDF}

In addition to the Wilcoxon-Mann-Whitney and Kolmogorov-Smirnov tests,
there are other hypothesis tests which have greater power against
certain alternative hypotheses. These include the 
Cram\'er-Von Mises test \citep{cram28,mise31}
with test statistic
\begin{equation}
  \label{eq:CvM.stat_new}
  \omega^2 = n \int_{-\infty}^\infty [\hat F(z) - F_0(z)]^2 \dee F_0(z)
  = n \int_0^1 [\hat F(F_0^{-1}(p)) - p]^2 \dee p 
\end{equation}
and the Anderson-Darling test \citep{andedarl52,andedarl54}
with test statistic
\begin{equation}
  \label{e:AD.stat_new}
  A = n \int_{-\infty}^\infty 
  \frac{
    (\hat F(z) - F_0(z))^2
  }{
    F_0(z) (1 - F_0(z))
  }
  \dee F_0(z)
  =
  n \int_0^1
  \frac{
    (\hat F(F_0^{-1}(p)) - p)^2
  }{
    p (1-p)
  }
  \dee p.
\end{equation}
Again the null distributions of these test statistics are
known exactly, under the same assumptions as above.

For the \textit{Beilschmiedia} data in the left panel of Figure~\ref{F:bei}
using the terrain elevation and terrain slope covariates,
the Kolmogorov-Smirnov, Cram\'er-von Mises and Anderson-Darling tests all have
$p$-values less than $0.001$. Likewise for the Murchison data
against distance to nearest fault.

\subsection{M-ROC and M-AUC for a fitted spatial model}
\label{S:tests:fitted}

\citet{fielbell97} proposed that the AUC is a measure of
goodness-of-fit for a species distribution model fitted to
presence-absence data. In their context, this refers to the M-AUC, 
calculated using the fitted presence probabilities as the discriminant variable.
The claim is that larger values of the M-AUC correspond to a better fit.

This is logically incompatible with the usual formulation of a
goodness-of-fit test, in which the \emph{null} hypothesis postulates
that the data conform to the fitted model under consideration;
the test statistic is a measure
of \emph{departure} from this null hypothesis; and large values of the
test statistic indicate that the null hypothesis should be \emph{rejected}.
The model is deemed to be a good fit if the result of the goodness-of-fit test
is \emph{not} statistically significant.

Sections~\ref{S:tests:Berman}--\ref{S:tests:CDF} above
dealt with the C-AUC, but we argue below that the conclusions
also apply broadly to the M-AUC.
Then M-AUC is not a measure of goodness-of-fit of a fitted model of interest;
rather it is a measure of departure from the null hypothesis of uniformity,
and is related to a test of this null hypothesis. 

A true test of goodness-of-fit, for a species distribution model
or spatial point process model, would normally be based on a covariate
\emph{that is not used in the model}. In Section~\ref{S:tests:Berman}
we saw that the AUC is related to the Berman-Waller-Lawson
test of the null hypothesis of \emph{uniform} intensity
(intensity not depending on any covariates) against the
alternative that the intensity depends on a covariate $Z$. In that context,
the covariate $Z$ is extraneous to the null model
whose goodness-of-fit is being tested.

Extending the results of Sections~\ref{S:tests:Berman}--\ref{S:tests:CDF}
to apply to the M-AUC is technically complicated, because
the discriminant variable $S = \widehat p$ is data-dependent,
so there is an additional source of variability associated with sampling error.
Detailed analysis will depend on details of the model and
the fitting algorithm.

This extra source of variability can be ignored entirely in the case where
the model involves only one explanatory variable $Z$, and where the
model states that the probability of presence is a monotonically increasing
(or monotonically decreasing) function of $Z$. In that case,
by Lemma~\ref{L:collapse}, the M-ROC collapses to the C-ROC based on $Z$,
and the results of Section~\ref{S:tests:Berman} apply.

Similarly in a logistic regression or loglinear Poisson process model
with multiple covariates, the M-ROC curve depends only on the direction
$\widehat\btheta/\| \widehat\btheta \|$ of the fitted coefficient vector
$\widehat\btheta$, not on its magnitude $\| \widehat\btheta \|$.
In other cases, the effect of sampling variability can be neglected
to first order if we use the leave-one-out predictor $\widehat \pi_j^{-j}$.

\section{Connection to resource selection functions}
\label{S:rhohat}

This section explores the relationship between ROC curves
and ``resource selection functions''.
The findings give guidance on how to interpret or ``read'' the ROC curve,
and how to use the ROC curve for model checking.

\subsection{Motivation}

Inspection of the C-ROC or M-ROC curves is an indirect and impractical way
of investigating the effect of covariates.
Covariate values are not displayed in graphs of the ROC.
For the C-ROC (Section~\ref{S:ROC:covariate})
it would be possible to annotate the plot, 
adding the covariate values as tickmarks on the axes.
However, for the M-ROC (Section~\ref{S:ROC:model}),
if the model involves more than one covariate, it becomes impractical
to depict the mapping from covariate values to predicted values
of presence probability.

The dependence of the presence/absence events
on the spatial covariates is described by a species distribution model (SDM).
In most SDM's, the presence probability $p$ at a given pixel
is modelled as depending on the covariate values $Z_1, \ldots, Z_m$
at the same pixel, through a function
\begin{equation}
  \label{eq:SDM}
  p = \rho(Z_1, Z_2, \ldots, Z_m).  
\end{equation}
In various applications, the function $\rho$ in \eqref{eq:SDM} is called a
\emph{resource selection function} or \emph{mineral potential map}.

For ecological applications, Manly \citep[p.\ 177]{manlmcdothom93}
defined a \emph{resource selection function} (RSF) as ``any function
that is proportional to the probability of use by an organism''.
Any function proportional to $\rho$ in \eqref{eq:SDM} could be
called an RSF, and estimation of an RSF is equivalent to
estimation of $\rho$ up to a constant factor.
The resource selection function value reflects
preference for, or avoidance of, particular environmental conditions.
\citet[Sec.\ 1]{boycetal02} say 
``RSF models can be powerful tools
in natural resource management, with applications
for cumulative effects assessment, land-management planning,
and population viability analysis''.
While RSF and SDM are closely related,
the ecological literature seems to associate the terms RSF and SDM
with \emph{nonparametric} and \emph{parametric} estimation, respectively.

In applications to mineral exploration,
the function $\rho$ in \eqref{eq:SDM} is an index of the \emph{prospectivity}
\citep{bonh95,Carranza2009,badd18iamg,knoxgrov97}
or \emph{mineral potential} \citep{Porwal2010,fordetal19},
the predicted relative frequency of undiscovered
deposits as a function of geological and geochemical
covariates.

Similarly, the dependence of a spatial point pattern on
spatial covariates is described by a point process model.
In many such models, the intensity function of the model 
is a function of the covariates, in the form
\begin{equation}
  \label{eq:PPM}
  \lambda(u) = \rho(Z_1(u), \ldots, Z_m(u)), \quad u \in W
\end{equation}

The function $\rho$ in \eqref{eq:SDM} or \eqref{eq:PPM}
can be estimated from data
by parametric model-fitting or non-parametric estimation
in various ways. Parametric modelling and model-fitting
are discussed in \citet[Chap.\ 9]{baddrubaturn15}.
Estimators of resource selection functions
are discussed in \citet{manlmcdothom93}.
In the case of a single explanatory variable $Z$,
nonparametric estimators of the function $\rho$ were
described by \citet{baddchansongturn12,badd18iamg,guan08,guanwang10}.

The left panel of Figure~\ref{F:murRho} shows three different estimates of the
function $\rho$ for the Murchison gold data using the
distance to nearest fault as the covariate $Z$.
We have used the point process formulation.
That is, we assume the point process intensity $\lambda(u)$ at a location $u$
is related to the covariate $Z(u)$ by $\lambda(u) = \rho(Z(u))$,
where $\rho$ is a function to be estimated. 
Thick solid lines show
the kernel smoothing estimate of $\rho$ proposed in \citet{baddchansongturn12}.
Dashed lines show the maximum likelihood estimate
for the parametric model $\rho(d) = \exp(\beta_0 + \beta_1 d)$
under the Poisson assumption
corresponding to \eqref{eq:loglinear}. Thin solid lines show
a monotonic regression estimate proposed by \citet{sage82}
and adapted to this context by \citet[Sec.\ 5]{badd18iamg}.
The broad agreement between these estimates suggests that
the parametric model would be adequate.

\begin{figure}[!h]
  \centering
  \centerline{
    \includegraphics*[width=0.35\refwidth,bb=0 0 410 420]{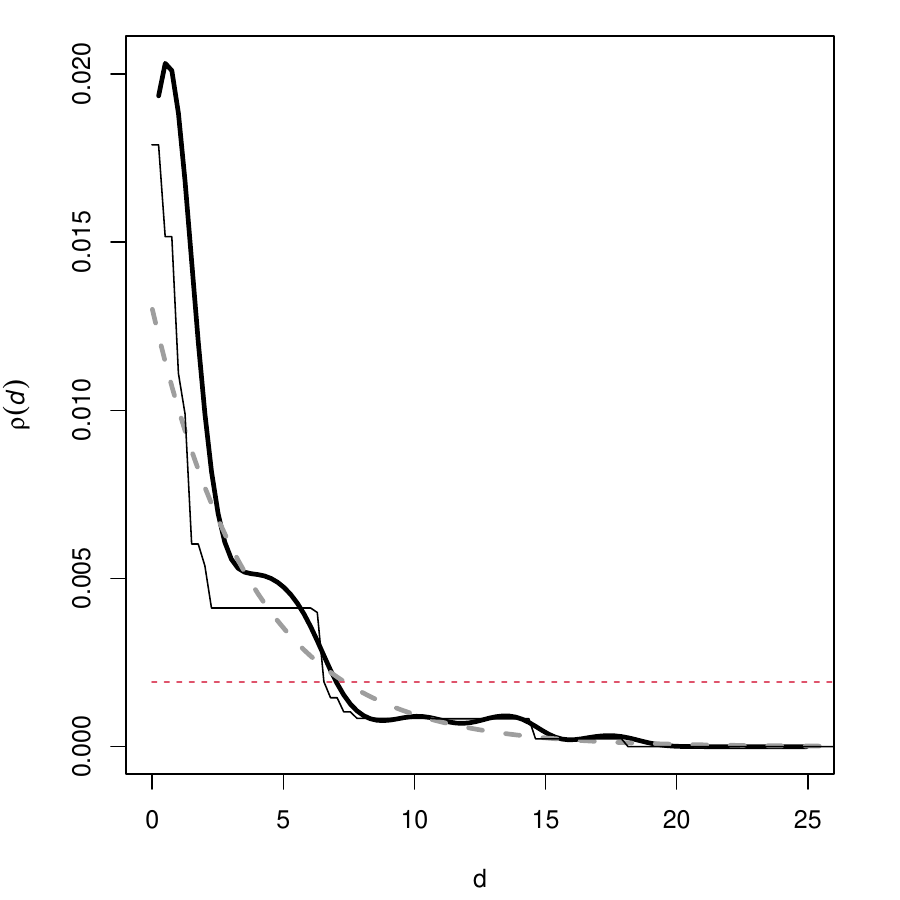}
    \includegraphics*[width=0.35\refwidth,bb=0 0 410 420]{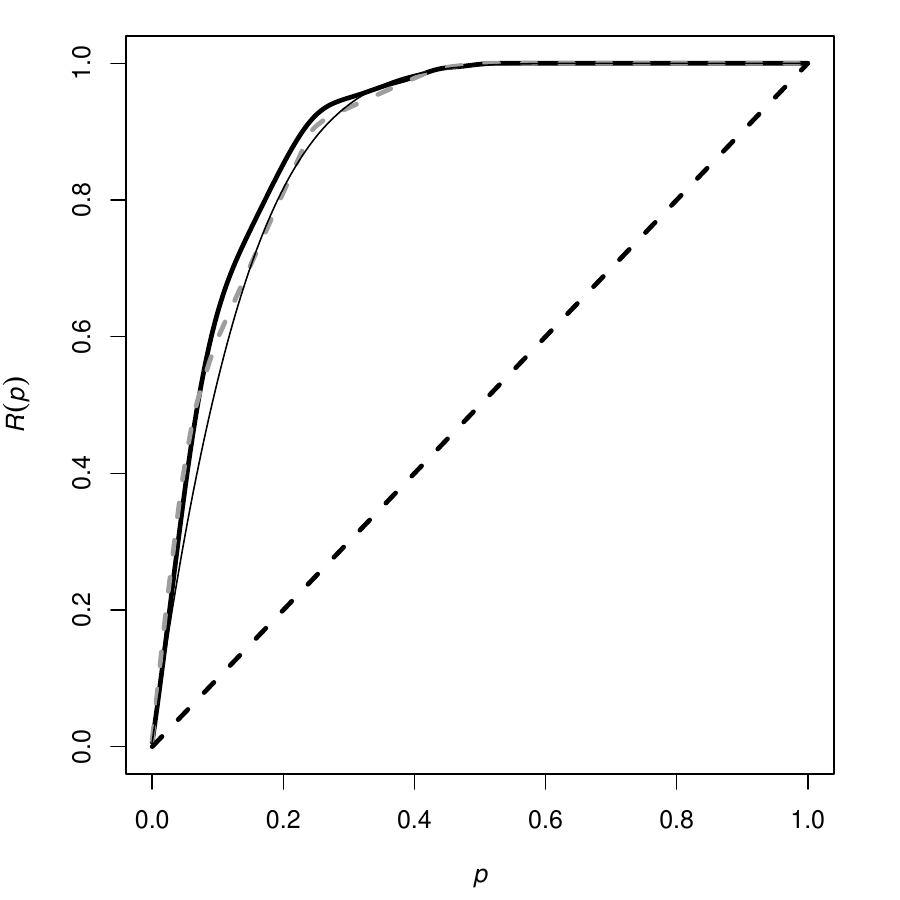}
  }
  \caption{
    \emph{Left:}
    Estimates of the function $\rho(d)$ expressing the 
    intensity of gold deposits as a function of distance to the nearest
    fault in the Murchison data.
    \emph{Right:} corresponding C-ROC curves, computed from left panel using
    Proposition~\protect{\ref{P:dR=rho}}
    equation~\protect{\eqref{eq:R=int.rho}}.
    Thick solid lines: kernel smoothing estimate
    \protect{\citep{baddchansongturn12}}.
    Thin solid lines: monotonic estimate 
    \protect{\citep{sage82,badd18iamg}}.
    Dashed lines: maximum likelihood estimate of loglinear
    function $\rho(d) = \exp(\beta_0 + \beta_1 d)$
    corresponding to logistic regression of presence-absence responses
    (\protect{\citealp[Chap.\ 9]{baddrubaturn15}},
    \protect{\citealp{baddetal10}}).
  }
  \label{F:murRho}
\end{figure}

Related concepts include the``continuous Boyce index''
proposed by \citet{hirzetal06}, which 
is an estimator of $\rho(z)/\Lambda$.

\subsection{Connection between $\rho$ and C-ROC}
\label{S:rhohat:ROC}

Here we establish a connection between $\rho$ and the C-ROC curve.
We shall focus on the continuous space (point process) case;
the discrete case (presence-absence data) is similar.

For convenience we impose regularity assumptions.
For a spatial covariate $Z$, we shall assume that 
$\FP(z)$ is differentiable. A sufficient condition
is that $Z: W \to \R$ be differentiable with nonzero gradient
\cite[Appendix A.1]{baddchansongturn12}.

\begin{prop}
  \label{P:dR=rho}
  Let $\lambda(u)$ denote the intensity function of a point process
  $\bX$, and assume that the intensity is a function
  of a single spatial covariate $Z$, that is, 
  \begin{equation}
    \label{eq:lam=rho}
    \lambda(u) = \rho(Z(u)), \ u\in W.
  \end{equation}
  Assume the regularity assumptions stated above.
  Then the theoretical C-ROC curve based on $Z$ is differentiable, with slope
  \begin{equation}
    \label{eq:dR=rho}
    \frac{{\rm d}}{{\rm d} p} R_{Z,\lambda}(p) = \frac{1}{\kappa}\rho(\FP^{-1}(p))
  \end{equation}
  where $\kappa = \Lambda/|W|$ is the mean intensity of $\bX$,
  and $\FP^{-1}$ is the left-continuous inverse function of
  $\FP(t)$ defined in \eqref{eq:FPhat:cts}.
  Here $\Lambda$ is the expected number of points falling in $W$,
  \begin{equation}
    \label{eq:Lambda}
    \Lambda = \int_W \lambda(u) \dee u 
            = \int_W \rho(Z(u)) \dee u 
            = \int_{-\infty}^\infty \rho(z) \dee F_Z(z) 
            = - \int_{-\infty}^\infty \rho(z) \dee\FP(z).
  \end{equation}
  The function in \eqref{eq:dR=rho}
  is a probability density on $[0,1]$.
  Conversely the theoretical C-ROC curve is determined by $\rho$ through
  \begin{equation}
    \label{eq:R=int.rho}
    R_{Z,\lambda}(p) = \frac 1 \kappa \int_0^p \rho(\FP^{-1}(v)) \dee v
    = - \frac 1 \kappa \int_{-\infty}^{\FP^{-1}(p)}  \rho(z) \dee \FP(z).
  \end{equation}
\end{prop}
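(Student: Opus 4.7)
The plan is to pass from the two-dimensional integrals defining $\TP(t)$ and $\Lambda$ to one-dimensional integrals in $z$ via the pushforward of Lebesgue measure on $W$ through $Z$, and then apply the chain rule to $R_{Z,\lambda}(p) = \TP(\FP^{-1}(p))$. First I would introduce $F_0(t) = 1 - \FP(t) = |W|^{-1}\int_W \indicate{Z(u) \le t}\dee u$, the distribution function of $Z(U)$ for $U$ uniform on $W$, with density $f_0(t) = F_0'(t) = -\FP'(t)$ available by the regularity hypothesis. The standard change-of-variables identity then reads $\int_W g(Z(u))\dee u = |W|\int g(z) f_0(z)\dee z$ for any measurable $g \ge 0$.

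Applying this with $g = \rho$ gives $\Lambda = |W|\int \rho(z) f_0(z)\dee z$ and hence $\kappa = \Lambda/|W| = -\int \rho(z)\dee\FP(z)$, which is \eqref{eq:Lambda}. Applying it with $g(z) = \rho(z)\indicate{z > t}$ and dividing by $\Lambda$ yields the one-dimensional representation
\[
\TP(t) = \frac{1}{\kappa}\int_t^{\infty}\rho(z) f_0(z)\dee z,
\]
so by the fundamental theorem of calculus $\TP'(t) = -\rho(t) f_0(t)/\kappa$. Under the regularity hypothesis $\FP$ is strictly decreasing with $\FP'(t) = -f_0(t) < 0$, so $\FP^{-1}$ is differentiable with $(\FP^{-1})'(p) = -1/f_0(\FP^{-1}(p))$. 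The chain rule then gives
\[
\frac{\dee}{\dee p}R_{Z,\lambda}(p) = \TP'(\FP^{-1}(p))\,(\FP^{-1})'(p) = \frac{\rho(\FP^{-1}(p))}{\kappa},
\]
proving \eqref{eq:dR=rho}. Integrating from $0$ to $p$ with $R_{Z,\lambda}(0) = \TP(+\infty) = 0$ delivers the first equality in \eqref{eq:R=int.rho}, and the Stieltjes form follows from the substitution $z = \FP^{-1}(v)$, $\dee v = \dee\FP(z)$.

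For the probability-density claim, $R'(p) \ge 0$ is immediate from $\rho \ge 0$ and $\kappa > 0$, and substituting $z = \FP^{-1}(p)$ in $\int_0^1 \rho(\FP^{-1}(p))/\kappa\,\dee p$ reduces it to $\kappa^{-1}\int \rho(z) f_0(z)\dee z = 1$ by the formula for $\kappa$ derived above. The main obstacle is the regularity bookkeeping: one needs $\FP$ to be absolutely continuous with $f_0 > 0$ throughout the essential range of $Z$, so that $\FP^{-1}$ exists and is differentiable and the factor $f_0(\FP^{-1}(p))$ cancels cleanly in the chain-rule step. This is exactly what the Lipschitz, nonzero-gradient hypothesis on $Z$ cited from \citet[Appendix A.1]{baddchansongturn12} supplies; everything else is a routine change of variables plus the fundamental theorem of calculus.
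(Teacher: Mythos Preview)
Your argument is correct and is essentially the same as the paper's: both push the spatial integrals through $Z$ to one-dimensional integrals (you via the pushforward density $f_0$, the paper via the equivalent observation that the $z_i$ form a point process on $\R$ with intensity $\rho(z)g_W(z)$, where $g_W = |W| f_0$), compute $\TP'(t)$ by the fundamental theorem, and then apply the chain rule to $\TP\circ\FP^{-1}$. Your version is slightly more complete in that you explicitly verify \eqref{eq:Lambda}, \eqref{eq:R=int.rho}, and the density claim, which the paper leaves implicit.
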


\begin{proof}
  For any realisation $\bx$ of the point process $\bX$
  consider the corresponding values $z_i = Z(x_i)$ of the covariate
  at the data points $x_i \in \bx$. Then \cite[p.\ 22]{dalevere88}
  the values $z_i$ constitute a
  point process on the real line,
  with intensity function $h(z) = \rho(z) g_W(z)$
  where $g_W$ is the derivative of the unnormalised spatial cumulative
  distribution function of $Z$ on $W$,
  that is, $g_W = G_W^\prime$
  where $G_W(z) = |W| (1 - \FP(z))$.
  Consequently
  \[
    \TP(z) = \frac 1 \Lambda \int_z^\infty h(z) \dee z 
    = \frac 1 \Lambda \int_z^\infty \rho(z) g_W(z) \dee z
  \]
  where
  \[
    \Lambda = \int_W \lambda(u) \dee u = \int_{-\infty}^\infty \rho(z) g_W(z) \dee z
  \]
  so that $\TP$ has derivative
  \[
    \TP^\prime(z) =
    \frac{{\rm d}}{{\rm d} z} \TP(z) = - \frac 1 \Lambda \rho(z) g_W(z)
    = \frac{|W|}{\Lambda} \rho(z) \FP^\prime(z).
  \]
  Using $R_{Z,\lambda}(p) = \TP(\FP^{-1}(p))$, 
  the derivative of $R_{Z,\lambda}(p)$ is
  \begin{equation}
    \frac{{\rm d}}{{\rm d} p} R_{Z,\lambda}(p)
    =  \frac{\TP^\prime(\FP^{-1}(p))}{\FP^\prime(\FP^{-1}(p))} 
    = \frac{
      \frac{|W|}{\Lambda}
      \rho(\FP^{-1}(p)) \FP^\prime(\FP^{-1}(p))
    }{
      \FP^\prime(\FP^{-1}(p))
        } 
    =  \frac{|W|}{\Lambda} \rho(\FP^{-1}(p)).
  \end{equation}
\end{proof}

\begin{cor}
  \label{COR:ROCshape}
  Under the assumptions of Proposition~\ref{P:dR=rho},
  \begin{enumerate}
  \item 
    If the C-ROC curve is piecewise linear,
    then the $\rho$ curve is piecewise constant.
  \item
    If $\rho$ is monotone increasing, the C-ROC curve is concave.
  \item
    If $\rho$ is monotone decreasing, the C-ROC curve is convex.
  \end{enumerate}
\end{cor}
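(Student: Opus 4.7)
The plan is to read off all three statements directly from equation~\eqref{eq:dR=rho} of Proposition~\ref{P:dR=rho}, which asserts $R_{Z,\lambda}'(p) = \kappa^{-1}\rho(\FP^{-1}(p))$. The only auxiliary fact required is that $\FP^{-1}$ is a monotone \emph{non-increasing} function of $p$ on $[0,1]$. This is immediate: since $\FP(t) = |W|^{-1}\int_W \indicate{Z(u)>t}\dee u$ is non-increasing in $t$, the left-continuous inverse inherits that monotonicity. Under the regularity hypothesis carried over from Proposition~\ref{P:dR=rho} (differentiability of $\FP$; equivalently $Z$ Lipschitz with nonzero gradient a.e.), $\FP$ is in fact strictly decreasing and continuous on the essential range of $Z$, so $\FP^{-1}$ restricts to a strictly decreasing continuous bijection from $(0,1)$ onto that range.

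For statement~2, suppose $\rho$ is monotone increasing. Composing the increasing $\rho$ with the decreasing $\FP^{-1}$ gives a non-increasing function of $p$; by \eqref{eq:dR=rho}, $R_{Z,\lambda}'(p)$ is therefore non-increasing in $p$, which is the standard characterisation of concavity of $R_{Z,\lambda}$ on $[0,1]$. Statement~3 follows by the same argument with both monotonicities reversed: if $\rho$ is monotone decreasing, then $\rho\circ\FP^{-1}$ is non-decreasing, so $R_{Z,\lambda}'$ is non-decreasing and $R_{Z,\lambda}$ is convex.

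For statement~1, suppose $R_{Z,\lambda}$ is piecewise linear on $[0,1]$, so that $R_{Z,\lambda}'$ is piecewise constant. By \eqref{eq:dR=rho}, $\rho\circ\FP^{-1}$ is piecewise constant on $[0,1]$. Since $\FP^{-1}$ is a strictly decreasing bijection onto the range of $Z$, the partition of $[0,1]$ into maximal intervals of constancy pulls back under $\FP^{-1}$ to a partition of the range of $Z$ into intervals; on each such interval $\rho$ takes a single value. Hence $\rho$ is piecewise constant on the essential range of $Z$, which is the relevant domain since values of $\rho$ outside that range do not affect the model $\lambda(u)=\rho(Z(u))$.

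The main obstacle is not mathematical depth but the careful bookkeeping in statement~1, in particular specifying the sense in which $\rho$ is ``piecewise constant'' (namely, on the range of $Z$ over $W$) and ensuring the pullback argument is clean. The regularity hypotheses inherited from Proposition~\ref{P:dR=rho} are precisely what rule out pathologies such as atoms of $Z$ (jumps of $\FP$) or gaps in the range of $Z$ (flat stretches of $\FP$) that would otherwise complicate the correspondence between intervals on the $p$-axis and intervals on the $z$-axis.
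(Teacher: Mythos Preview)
Your proof is correct and follows essentially the same approach as the paper, which simply remarks that statement~1 is a direct application of \eqref{eq:dR=rho} and that statements~2 and~3 are trivial applications of \eqref{eq:R=int.rho}. Your argument via the derivative formula \eqref{eq:dR=rho} for all three parts is equivalent, and your added care about the monotonicity of $\FP^{-1}$ and the domain of $\rho$ in statement~1 makes explicit what the paper leaves implicit.
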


\begin{proof}
  The first statement is a direct application of \eqref{eq:dR=rho},
  or see \citet{baddetal21thresh}.
  The remaining statements are trivial applications of \eqref{eq:R=int.rho}.
\end{proof}

\begin{lemma}
  \label{L:concave:model}
  Under the regularity assumptions stated above,
  the model-predicted M-ROC curve is always concave.
\end{lemma}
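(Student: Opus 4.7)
The plan is to recognize that the model-predicted M-ROC curve is exactly a C-ROC curve in disguise, namely the C-ROC curve for the covariate $Z(u) := \lambda(u)$, and then apply Proposition~\ref{P:dR=rho} and Corollary~\ref{COR:ROCshape} already established in Section~\ref{S:rhohat}.

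First I would verify the identification. Comparing \eqref{eq:TPFPM:cts} with \eqref{eq:TP:true:cts} and \eqref{eq:FPhat:cts}, the two pairs of formulas coincide upon setting $Z(u) = \lambda(u)$. Therefore $R_{\lambda,\lambda}(p) \equiv R_{Z,\lambda}(p)$ with this particular choice of covariate. Second, I would observe that under this identification the function $\rho$ in the relation $\lambda(u) = \rho(Z(u))$ (which is the hypothesis \eqref{eq:lam=rho} of Proposition~\ref{P:dR=rho}) satisfies $\rho(t) = t$ on the range of $\lambda$, i.e.\ $\rho$ is (the restriction of) the identity, which is trivially monotone increasing.

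Third, I would invoke Corollary~\ref{COR:ROCshape}(2) to conclude that $R_{\lambda,\lambda}$ is concave. Alternatively, one can argue directly from \eqref{eq:dR=rho}: the slope of the curve at $p$ is
\begin{equation*}
\frac{\dee}{\dee p} R_{\lambda,\lambda}(p) \;=\; \frac{1}{\kappa}\,\rho(\FP^{-1}(p)) \;=\; \frac{1}{\kappa}\,\FP^{-1}(p),
\end{equation*}
and since $\FP(t) = |W|^{-1}\int_W \indicate{\lambda(u)>t}\dee u$ is non-increasing in $t$, its left-continuous inverse $\FP^{-1}$ is non-increasing in $p$; hence the slope of $R_{\lambda,\lambda}$ is non-increasing, i.e.\ $R_{\lambda,\lambda}$ is concave.

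The only delicate point is verifying that the regularity assumptions of Proposition~\ref{P:dR=rho} apply when we take $Z = \lambda$: we need $\FP$ (now built from $\lambda$ rather than from $Z$) to be differentiable, which follows from the stated sufficient condition that $\lambda$ be Lipschitz continuous with nonzero gradient almost everywhere. I would flag this as the main (mild) obstacle and note that in cases where $\lambda$ is only piecewise continuous or has plateaus, the ROC is still concave but possibly only piecewise differentiable; the monotonicity argument for the slope goes through unchanged because it only uses monotonicity of $\FP^{-1}$.
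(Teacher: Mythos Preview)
Your proof is correct and follows essentially the same approach as the paper: define the artificial covariate $Y(u)=\lambda(u)$, observe that $\rho$ is then the identity (hence increasing), and invoke Corollary~\ref{COR:ROCshape}(2) to conclude concavity. Your treatment is in fact slightly more careful than the paper's, since you explicitly cite the correct corollary, give the alternative direct monotonicity argument via $\FP^{-1}$, and flag the regularity issue for the new covariate $Z=\lambda$.
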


\begin{proof}
  Let $\lambda(u)$ be the intensity function of the model,
  which may depend on any number of covariates $Z_1,\ldots,Z_m$.
  Define the artificial covariate $Y(u)$ to equal $\lambda(u)$.
  Then trivially $\lambda(u) = \rho(Y(u))$ where $\rho(y) = y$ is
  a strictly increasing real function. By Proposition~\ref{P:dR=rho}
  the model-predicted ROC curve of $Y$ is concave.
  But the model-predicted ROC curve of $Y$ is equivalent to the
  model-predicted model ROC curve. Hence the latter is concave.
\end{proof}

The results above were stated for the ``true'' or ``expected'' ROC curve
and the corresponding ``true'' function $\rho$,
and for the ``true'' or ``predicted'' ROC curve based on a model.
The results may also be exploited when empirical estimates
are available. For example, when empirical estimates of $\rho$ are available,
corresponding estimates of the ROC curve may be computed
using equation \eqref{eq:R=int.rho}.
The right panel of Figure~\ref{F:murRho} shows the
ROC curves computed, using equation \eqref{eq:R=int.rho},
from the estimates of $\rho$ in the left panel.

\subsection{Interpretation of results}

Proposition~\ref{P:dR=rho} provides further insight into properties of
the ROC curve. By assumption, $\rho(z)$ represents a general relationship
that does not depend on the window, and describes
how the point process depends on the covariate at every location.
In contrast, the C-ROC curve is expressed in equation \eqref{eq:R=int.rho}
in terms of $\rho$, $\kappa$ and $\FP$.
By definition, $\kappa$ and $\FP$ depend on the
choice of window $W$. This clarifies how the ROC curve depends on the
choice of window, and explains why instances of Simpson's Paradox can occur.

In principle, $\rho$ could be estimated from data observed within one region,
and the estimated relationship can be used to make predictions
for another region. This implicitly assumes that the same relationship
or ``law'' \eqref{eq:PPM} holds within both regions, which may be plausible
in some applications. Predictions made for a new region could include
predictions of the ROC curve: for this we require only the spatial CDF
of the covariate in the new region, and then we can apply
Proposition~\ref{P:dR=rho}.

\begin{figure}[!h]
  \centering
  \centerline{
    \includegraphics*[width=0.35\refwidth,bb=0 0 410 420]{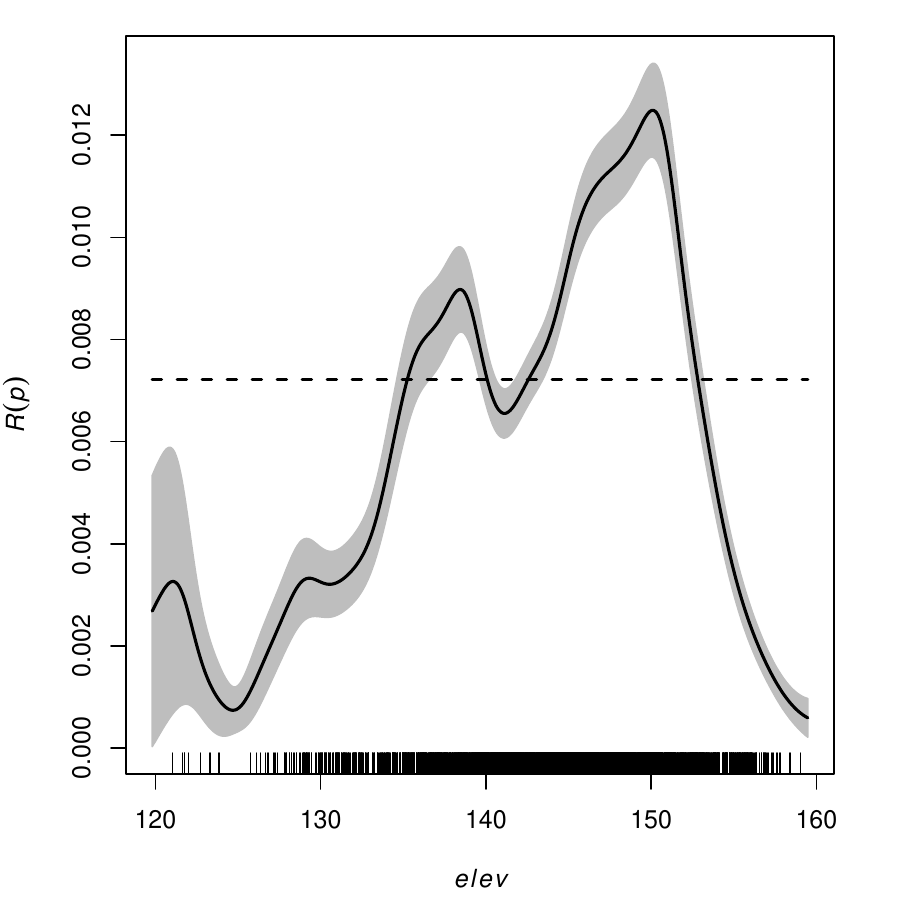}
    \hfill
    \includegraphics*[width=0.35\refwidth,bb=0 0 410 420]{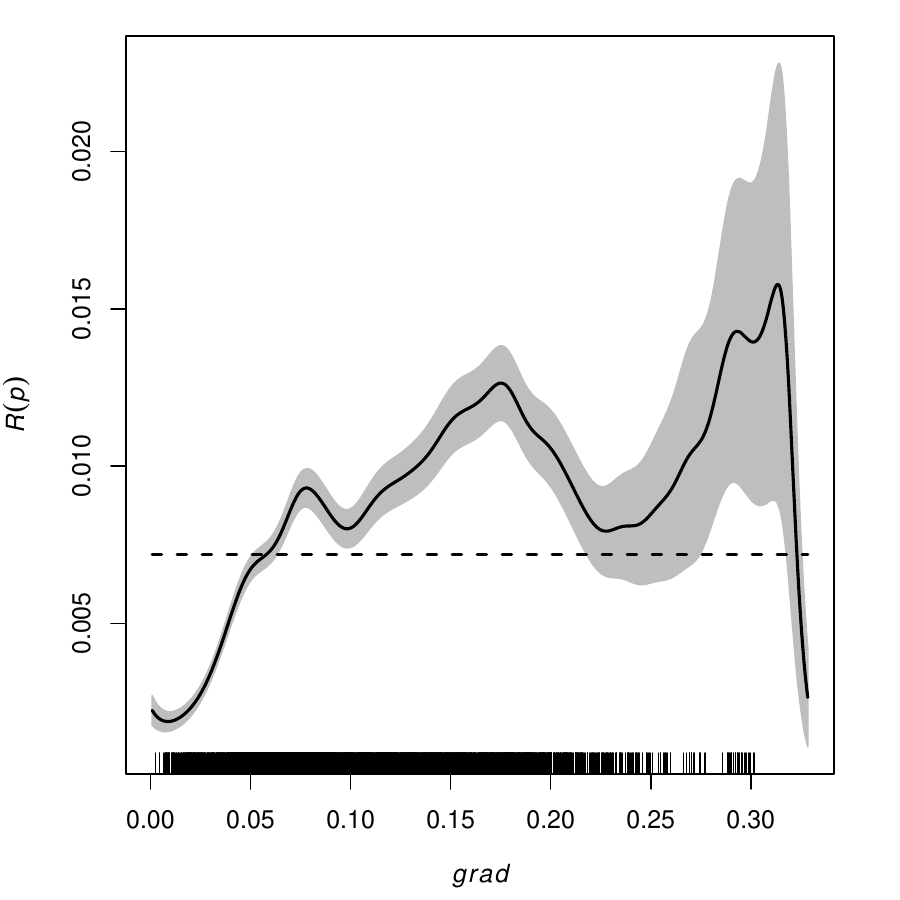}
  }
  \caption{
    Estimates of the function $\rho(d)$ expressing the 
    intensity of \textit{Beilschmiedia} trees as a function of
    terrain elevation (\emph{Left}) or as a function of
    terrain slope (\emph{Right}).
    Solid lines: kernel smoothing estimate
    \protect{\citep{baddchansongturn12}}.
    Grey shading: pointwise 95\% confidence interval.
  }
  \label{F:beiRho}
\end{figure}

Figure~\ref{F:beiRho} shows estimates of $\rho$ for the
\textit{Beilschmiedia} trees as a function of a single covariate,
terrain elevation in the left panel and terrain slope in the right panel.
The ``rug plots'' along the horizontal axes show the observed values of the
covariate $Z$ at the data points, and the grey shading shows the pointwise
asymptotic 95\% confidence interval described in \cite{baddchansongturn12}.
Refer to Figure~\ref{F:bei10ROCmur} for the corresponding ROC curves.

The left panel of Figure~\ref{F:beiRho} indicates that 
forest density is much higher at higher elevations, except for the very
highest elevations. The right panel shows that forest density is very low
on flat terrain, and that there is a slightly greater forest density on
more sloping terrain. These observations 
could be influenced by the topography and hydrography 
of the current study area: the highest elevation in the study area
is achieved on a plateau, and the lowest elevation appears to be a river,
where trees of this type do not grow. 

The left panel of Figure~\ref{F:beiRho} suggests that
forest density is not a monotone increasing function
of terrain elevation. 
In the light of Corollary~\ref{COR:ROCshape},
this is confirmed retrospectively by
the left panel of Figure~\ref{F:bei10ROCmur}. 

These examples militate in favour of using estimates of $\rho$,
rather than the ROC curve, for the initial exploration and modelling
of dependence on covariates.
The ROC is cumulative, and somewhat difficult to interpret,
while $\rho$ is proportional to the derivative of ROC, and is more
straightforward. It is important to assess whether $\rho(z)$ can be
assumed to be a monotone function of $z$.
The true merit of $Z$ may be considerably greater than that
  suggested by AUC if $\rho(z)$ is not monotone.
If there is only a single covariate $Z$, then we should use ROC
  if there's good reason to believe that increasing values of $Z$
  are increasingly favorable. Otherwise, we should use $\rho(z)$ to
  investigate. (Or compare empirical ROC with that
  based on monotone regression of $\rho(z)$ (equivalent to convex
  regression of ROC): discrepancies suggest non-monotone $\rho(z)$.

An advantage of ROC over $\rho$ is that ROC gives a sense of how
  much (area fraction) of the spatial domain is
  involved in each peak or trough of $\rho$.

\section{Extensions of ROC curves}
\label{S:extensions}

\subsection{Restriction to a subset}
\label{ss:restriction}

As mentioned in Sections \ref{S:ROCcovar:region} and \ref{S:ROCmodel:subregion},
both the C-ROC and M-ROC depend on the spatial domain over which
they are calculated.
This important fact can be turned to our advantage.
Calculation of the ROC and AUC can be restricted
to a subset $B$ of the spatial domain $W$,
simply by restricting the domains of summation and integration
in any of the definitions \eqref{eq:TPhat:finite}--\eqref{eq:FPhat:finite},
\eqref{eq:TPhat:cts}--\eqref{eq:FPhat:cts},
\eqref{eq:TPFPmodel:finite},
\eqref{eq:TPFPmodel:cts} or
\eqref{eq:TPFPM:cts}
Restriction to subsets can be useful
in the same way that it is useful to break down an aggregate summary statistic
into summaries for sub-populations.

\begin{figure}[!h]
  \centering
  \centerline{
    \includegraphics*[width=0.4\refwidth,bb=0 0 410 420]{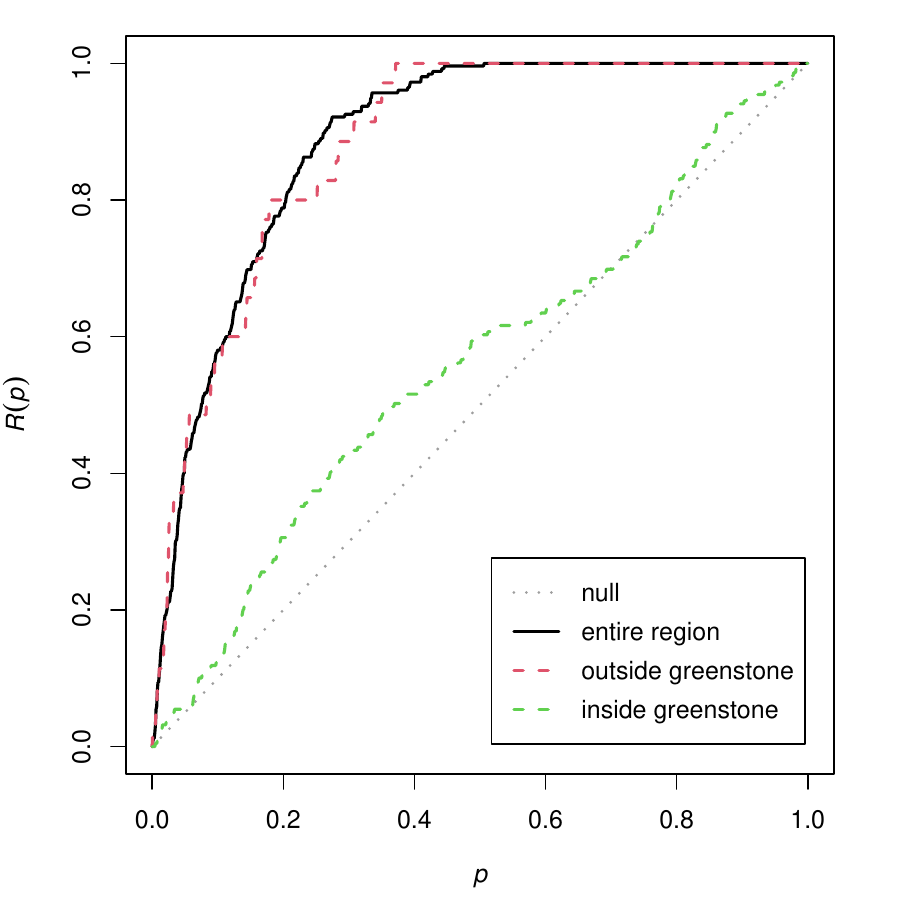}
  }
  \caption{
    Empirical C-ROC curves $\lo R_{Z, \bx}(p)$ for subsets of the Murchison survey.
    \emph{Black:} entire region;
    \emph{red:} outside greenstone outcrop;
    \emph{green:} inside greenstone outcrop.
    Covariate is distance to nearest fault, with small distances
    interpreted as more favorable to gold.
  }
  \label{F:murROCsubsets}
\end{figure}

Figure~\ref{F:murROCsubsets} shows the empirical ROC curves
for subsets of the Murchison survey, namely the subsets inside
and outside the greenstone outcrop. Within the greenstone outcrop, the
distance to nearest fault is not very informative. 
However, outside the
greenstone outcrop, the distance covariate
it is highly informative, and short distances are
especially highly prospective for gold.

These findings are expected if we consider the three-dimensional geological
structure near the surface. Our data are the two-dimensional projected locations
of the gold deposits, and our covariates are the two-dimensional
surface profiles of the solid greenstone and the fault planes.
Assume that, in three dimensions, gold deposits are highly likely
to lie inside greenstone, which in turn is highly likely to occur
near a geological fault.
Then in two dimensions, a location inside the greenstone outcrop
is highly prospective for gold; a location outside the greenstone
outcrop, but close to a geological fault, is more likely to lie above
greenstone below the surface, so it is highly prospective for gold.

\subsection{Weights on data points}
\label{ss:weighted}

For spatial point pattern data, numerical weights $w_i \ge 0$
may be attached to the individual data points $x_i$, $i=1,\dots,n$.
The weights could represent multiplicity or severity
(for example, number of people involved in each road accident),
cost or value (the total amount of gold in each gold deposit,
the estimated volume of each tree). They can also be used to adjust
for uneven sampling effort or uneven probability of detection.

Weighted versions of the ROC curves can be defined by introducing
numerical weights $w_i$ into the calculation of the numerators
of the true positive rates in each of the equations
\eqref{eq:TPhat:cts},
\eqref{eq:TPFPmodel:cts}
and \eqref{eq:TPFPM:cts}.
In the denominators, $n = n(\bx)$ is replaced by $\sum_i w_i$.
In equation \eqref{eq:TPFPM:cts},
the intensity $\lambda(u)$ of the point process $\bX$
is replaced by the intensity of the random measure
assumed to have generated the points and weights
(technical details are beyond the scope of this article).

For an intuitive appreciation of the weighted ROC, suppose that
the data points are the home addresses of disease cases, and the weight
at each address $x_i$ is the number of cases at that address. Then the
weighted ROC of a spatial variable $Z$ for the weighted point pattern
of home addresses is equivalent to the un-weighted ROC of $Z$ for the
individual cases. The weighted AUC is the probability that a
randomly-selected \emph{case} has a higher value of $Z$ than
a randomly-selected spatial location.

\paragraph{Accounting for uneven sampling effort or detection probability}

In some applications the probability of detecting a point
is not constant but spatially varying. This may be a consequence of the
detection technique, or may occur if the space was explored
with uneven sampling effort. This contributes a bias to the observed
``true positive rate'' $\TPhat(t)$.

The bias can be corrected using the Horvitz-Thompson device,
that is, by weighting each observed point by the reciprocal of its
detection probability, $w_i = 1/q_i$ where $q_i$ is the probability that
a point at location $x_i$ would have been detected. This probability must be
known or calculable for all the observed points $x_i$, and must be nonzero
for all spatial locations. This weight is applied to the calculation
of $\TPhat(t)$ while the calculation of $\FPhat(t)$ is not changed.

For the ROC curve of a fitted model,
the fitted model should also take the uneven sampling effort into account.
In logistic regression or loglinear Poisson regression, the model should
include an offset term equal to the logarithm of the sampling effort.

\paragraph{Weighted presence-absence data}

It is also possible to introduce weights for presence-absence data by multiplying the indicator function by an individual weight for each of the presence pixels in the calculation of the true positive rate.
The weight may again indicate e.g. the number of cases in that pixel or the estimated volume of wood in that pixel.

\subsection{ROC relative to a baseline}
\label{S:ROC.baseline}

As previously mentioned for the C-ROC for point pattern data in Section~\ref{ss:rocData.cts} we have that $1-\FP(t)$ is the c.d.f.\ of the covariate value $Z(U)$ at a random location $U$ uniformly distributed in $W$.
However, in some contexts it may be more natural to consider another reference distribution than the uniform.
For example, consider the cases of a disease where the density of the susceptible population is proportional to a given baseline function $b(u)$ for $u\in W$.
Then we define $R_{Z,\bx,b}(p)$ the ``C-ROC relative to baseline $b(u)$'' as the ROC curve with false positive rate
\begin{align}
  \label{eq:FP:cts:b}
  \widehat\FP_{b}(t) &= \frac{
               \int_{W} \indicate{Z(u) > t} b(u) \dee u
               }{
               \int_{W} b(u) \dee u
               }.
\end{align}
and unchanged true positive rate \eqref{eq:TPhat:cts}.
This means that $1-\widehat\FP_b(t)$ is the c.d.f.\ of the covariate value
at a randomly selected location
with nonuniform probability density proportional to $b(u)$.
In the context of disease data, the horizontal axis is the ``cumulative fraction of susceptibles'' rather than the ``cumulative fraction of the survey area''.
Note that this only depends on $b(u)/\int_W b(v) \dee v$, so it is invariant under rescaling of the baseline.
To illustrate the point, consider the mucosa data shown in Figure~\ref{F:mucosa} in Section~\ref{S:background:spatial:casecontrol}.
We treat the ECL cells as the cases of interest and we can use the kernel
estimate of the intensity of non-ECL cells shown in
the left panel of Figure~\ref{F:mucorocDensityOther}
as the baseline proportional to the reference population density.
\begin{figure}[!h]
  \centering
  \centerline{
    \includegraphics*[width=0.52\refwidth,bb=65 110 415 350]{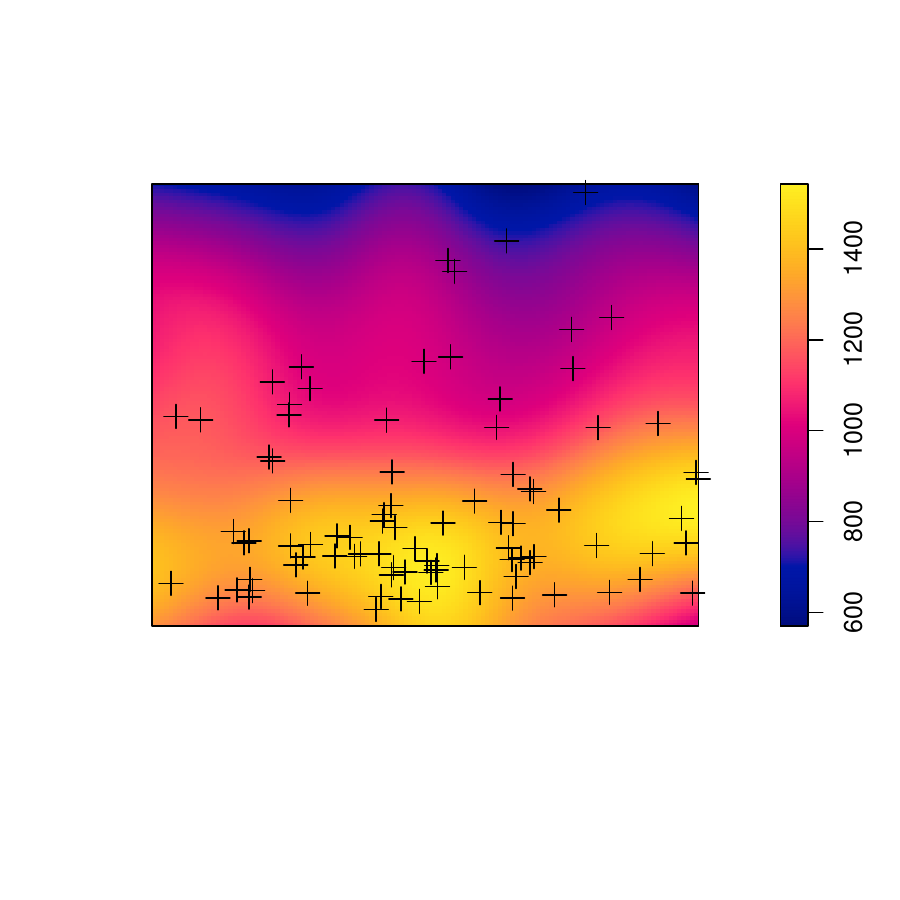}
    \hfill
    \includegraphics*[width=0.35\refwidth,bb=0 0 410 420]{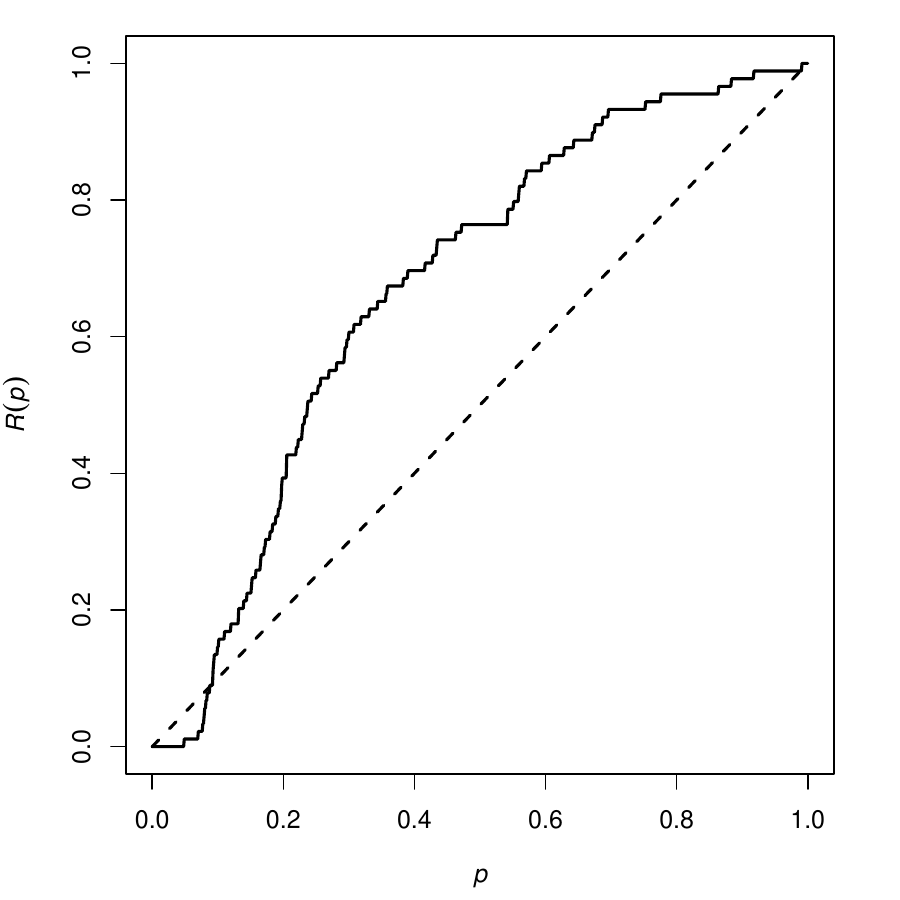}
  }
  \caption{
    Analysis of gastric mucosa data of Figure~\protect{\ref{F:mucosa}}
    using baseline-adjusted C-ROC.
    \emph{Left:} Kernel density estimate of intensity of non-ECL
    cells; ECL cells overlayed as black crosses.
    \emph{Right:} Baseline-adjusted C-ROC curve for
    distance to the stomach wall ($y$-coordinate) treating
    short distances as favorable for ECL cells,
    and using the intensity estimate of non-ECL cells as baseline.}
  \label{F:mucorocDensityOther}
\end{figure}%
The right panel of Figure~\ref{F:mucorocDensityOther} shows $R^<_{Z,\bx,b}$, where $Z$ is distance to stomach wall (y-coordinate), $\bx$ is the pattern of ECL cells, and $b(u)$ is the kernel density estimate of the intensity of non-ECL cells.
It clearly shows that (apart from very short distances) ECL cells appear closer to the stomach wall than cells in general.
Another way to think about this example is that the baseline $b(u)$ appears as the intensity of a
reference model, which stipulates that the intensity of ECL cells is proportional to the kernel density estimate of the intensity of non-ECL cells (or any parametric model for this intensity for that matter).
In general terms when $b(u)$ is the intensity of a reference model the interpretation of the horizontal axis of the ROC curve is ``cumulative fraction of cases/deposits/events predicted to be captured by reference model''.
Other appropriate examples of the baseline $b(u)$ include
(1) the Jacobian of the spatial coordinate system
(so the horizontal axis is 'cumulative fraction of true area');
(2) the spatially-varying cost of exploration
(so the horizontal axis is 'cumulative fraction of total cost of exploration');
and (3) the micro-scale percentage of area that is habitable or usable,
for example the spatially-varying fraction of area occupied by small patches of dry land in a swamp (so the horizontal axis is 'cumulative fraction of
habitable area').

The baseline could also be the fitted intensity of a model
that does not involve the covariate $Z$.
This idea is explored further in Section~\ref{ss:dropROC} below. 
 
For pixel data as in Section~\ref{ss:rocData.pixel} we can make an analogous definition of the ROC relative to baseline weights $b_j \ge 0$, for each pixel $Q_j$.
Define the baseline-adjusted false positive rate by
\begin{align}
  \label{eq:falsepos:b:copy}
  \widehat\FP_b(t) &=  \frac{
             \sum\nolimits^{J}_{j=1} b_j (1-y_j) \ \hat y_j
             }{
             \sum\nolimits_{j=1}^{J} b_j (1-y_j)
             },
\end{align}
where, as before, $y_j$ is the presence-absence indicator and $\hat y_j = \indicate{z_j > t}$ is the classifier.
Combining this with the usual empirical true positive rate $\TPhat(t)$ from \eqref{eq:TPFPhat:covar}
yields the C-ROC curve $R_{Z,\by,\mathbf{b}}$ where $\mathbf{b} = (b_1, \dots, b_J)$ is the vector of baseline weights for each pixel.
In the same way any of the other definitions of ROC curves (M-ROC, model-predicted ROC curves, ...)
can be relative to a baseline by introducing pixel weights or a weight function in the definition
of the false positive rate.

\subsection{Partial ROC curve for dropping or adding an explanatory variable}
\label{ss:dropROC}

When we have fitted a model involving several covariates,
we can construct an array of plots representing the effect
of dropping each existing term in the model.
Each panel is an ROC plot for the covariate in question,
with the baseline being the fitted intensity when this covariate is
removed. An ROC curve lying close to the diagonal line suggests that the
covariate can be dropped from the model.
This is analogous to a partial residual plot, so we shall call it
the \emph{partial ROC}, and the associated AUC value is the \emph{partial AUC}.

\begin{figure}[!hbt]
  \centering
  \centerline{
    \includegraphics*[width=0.3\refwidth,bb=0 0 410 420]{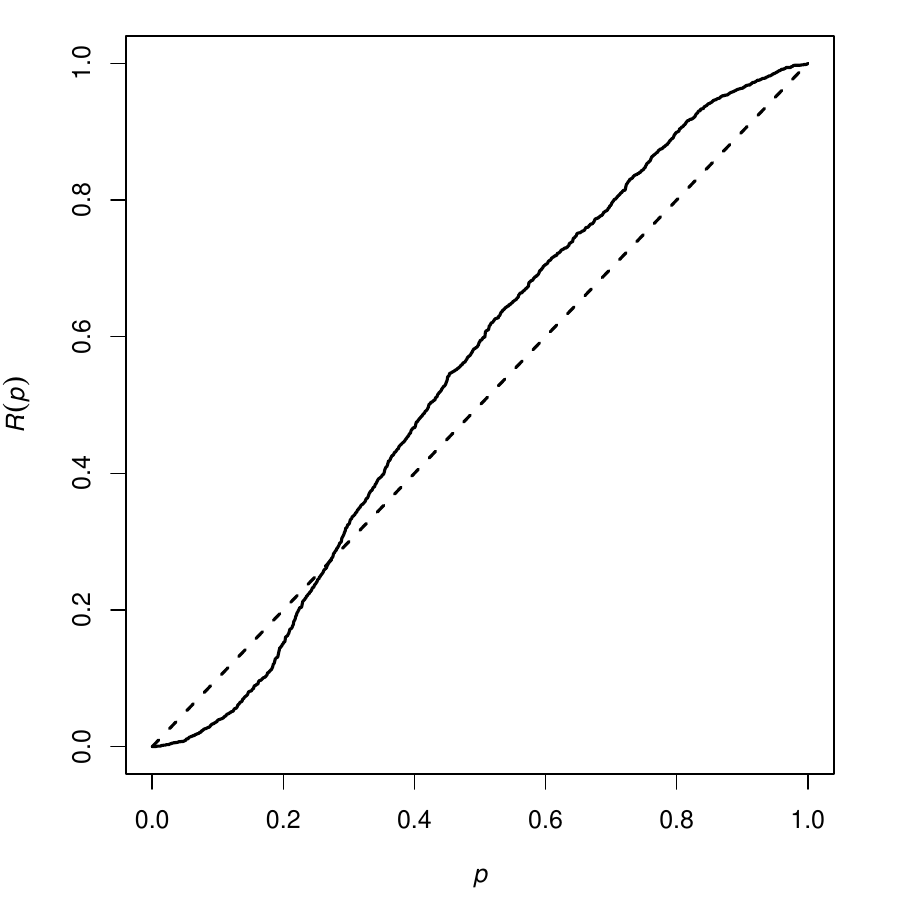}
    \includegraphics*[width=0.3\refwidth,bb=0 0 410 420]{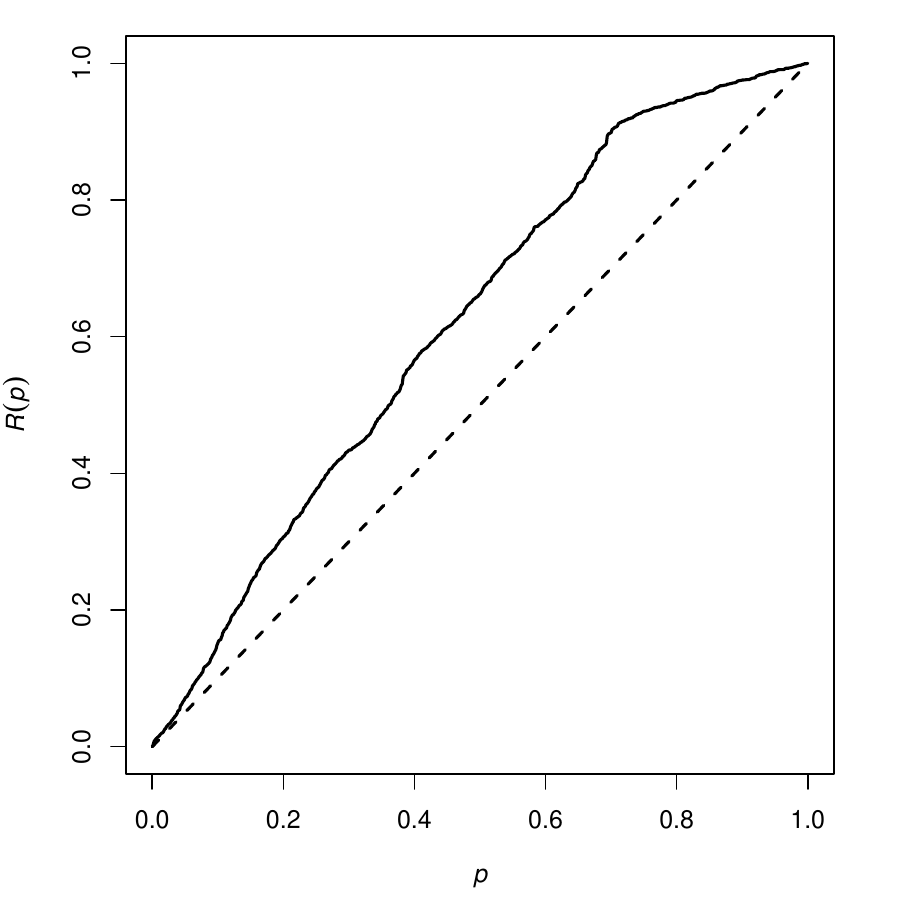}
  }
  \caption{
    Partial ROC curves for dropping covariates from a fitted model.
    Poisson point process, additive loglinear model
    for the \textit{Beilschmiedia} data
    involving the covariates \texttt{Elevation} and \texttt{Gradient}.
    \emph{Left:} ROC curve for dropping \texttt{Elevation}.
    \emph{Right:} ROC curve for dropping \texttt{Gradient}.
  }
  \label{F:beidrop}
\end{figure}

Figure~\ref{F:beidrop} shows the partial ROC curves for dropping the
Elevation and Gradient variables, respectively, from the 
additive loglinear Poisson model
for the \textit{Beilschmiedia} data
(treating large values as favorable to trees, in both panels).
It suggests that the Elevation variable could be dropped
without compromising the model.
corresponding partial AUC values are 0.54 for elevation and 0.62 for slope.
These plots are visually very similar to the C-ROC plots with
constant baseline; maximum discrepancies are 0.072 for elevation,
0.026 for slope.

Similarly, given a fitted model and a list of covariates which were \emph{not}
included in the fitted model, we can construct an array of ROC plots
representing the effect of \emph{adding} each one of the additional covariates
to the fitted model. For each curve, the baseline is the intensity
of the \emph{original} fitted model.
A partial ROC curve lying close to the diagonal line
suggests that the covariate does not improve the ranking ability of the
fitted model and should not be added to the model. 

\begin{figure}[!hbt]
  \centering
  \centerline{
    \includegraphics*[width=0.3\refwidth,bb=0 0 410 420]{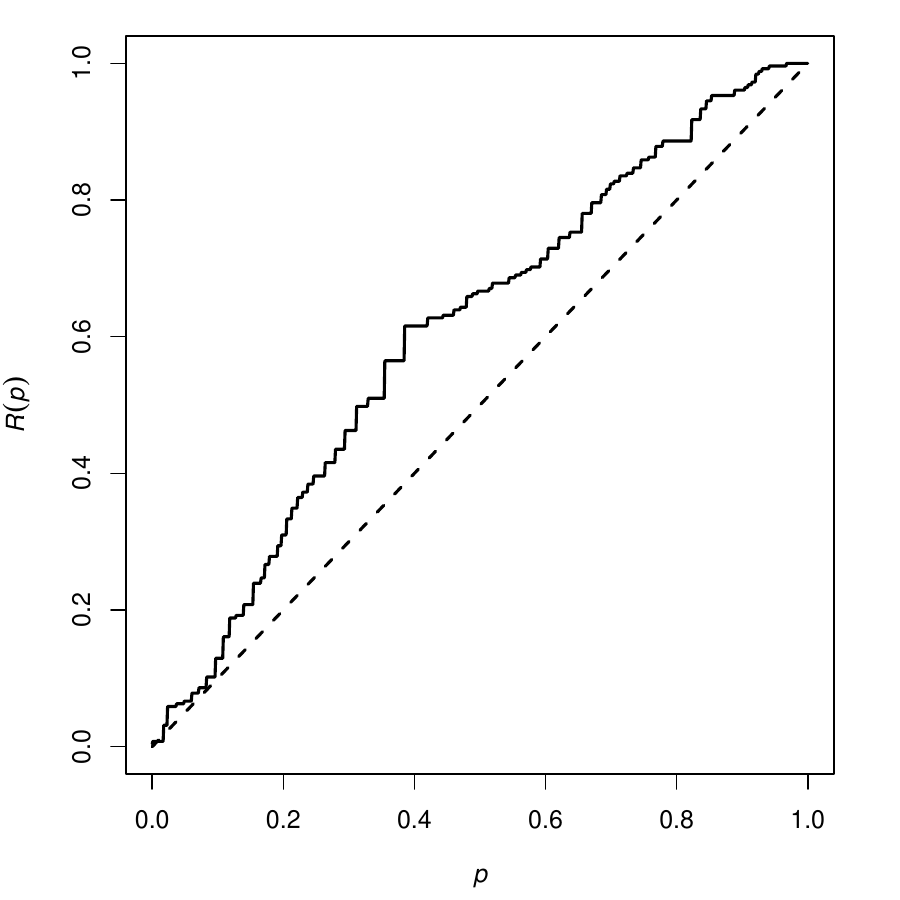}
    \includegraphics*[width=0.3\refwidth,bb=0 0 410 420]{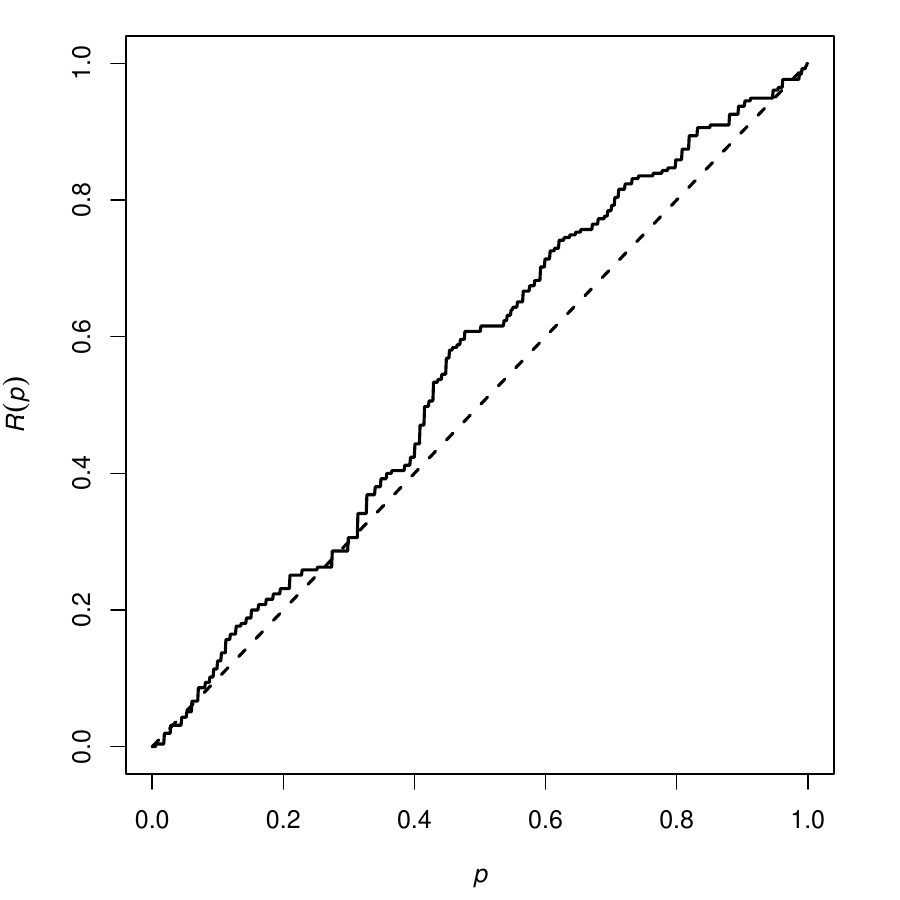}
  }
  \caption{
    Partial ROC curves for adding covariates to a fitted model.
    Poisson point process, additive loglinear model
    for the Murchison gold data 
    involving greenstone indicator and distance to nearest fault.
    \emph{Left:} ROC curve for adding the $x$ coordinate.
    \emph{Right:} ROC curve for adding the $y$ coordinate.
  }
  \label{F:muraddxy}
\end{figure}

Figure~\ref{F:muraddxy} shows the partial ROC curves for adding the
$x$ (Easting) and $y$ (Northing) coordinates, considered as spatial covariates,
to the additive loglinear model for the Murchison data involving
both the greenstone indicator and distance to nearest fault.
Corresponding partial AUC values are 0.61 and 0.55.
The left panel suggests that adding the $x$ coordinate as a covariate
would improve the ranking performance of the model, while the right panel
provides only weak support for adding the $y$ coordinate.
Since $x$ and $y$ coordinates are always available, plots similar to
Figure~\ref{F:muraddxy} could be useful as a model diagnostic.

\section{Model checking using ROC}
\label{S:modelcheck}

There are numerous claims that ROC curves can be used
for model validation \citep{nykaetal15}. Our reading of this literature
is that the term ``validation'' is not being used in the
statistical sense: rather, a model is said to be ``validated''
if its ranking performance is found to be sufficiently high. 

It is shown above that the empirical M-ROC does not itself contain
diagnostic information about the goodness-of-fit of the model. Instead,
the shape of the empirical M-ROC curve is largely indicative of the
spatial variation in the fitted model probabilities or intensity function.

Informal model diagnostics are typically based on comparison between 
observed and predicted values of a statistic. This suggests that
a comparison between the empirical and model-predicted ROC curves
would enable diagnostic checking of models.
This approach could be applied either to C-ROC or M-ROC curves.

For brevity we assume the spatial data are a point pattern $\bx$ in continuous
space (but similar comments apply in other cases).
Suppose a point process model has been fitted to $\bx$ yielding
fitted intensity function $\widehat\lambda$.

\subsection{Model checking using C-ROC}

First consider C-ROC curves. For any point process model with
intensity function $\lambda(u)$, the theoretical C-ROC curve 
$R_{Z, \lambda}$ for a covariate $Z$ is based on the theoretical
true positive rate \eqref{eq:TP:true:cts} and
false positive rate \eqref{eq:FPhat:cts}. For a fitted model with
intensity $\widehat\lambda(u)$, the ``model-predicted'' C-ROC curve
$R_{Z,\widehat\lambda}$ was defined similarly in Section~\ref{ss:rocData.cts}
by replacing $\lambda$ by $\widehat\lambda$ in \eqref{eq:TP:true:cts}.

For any spatial covariate $Z$ we could compare the empirical C-ROC curve
$R_{Z,\bx}$ and the model-predicted C-ROC curve $R_{Z,\widehat\lambda}$.
If the model
is correct, with true intensity $\lambda$, then under suitable conditions
under a large sample regime, both the functions
$R_{Z,\bx}$ and $R_{Z,\widehat\lambda}$
converge to $R_{Z,\lambda}$ as distribution functions
(i.e.\ pointwise at every continuity point of $R_{Z,\lambda}$).
Hence the discrepancy between the empirical C-ROC and model-predicted C-ROC
should be small.

If the model is misspecified,
and the true process has intensity $\lambda^\#$, then under suitable conditions
$R_{Z,\bx}$ converges to $R_{Z,\lambda^\#}$
but $R_{Z,\widehat\lambda}$ converges to $R_{Z,\lambda^\ast}$,
where $\lambda^\ast$ is determined by the behaviour of the model-fitting
procedure applied to the true point process. A discrepancy between the
empirical and model-predicted C-ROC curves
suggests that the model is misspecified. 

This technique can be applied to any covariate $Z$,
whether or not the model involves $Z$.
It may be useful for detecting departures from the model due to effects
that are not included in the model.

\begin{figure}[!htb]
  \centering
  \centerline{
  \includegraphics*[width=0.35\refwidth,bb=0 0 410 420]{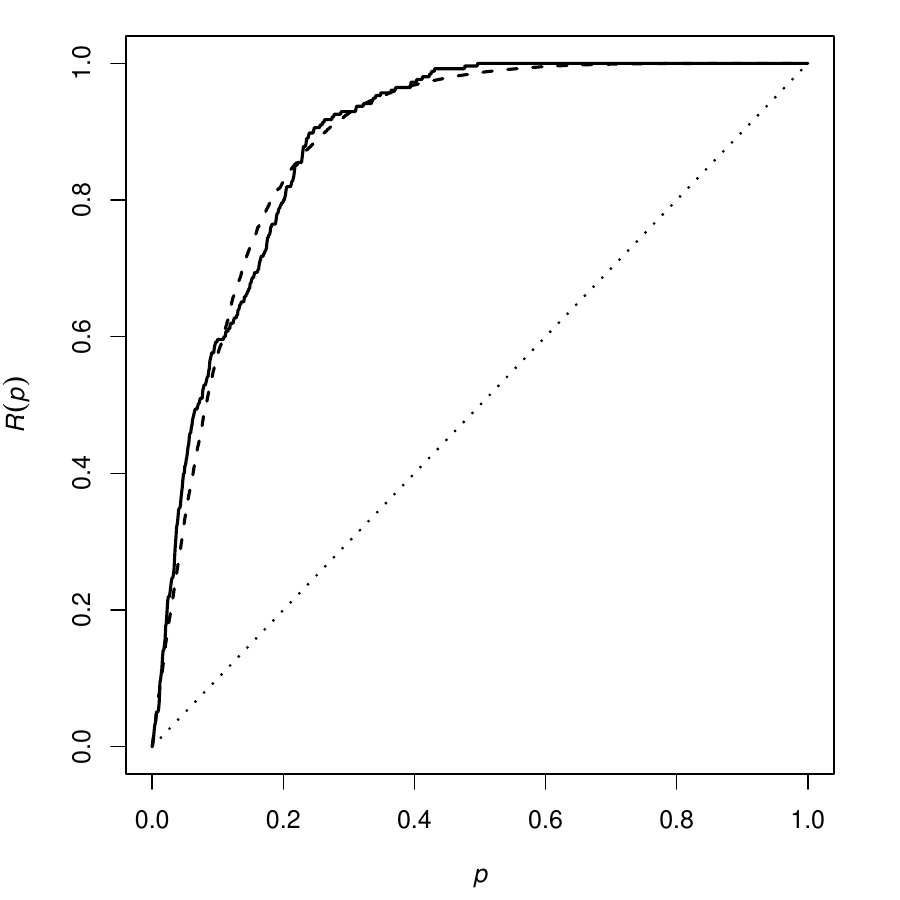}
    \includegraphics*[width=0.35\refwidth,bb=0 0 410 420]{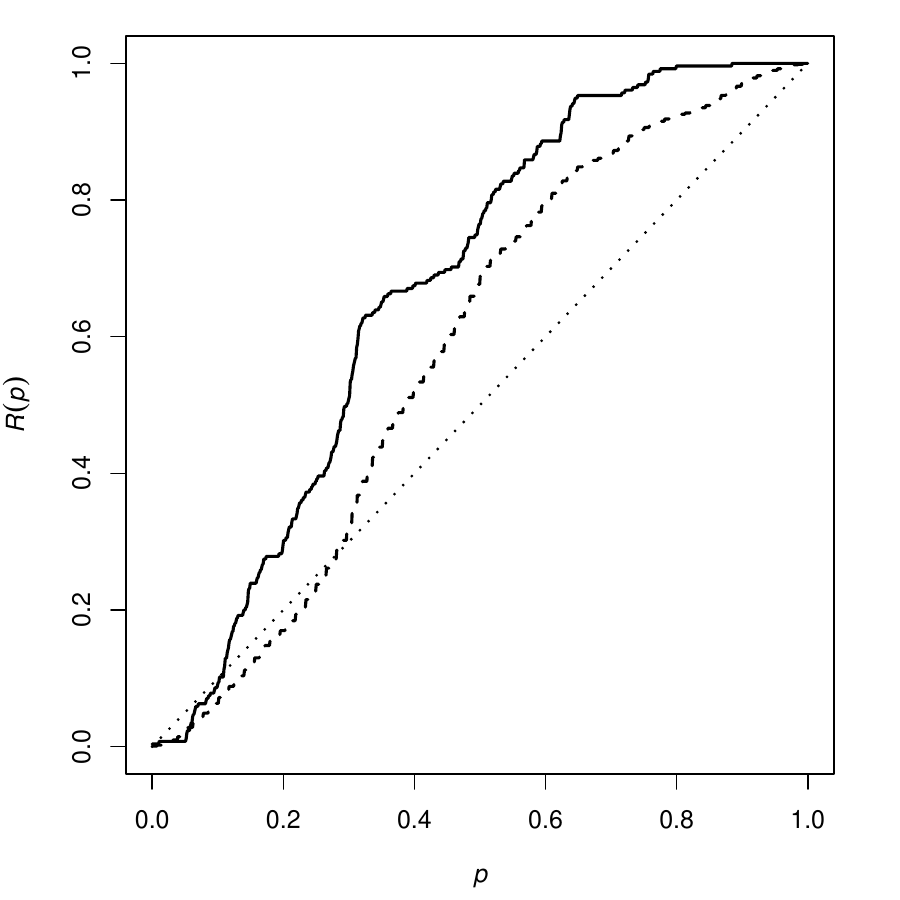}
  }
  \caption{
    Diagnostics using C-ROC.
    Model fitted to the Murchison data in which log intensity is
    an additive linear function of distance to nearest fault and of
    greenstone indicator.
    \emph{Left:}
    Empirical C-ROC curve $R_{Z,\bx}$ (solid lines)
    for the distance-to-nearest-fault covariate $Z$ in the Murchison data,
    and predicted C-ROC curve
    $R_{Z, \widehat\lambda}(p)$ (dashed lines)
    for the same covariate, suggesting a good fit.
    \emph{Right:}
    Empirical C-ROC curve $R_{Z,\bx}(p)$ (solid lines)
    and model-predicted C-ROC curve
    $R_{Z, \widehat\lambda}(p)$ (dashed lines)
    for the Murchison gold data,
    where the covariate $Z$ is the $x$ (Easting) coordinate,
    suggesting a mediocre fit.
  }
  \label{F:murROC:diagC}
\end{figure}

Figure~\ref{F:murROC:diagC} shows two applications of this diagnostic
to the Murchison data. The fitted model is a Poisson point process
in which log intensity is an additive linear function of
distance to nearest fault and greenstone indicator.
The left panel of Figure~\ref{F:murROC:diagC}
shows the empirical C-ROC and the model-predicted C-ROC
for the distance to nearest fault,
suggesting a good fit.
The right panel shows the empirical C-ROC and model-predicted C-ROC
for the $x$ (Easting) coordinate.
This plot suggests that this model does not capture the fact that
mineral deposits are relatively more frequent on the
eastern side of the study region.

\subsection{Model checking using M-ROC}

An alternative technique is to compare the empirical M-ROC $R_{\widehat\lambda,\bx}$
and the model-predicted M-ROC $R_{\widehat\lambda,\widehat\lambda}$
for a fitted model. If the model is true then under suitable conditions
these two curves both converge to the theoretical M-ROC $R_{\lambda,\lambda}$
for the model. As shown in Lemma~\ref{L:concave:model},
this limiting curve is concave.
If the model is misspecified then (with the same notation as above)
$R_{\widehat\lambda,\bx} \to R_{\lambda^\ast,\lambda^\#}$
while $R_{\widehat\lambda,\widehat\lambda} \to R_{\lambda^\ast,\lambda^\ast}$
so that there is a nonzero discrepancy, and the empirical M-ROC
may converge to a function which is not concave. Deviation between the two
curves, or non-convex shape, suggests that the model is not correct.

\begin{figure}[!htb]
  \centering
  \includegraphics*[width=0.35\refwidth,bb=0 0 410 420]{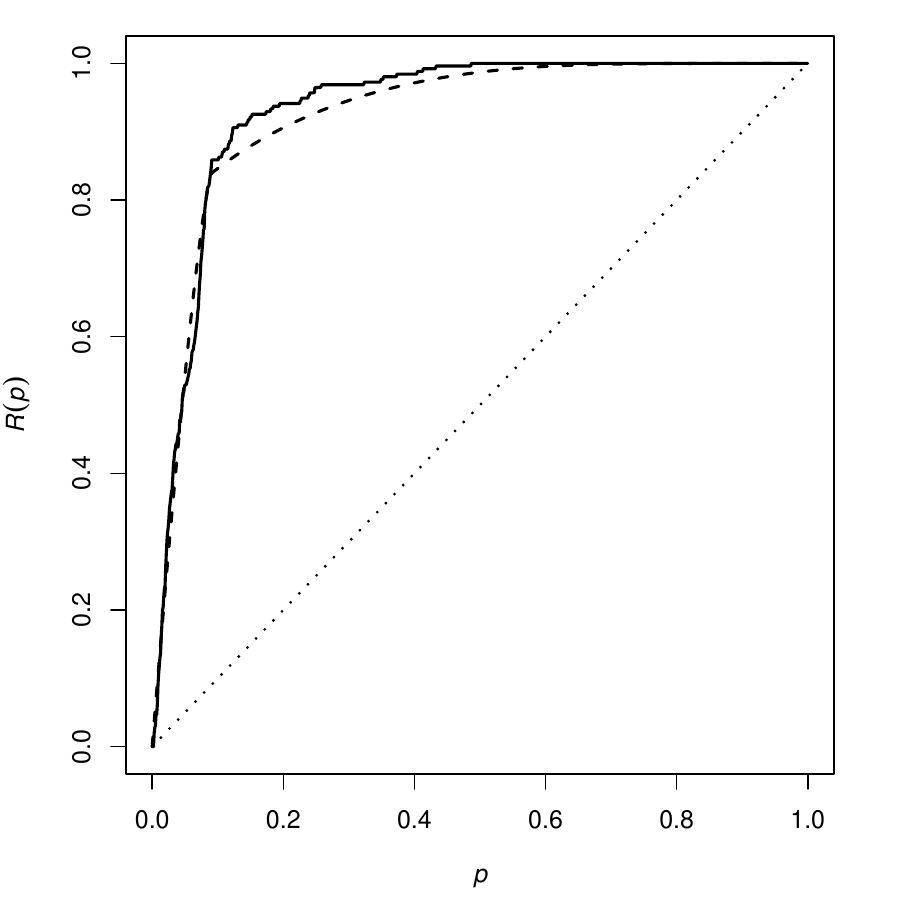}
  \caption{
    Empirical M-ROC curve $R_{\widehat\lambda,\bx}(p)$ (solid lines)
    and model-predicted M-ROC curve
    $R_{\widehat\lambda, \widehat\lambda}(p)$ (dashed lines)
    for the Poisson model of the Murchison gold deposits, in which
    log intensity is a linear function of distance to nearest fault
    and greenstone indicator.
  }
  \label{F:murROC:model.valid}
\end{figure}

Figure~\ref{F:murROC:model.valid} shows the
M-ROC curve $R_{\widehat\lambda,\bx}(p)$ (solid lines)
and model-predicted M-ROC curve
$R_{\widehat\lambda, \widehat\lambda}(p)$ (dashed lines)
for the Poisson model in which
log intensity is a linear function of distance to nearest fault
and the greenstone indicator.
This suggests a good fit.

These diagnostic tools may be useful for detecting departures from a
fitted model. Their weaknesses are the same as those described above
for ROC curves in general.
For example, suppose that the model is correct except for misspecification
by an increasing transformation.
That is, the true process has intensity $\lamtrue(u)$
while the intensity of the model is
$\lambda(u) = \psi(\lamtrue(u))$ where $\psi$ is
an increasing transformation, possibly depending on parameters of the model.
Then the ROC is the same as if the model were completely correct:
$
R_{\psi \circ \lamtrue, \lamtrue}(p)
\equiv
R_{\lamtrue, \lamtrue}(p)$
and when the law of large numbers is a good approximation,
$R_{\hat\lambda, \bx}(p) \equiv R_{\lamtrue, \bx}$.
Comparison of the empirical and model-predicted C-ROC curves
cannot distinguish this kind of misspecified model
from the correct model. Perfect agreement between
$R_{\hat\lambda, \bx}(p)$ and 
$R_{\hat\lambda, \hat\lambda}(p)$ does not prove the model is correct.

\section{Discussion}
\label{S:discussion}

\subsection{Summary of main findings}
\label{S:discussion:findings}

This article has four main findings.
First, we have established that
the current use of ROC curves to evaluate model performance is partly erroneous.
Second, the correct interpretation of ROC curves has been found
by elucidating their connection to other statistical tools.
Third, we have emphasised that there are many versions of the ROC curve,
and have proposed new ROC curves for spatial data,
allowing better interpretation and expanding the scope of application.
Fourth, we have proposed new practical techniques
for valid evaluation of model performance
by comparing empirical and fitted ROC curves.

The first finding concerns the correct interpretation of ROC curves
for a fitted model. In the applications literature, ``the'' ROC curve for a
species distribution model is understood to mean the ROC curve based on the fitted
presence probabilities under the model (Section~\ref{S:ROC:model}). We call this the ``M-ROC''.
We find that the M-ROC is invariant under monotone transformation of model predictions
(probabilities or intensities) (Section~\ref{S:ROCmodel:monotone}).
This implies that the M-ROC cannot distinguish between two models
based on the same covariate (if both models are based only on that covariate, and
are both monotone).
If a model involves only one covariate, then the M-ROC collapses to
the C-ROC, a curve constructed without reference to the model (Section~\ref{S:ROCmodel:collapse})
and therefore has nothing to do with the goodness-of-fit of the model.
We demonstrated that ROC or AUC is not a measure of goodness-of-fit of a fitted model
(Section~\ref{S:ROCmodel:badness}); rather, it 
is a measure of ``badness-of-fit'' of the null model. The use of ROC in this context
contravenes the standard practice
that a test of goodness-of-fit should be based on a covariate that was not included in the model.
The M-ROC is a measure of ``ranking ability'' rather than ``predictive power'' .
The M-ROC is completely bound to the study region
(Sections~\ref{S:ROCcovar:region} and \ref{S:ROCmodel:subregion}),
mostly because of inhomogeneity in the study region.
It is not possible to extrapolate the ROC from one region to another;
restriction to a sub-region can lead to Simpson's paradox;
restriction to a more homogeneous sub-region lowers the AUC;
ROC cannot be used to predict the response to changes in the covariate value
(such as climate change).
ROC and AUC are also insensitive to dependence on other covariates
not included in the model (Section~\ref{S:tests:insensitivity})

The second finding concerns the connections between ROC curves
and other statistical tools, in the context of spatial data.
It is well known that ROC curves are closely related to P-P plots
and other tools for distributional comparison.
In the case of a single covariate, 
we show that the AUC and Youden index are related to the Berman-Waller-Lawson test
and Kolmogorov-Smirnov test, respectively, of the null hypothesis of uniformity
(Section~\ref{S:tests}). That is, they are measures of the degree of departure from the null
model, rather than measures of agreement with the fitted model.
We also established connections between ROC curves and the
function $\rho$ (the resource selection function or prospectivity index)
which expresses the presence probability or intensity as a function of the covariate
(Section~\ref{S:rhohat}). The function $\rho$ can be interpreted as a law or model
predicting the probability of occurrence as a function of the covariates, in a form
which could be extrapolated from one region to another, while the ROC cannot.
These connections provide insight and
implies that these different techniques are not independent pieces of evidence.

The third finding concerns different versions of the ROC curve
in the context of spatial data.
We emphasised that M-ROC is only one particular application of the
ROC concept. We argued that there are many ways to apply/define ROC curves
in this context other than the M-ROC.
We introduced the C-ROC which is based on a covariate rather than a model
(Section~\ref{S:ROC:covariate}). The C-ROC is invariant under monotone transformation of covariate,
and is a measure of ``ranking ability'' rather than ``predictive power''.
We introduced predicted (fitted) and theoretical (true) versions of
the M-ROC and C-ROC (Sections~\ref{S:ROC:model:defn:pixel} and \ref{S:ROC:model:defn:cts}).
We extended all these concepts to spatial point pattern data in continuous space
(Sections~\ref{ss:rocData.cts} and \ref{S:ROC:model:defn:cts}). The 
continuous case is often simpler to understand. We established that the pixel-based ROC
converges as pixel size tends to zero; therefore, ROC curves obtained using different pixel sizes
are approximately consistent. We note that the ROC for presence-absence data effectively
assumes that pixels have equal area; this is sensitive to the choice of coordinate system;
calculating ROC on a square grid in latitude-longitude coordinates would
be ``wrong'' or ``biased''.
We proposed leave-one-out calculation for the M-ROC to
avoid overfitting (end of Section~\ref{S:ROC:model:defn:pixel}).
We extended the M-ROC and C-ROC to spatial case-control point pattern data
(Sections~\ref{ss:casecontrol}, \ref{ss:ROC:covariate:casecontrol},
\ref{S:ROC:model:casecontrol}).
We proposed more extensions of ROC including
the ROC relative to a baseline (Section~\ref{S:ROC.baseline}),
ROC with data weights (Section~\ref{ss:weighted}),
ROC restricted to a subregion (Section~\ref{ss:restriction})
and especially the partial ROC for dropping
an explanatory variable from a fitted model
or for adding an explanatory variable to the model
(Section~\ref{ss:dropROC}).

The fourth finding is a set of new proposed practical techniques
for valid evaluation of model performance
by comparing empirical and fitted ROC curves.
The shape of the empirical C-ROC carries diagnostic information about whether a
monotone model is appropriate (hence whether ROC analysis is appropriate)
(Section~\ref{S:modelcheck}).
The theoretical and model-predicted M-ROC are always concave.
(Corollary~\ref{COR:ROCshape} and Lemma~\ref{L:concave:model}   
in Section~\ref{S:rhohat:ROC}).
The theoretical M-ROC is the most optimistic of all
theoretical ROC curves (Section~\ref{ss:NeymanPearson}).
Model validation is possible by comparing empirical and predicted ROC curves
(either C-ROC or M-ROC) (Section~\ref{S:modelcheck}).
C-ROC has potential as a tool for variable selection (since it is insensitive to
monotone transformations of the explanatory variable).
We also defined partial ROC curves for evaluating the effect of adding another explanatory variable
to an existing model, or the effect of removing one of the explanatory variables from the model.
The dependence of ROC curves on the study region can be turned to advantage,
for example to consider sub-populations or to understand ``interaction'' between variables.

\subsection{Discussion of findings}

The most important implication of this study is that ROC curves for spatial data should not be treated
as providing confirmation of the validity or goodness-of-fit of a model.
They measure the ranking ability of the model, within the study region.
The ROC is bound to the choice of study region.
There is no ability to extrapolate or interpolate the ROC
from one region to another. There is no ability to predict response to changes in the
values of predictor variables, such as response to climate change.
The ROC curve is also insensitive to detecting effects that only involve a small
fraction of the population
(see \ifelseArXiV{Appendix~\ref{SUPP:chorley}}{Section S4 of Supplement}).

On the other hand, the ROC has advantages for \emph{exploratory} data analysis.
The insensitivity of the ROC to monotone transformations
can be exploited for variable selection. It is an advantage
that the ROC can be used before committing to a particular parametric form
of model (for selecting variables, for deciding whether a monotone model
is appropriate, etc.).
Dependence on the study region
can be an advantage when the objective is to segregate
a specific region, e.g.\ to find the best place to put a wind farm to
avoid harm to birds.
The ROC is useful for gauging the magnitude of the ``effect''
of a covariate in a particular spatial region, whereas a statistical model
is a general relationship which could have either a large or small effect in different regions.

Our study presents several new practical opportunities involving ROC curves.
We proposed a technique for model checking by comparing model-predicted and
empirical versions of ROC.
The shape of the ROC curve is diagnostic: if the ROC curve is concave/convex, a monotone model
is appropriate, and the approach is valid. (Note the model-predicted ROC is always concave.)

\subsubsection{Questions for future research}

It would be useful to obtain expressions for the variance of the M-ROC for specific
models fitted by maximum likelihood, especially for logistic regression
and for the loglinear Poisson point process. 

Since ROC curves are closely related to P--P plots  (Section~\ref{S:background:ROC:P-P}),
it may be useful to construct \textbf{Q--Q plots} for the same data,
as these frequently contain complementary information \citep{wilkgnan68}.

\paragraph{Acknowledgements}

We thank Professor Matt Wand for insightful comments,
and Andrew Hardegen, Kassel Hingee and Tom Lawrence who participated
in our early research on this topic.

The Murchison data (Figure~\ref{F:murchison}) are
reproduced by permission of Dr Tim Griffin
of the Geological Survey of Western Australia
and by Dr Carl Knox-Robinson. These data are publicly available
in the \textsf{R} package \texttt{spatstat} \citep{baddrubaturn15}.
Gold deposit and occurrence locations were obtained from a database compiled
by the Geological Survey of Western Australia \citep{MINEDEX}; they include deposits 
of all sizes and are based on map surveys at a scale of 1:50,000
or better \citep{knoxgrov97}. Fault locations were compiled by \citet{watkhick90}.
These data were presented and analysed by \citet{knoxgrov97}
and subsequently by \citet{grovetal00,foxabadd02,baddrubaturn15,badd18iamg}.
Although there exist more up-to-date versions of this survey,
we use the original 1994 version to allow comparison between
different published analyses of the same data.

This project was partially funded by
\textit{The Royal Society of New Zealand}
through Marsden Grant 23-UOO-148.

\addcontentsline{toc}{section}{References}

\bibliography{%
new,%
biblio/agterberg,%
biblio/archaeology,%
biblio/appstat,%
biblio/badd,%
biblio/botany,%
biblio/brillinger,%
biblio/carranza,%
biblio/changepoint,%
biblio/ecology,%
biblio/epidemiol,%
biblio/ford,%
biblio/geoscience,%
biblio/missingdata,%
biblio/prospectivity,%
biblio/speciesdistrib,%
biblio/spatiallogistic,%
biblio/spatstat,%
biblio/stat,%
biblio/stochgeom,%
biblio/thresholdwofe,%
biblio/warick,%
biblio/youdenindex%
}

\appendix

\addcontentsline{toc}{section}{APPENDICES}
\section*{APPENDICES}
\section{ROC curves for subregions}
\label{app:split_region}

If a region $W$ has been divided into disjoint subsets $W_1,W_2$
and results in ROC curves $R_1(p), R_2(p)$, then the ROC curve
for the entire region $W$ can be reconstructed only if we also know
the areas $a_i = |W_i|$, the numbers of data points $n_i = n(\bx \cap W_i)$
in each subregion, and the spatial cumulative distribution functions
of the covariate in each subregion,
$F_i(z) = |\{u \in W_i: Z(u) \le z\}|/|W_i|$. Then
\begin{equation}
  \label{eq:reconstructR}
  R(p) = \frac{n_1}{n_1 + n_2} R_1(F_1(F^{-1}(p))) +
  \frac{n_2}{n_1 + n_2} R_2(F_2(F^{-1}(p))),
\end{equation}
where
\begin{equation}
  \label{eq:reconstructF}
  F(z) = \frac{a_1}{a_1+a_2} F_1(z) + \frac{a_2}{a_1+a_2} F_2(z) .
\end{equation}

\section{Models for \textit{Beilschmiedia pendula} trees}
\label{SUPP:models}

As a complement to Figure~\ref{F:bei10modelconventional}
in Section~\ref{S:ROC:model:examples}\ifelseArXiV{,}{of the main article,} 
the left column of Figure~\ref{F:bei.fit} shows contours of the fitted probabilities of the presence
of \textit{Beilschmiedia pendula} in 10-metre pixels, for logistic regression
models depending on terrain elevation, terrain slope, and elevation and slope together.
The right column shows contours of the fitted intensity
for loglinear Poisson point process models,
depending on terrain elevation, terrain slope, and elevation and slope together.

\begin{figure}[!hbt]
  \centering
  \begin{tabular}{rcc}
    & Logistic regression & Loglinear Poisson \\
    \raisebox{2cm}{\mbox{Elevation}} & 
    \includegraphics*[width=0.4\textwidth,bb=15 120 385 310]{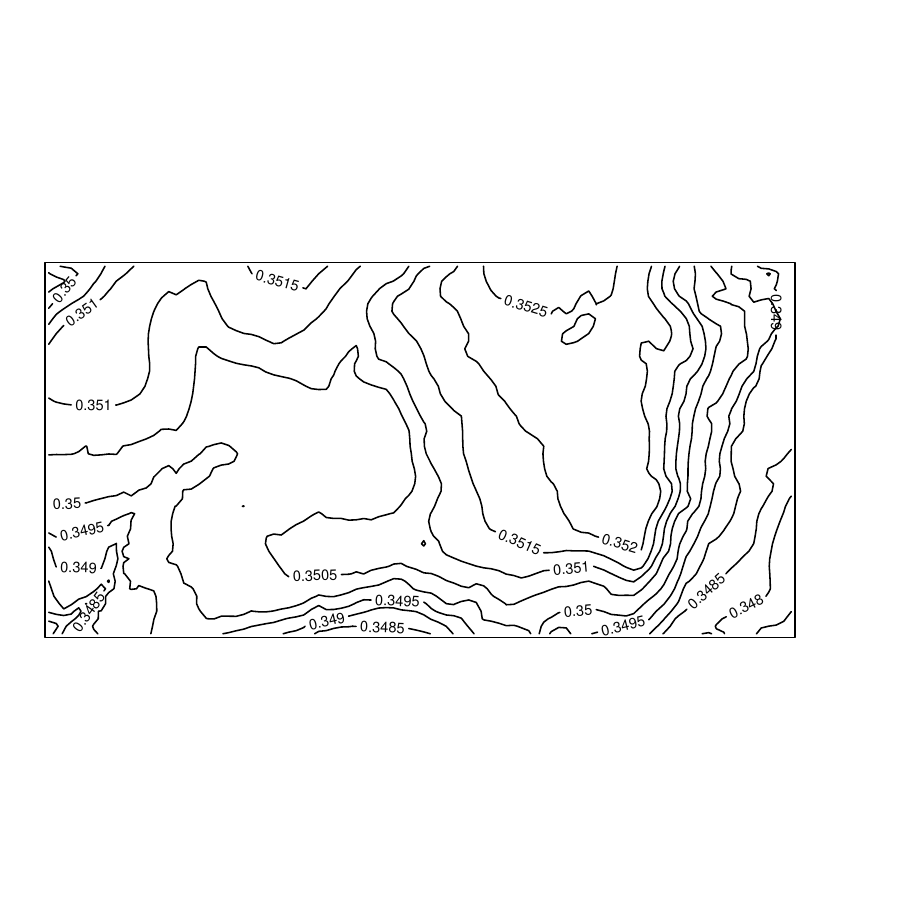} & 
    \includegraphics*[width=0.4\textwidth,bb=15 120 385 310]{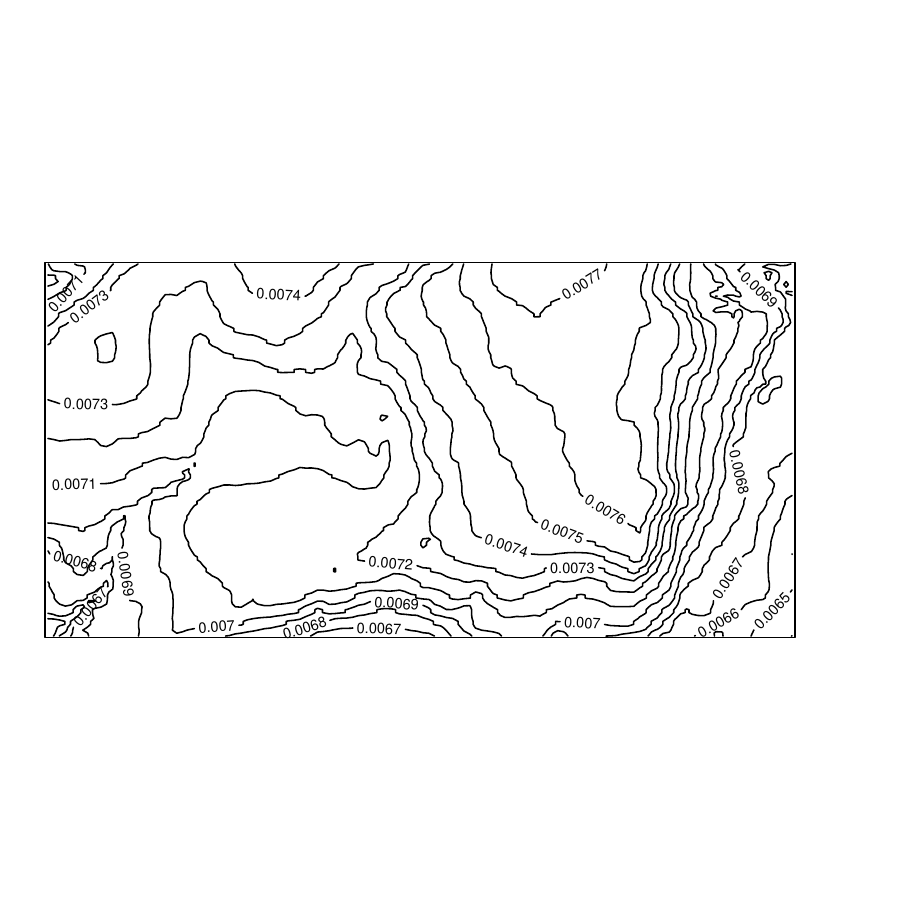} \\
    \raisebox{2cm}{\mbox{Slope}} & 
    \includegraphics*[width=0.4\textwidth,bb=15 120 385 310]{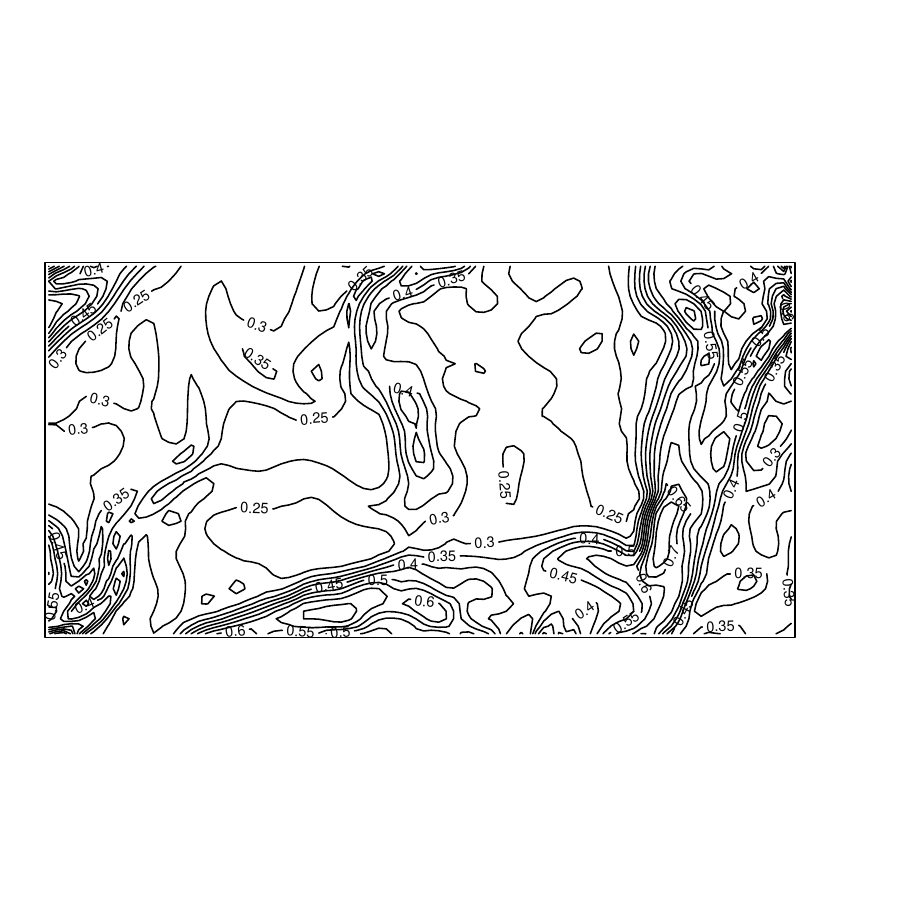} & 
    \includegraphics*[width=0.4\textwidth,bb=15 120 385 310]{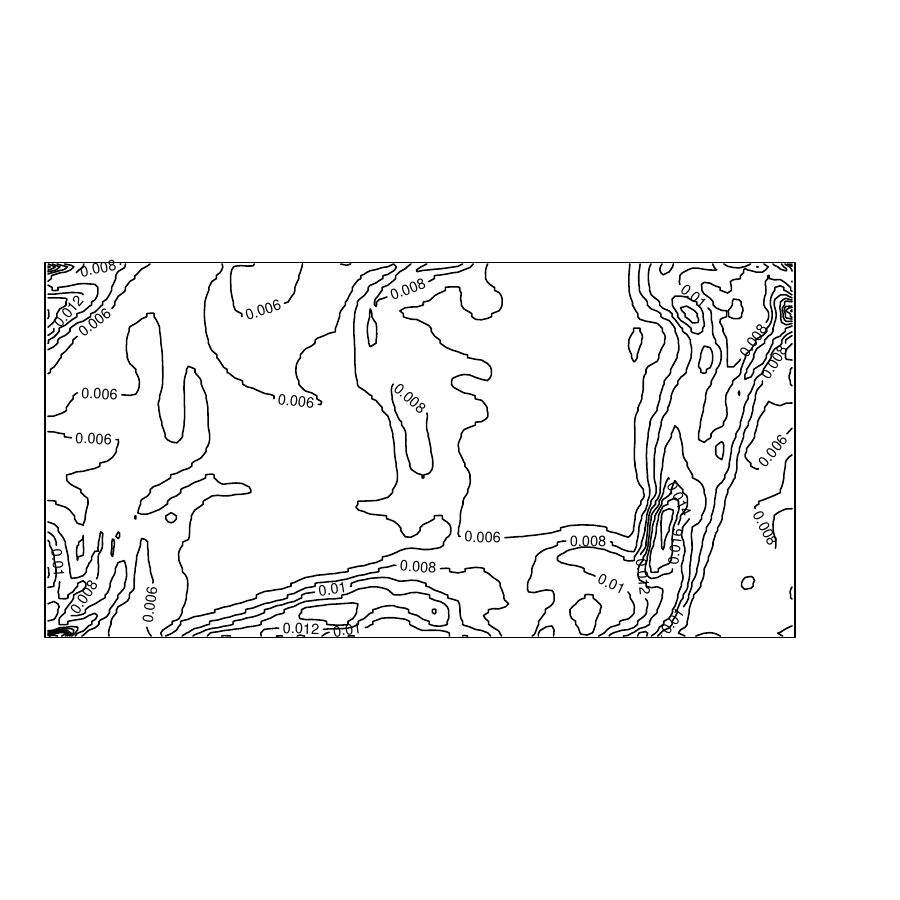} \\
    \raisebox{2cm}{\mbox{Elevation + Slope}} &
    \includegraphics*[width=0.4\textwidth,bb=15 120 385 310]{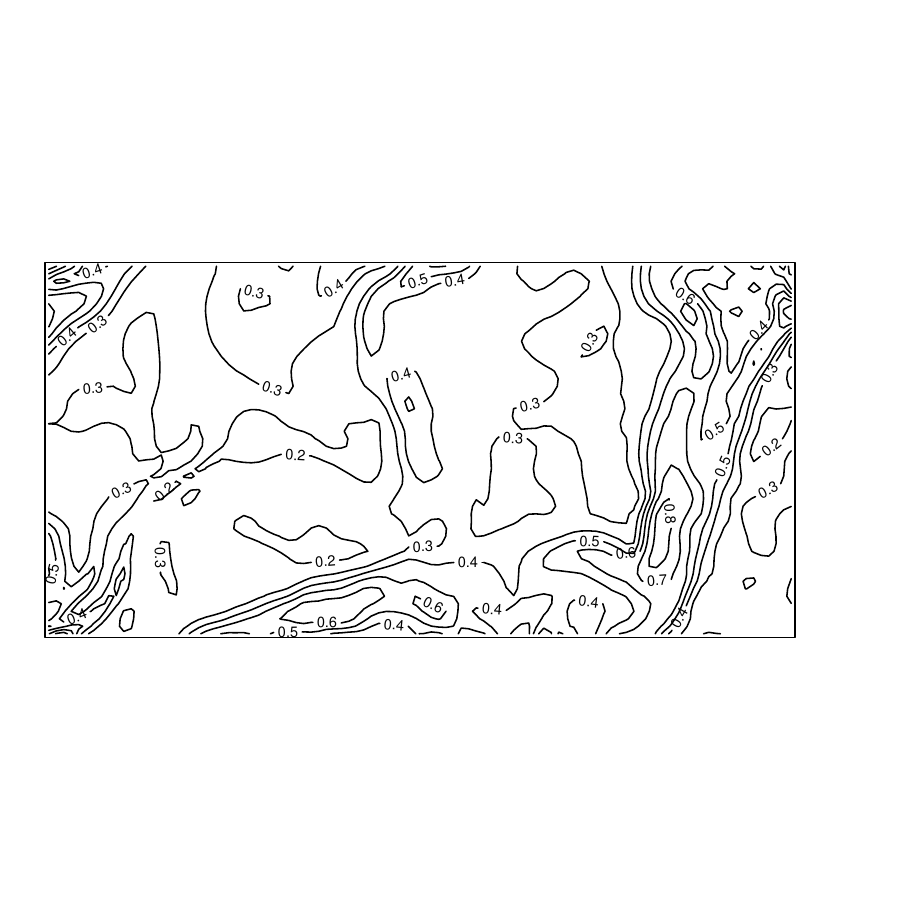} & 
    \includegraphics*[width=0.4\textwidth,bb=15 120 385 310]{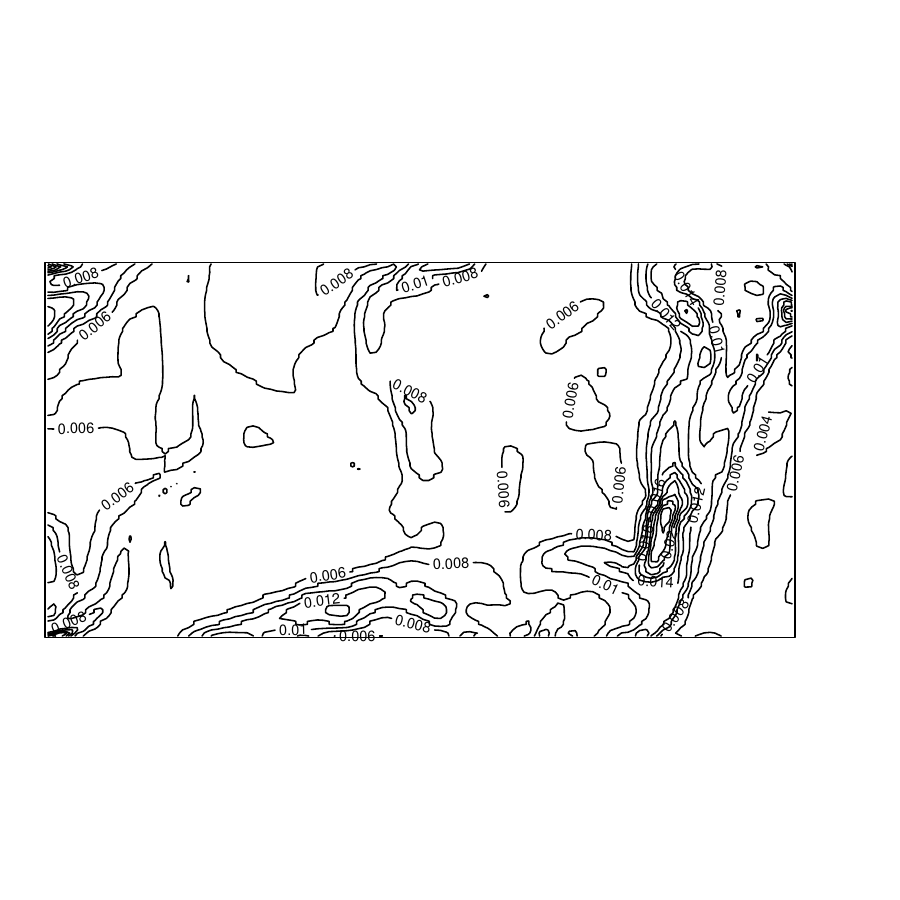}
  \end{tabular}
  \caption{
    Models in which the \textit{Beilschmiedia} presence probability
    depends on terrain elevation and/or slope.
    \emph{Left column} shows contours of fitted probability of presence, $p$,
    within each 10-metre pixel, for logistic regression models.
    \emph{Right column} shows contours of fitted point process intensity $\lambda(x)$
    for loglinear Poisson models.
    \emph{Top row:} models depending on terrain elevation.
    \emph{Middle row:} models depending on terrain slope.
    \emph{Bottom row:} additive models depending on terrain elevation and slope.
  }
  \label{F:bei.fit}
\end{figure}

\section{Weighted ROC for New Brunswick fires}
\label{SUPP:weightedROC}

\begin{figure}[!hb]
  \centering
  \centerline{
    \includegraphics*[width=0.3\textwidth]{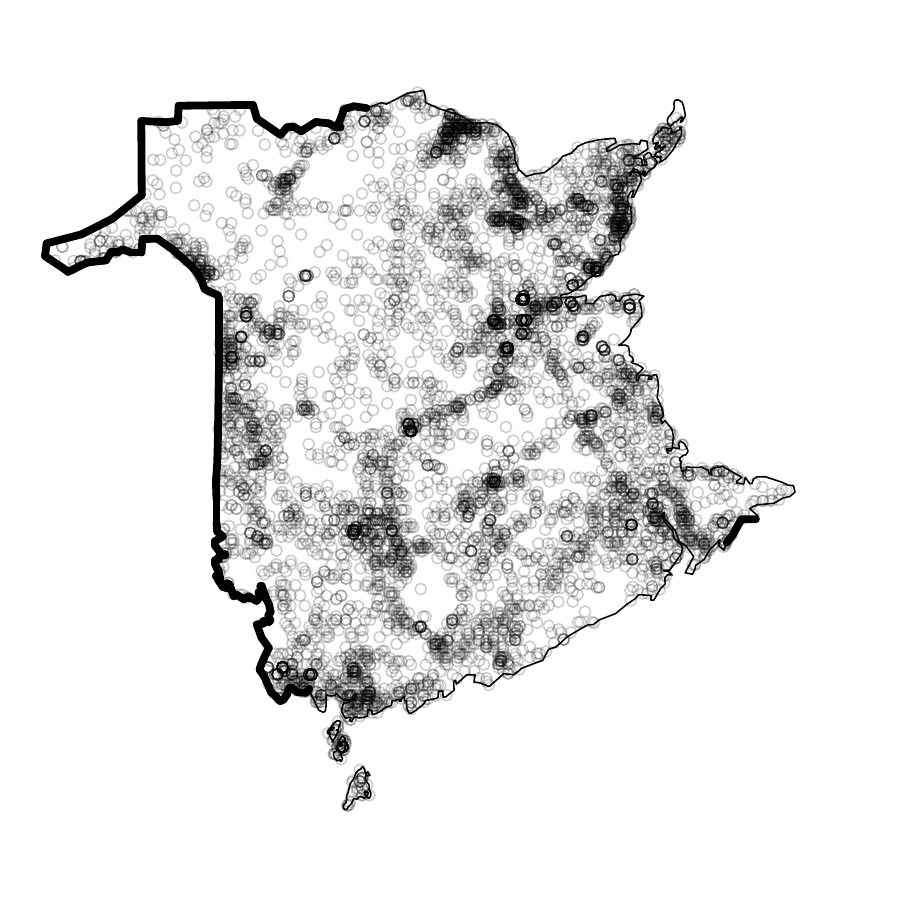}
    \includegraphics*[width=0.3\textwidth]{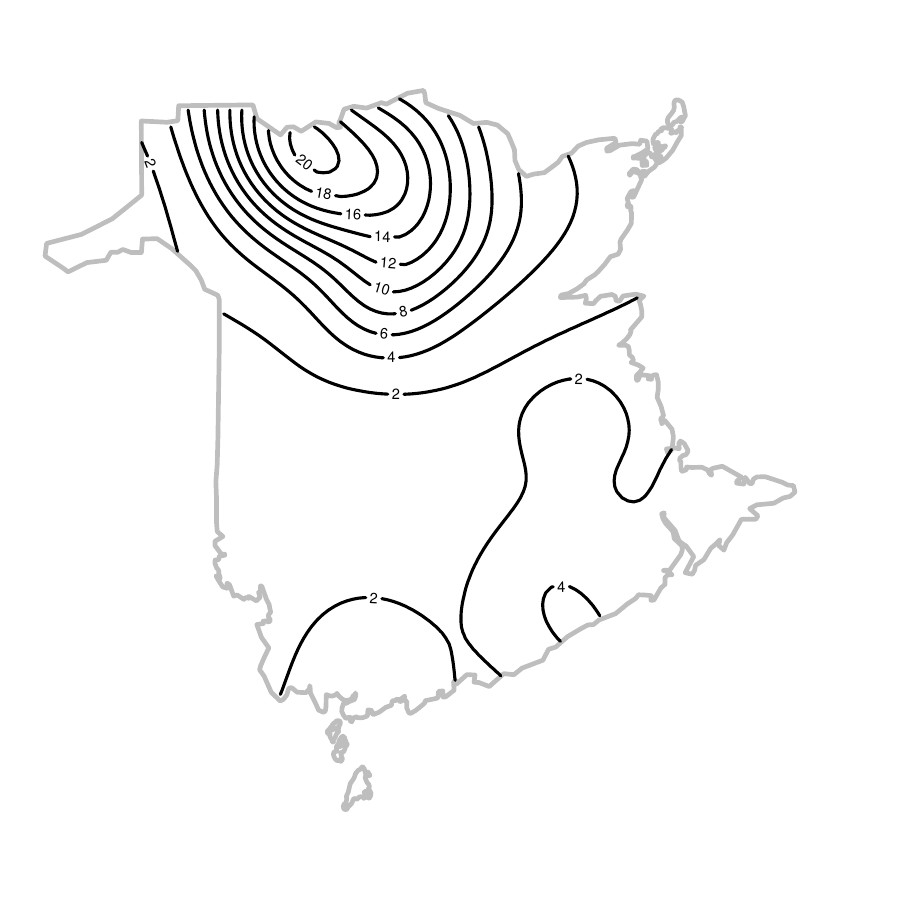}
  }
  \caption{New Brunswick fires.
    \emph{Left:} fire locations (open circles) and
    New Brunswick provincial boundary classified into coastline (thin lines)
    and boundary shared with other territories (thick lines).
    \emph{Right:} contour plot of kernel-smoothed average size of fires.
  }
  \label{AF:nbfires}
\end{figure}

The left panel of Figure~\ref{AF:nbfires}
shows the locations of 7108 fires detected in
the Canadian province of New Brunswick over a 16-year period \citep{turn09}.
The provincial boundary is classified into
coastline and administrative boundary segments.
Information available about each fire includes its final size in hectares.
Fire size distribution is strongly skewed to the right,
with range $0$ to $4871$ hectares, median $0.1$ and mean $3.6$.
The right panel of Figure~\ref{AF:nbfires}
shows a contour plot of a kernel-smoothed spatial average 
of the fire sizes, using the Naradaya-Watson smoother
based on an isotropic Gaussian kernel with standard deviation 30 kilometres.

Figure~\ref{AF:ROCweights} shows unweighted and weighted ROC curves
for the New Brunswick fire locations against distance to
the coastline, assuming short distances are more favorable to fires.
Thin solid lines show the unweighted ROC.
Dashed lines show the weighted ROC, with weights equal to fire size.
Solid lines show the weighted ROC with weights equal to fire size
truncated at 100 hectares.
In this example, the weighted ROC (dashed lines) is strongly influenced by the contribution
from the largest fires. 

\begin{figure}
  \centering
  \centerline{
    \includegraphics*[width=0.35\textwidth,bb=0 0 410 420]{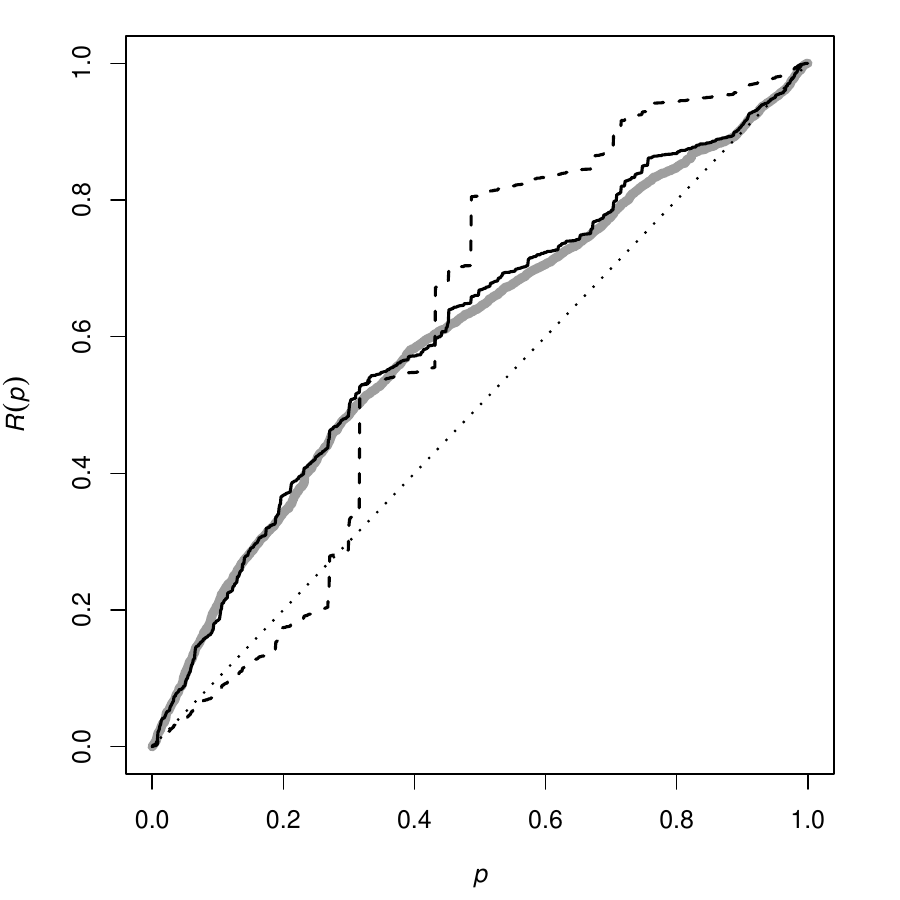}
  }
  \caption{
    Unweighted and weighted ROC curves
    for the New Brunswick fire locations against distance to
    the coastline, assuming short distances are more favorable to fires.
    Thick grey solid lines: unweighted.
    Dashed lines: weight is fire size.
    Thin black solid lines: weight is fire size truncated at 100 hectares.
  }
  \label{AF:ROCweights}
\end{figure}

\clearpage

\section{Poor utility of ROC for Chorley-Ribble cancer data}
\label{SUPP:chorley}

Figure~\ref{AF:chorley}
shows the Chorley-Ribble cancer data of \citet{digg90}
giving the residential locations of new cases 
of cancer of the larynx (58 cases) and cancer of the lung (978 cases)
in the Chorley and South Ribble Health Authority
of Lancashire, England, between 1974 and 1983.
The location of a disused industrial incinerator is also given.
The aim is to assess evidence for an increase in
the incidence of laryngeal cancer close to the 
incinerator. The lung cancer cases serve as a
surrogate for the spatially varying density of the susceptible population. 
Data analysis in \citet{digg90,diggrowl94,baddetal05,hazedavi09} 
concluded there is significant evidence of an incinerator effect.

\begin{figure}[!hb]
  \centering
  \centerline{
    \includegraphics*[width=0.5\textwidth,bb=65 130 395 345]{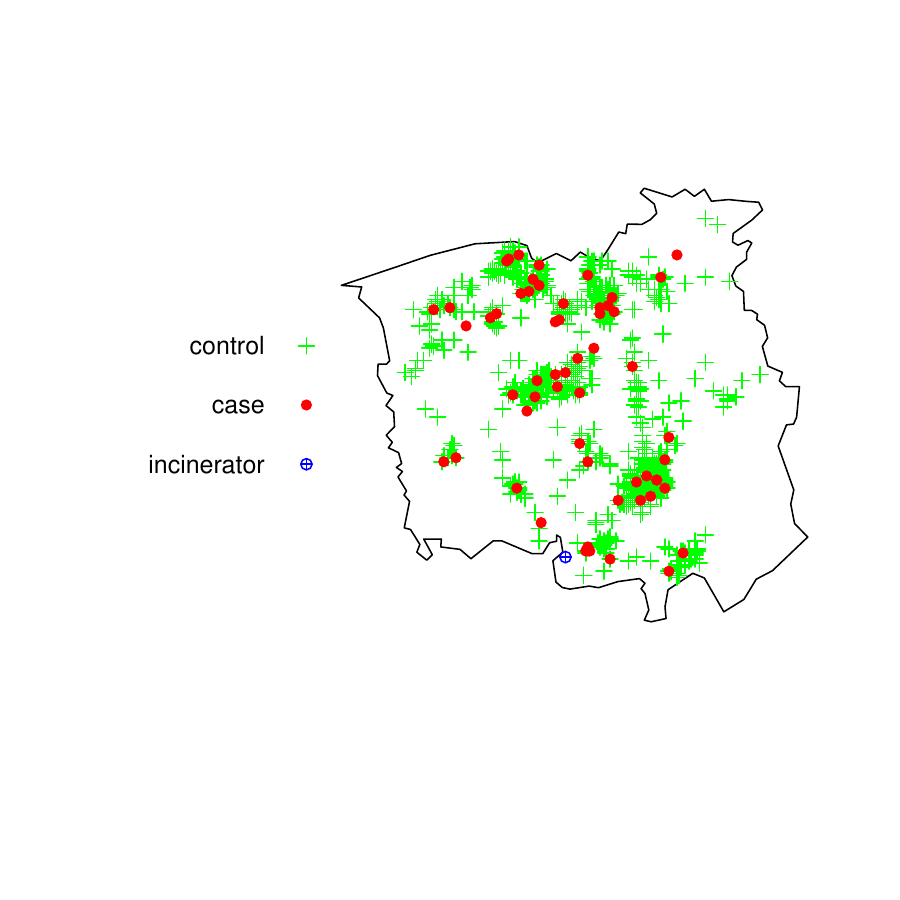}
  }
  \caption{Chorley-Ribble data. 
    Spatial locations of cases of cancer of the larynx ($\bullet$)
    and cancer of the lung ($+$)
    and a disused industrial incinerator ($\oplus$).
    Survey area about 25 kilometres across.
  }
  \label{AF:chorley}
\end{figure}

Figure~\ref{AF:chorleyROC} shows the C-ROC of distance to incinerator
for cases (larynx cancer) against controls (lung cancer) in the Chorley-Ribble
data. The corresponding AUC is 0.54. These suggest no evidence
against the null hypothesis that distance to incinerator has no effect.
The ROC and AUC are not useful in this example,
because the incinerator effect
causes only a small number of excess cases of larynx cancer.

\begin{figure}
  \centering
  \centerline{
    \includegraphics*[width=0.35\textwidth]{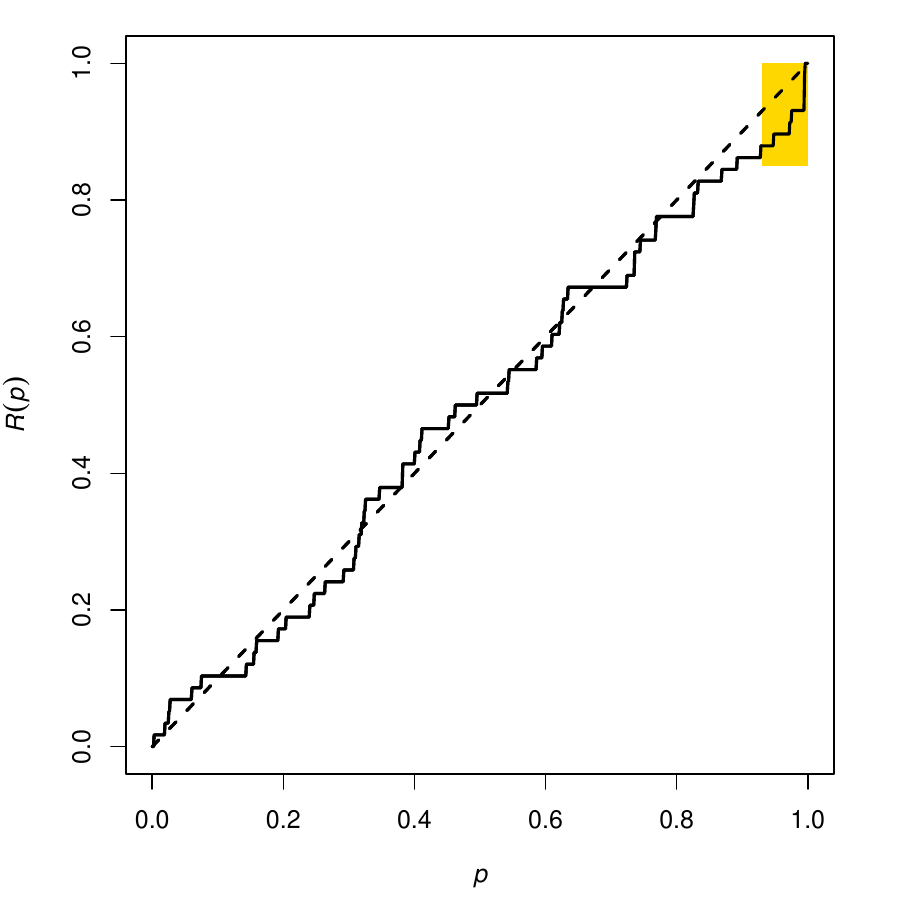}
    \includegraphics*[width=0.35\textwidth]{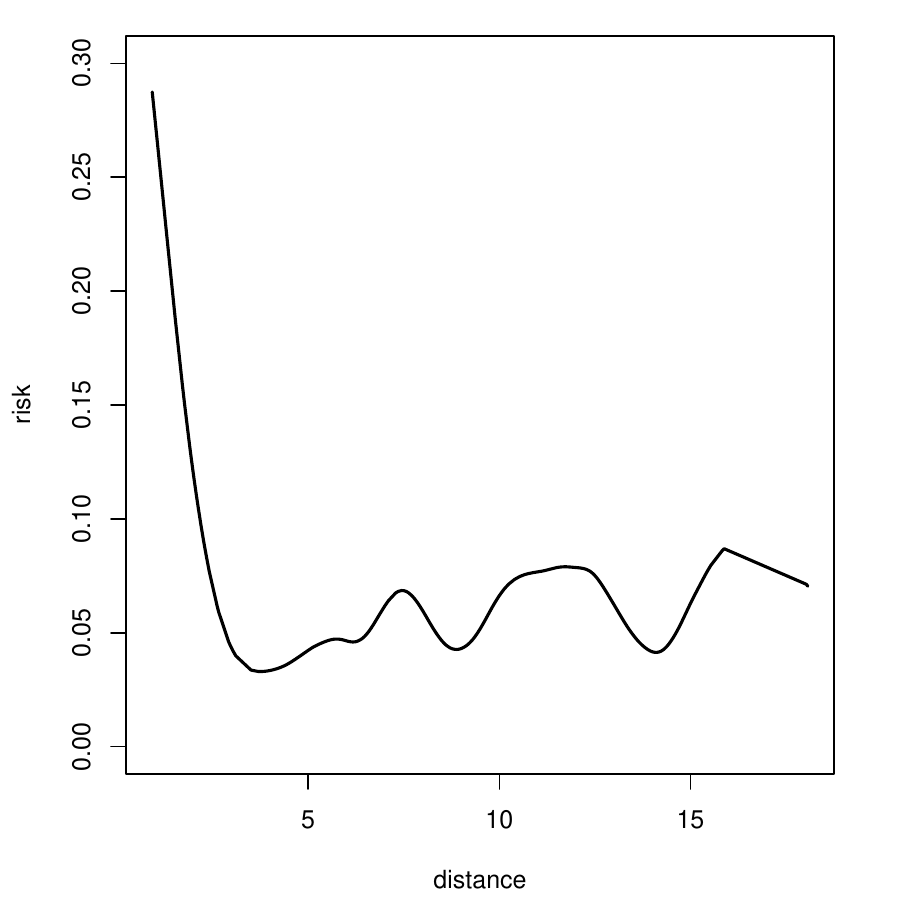}
  }
  \caption{
    Analysis of the effect of distance to incinerator for the
    Chorley-Ribble data.
    \emph{Left:} empirical C-ROC of distance to incinerator.
    \emph{Right:} spline smoothing estimate of probability of a case
    as a function of distance to incinerator.
  }
  \label{AF:chorleyROC}
\end{figure}

\clearpage
\section{Synthetic example of subregion effect}
\label{SUPP:egeStripExample}

Figure~\ref{F:theo:data} shows a synthetic example in which the
original study region is an elongated rectangle  $W = [-10,10] \times [-1,1]$
and the point process intensity is $\lambda(x,y) = 100 \times 2^{-|x|}
= \exp(\log 100 - |x| \, \log 2)$.
  
\begin{figure}[!h]
  \centering
  \centerline{
    \includegraphics*[width=0.9\refwidth]{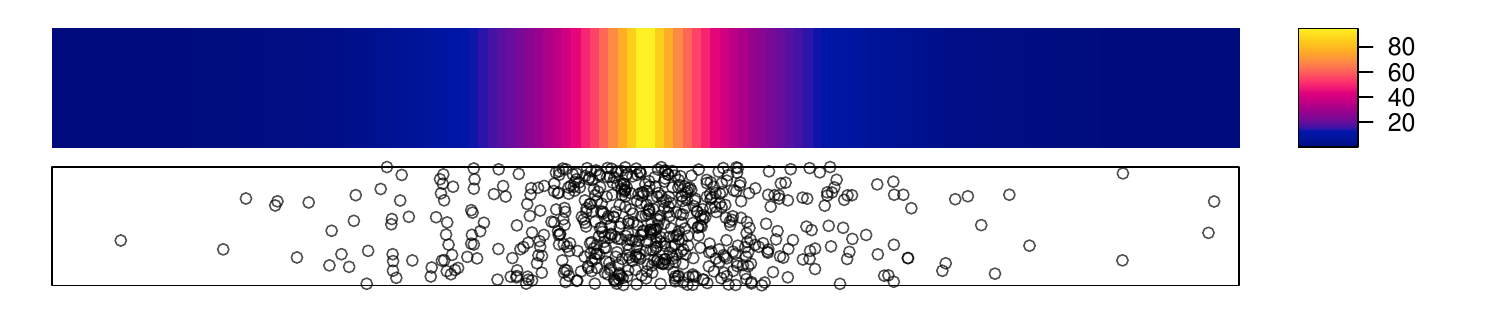}
  }
  \caption{
    Synthetic example illustrating dependence on study region.
    \emph{Top:} intensity function.
    \emph{Bottom:} simulated realisation of Poisson process.
  }
  \label{F:theo:data}
\end{figure}

The left panel of
Figure~\ref{F:theo:ROC} shows the empirical C-ROC for the synthetic data
in Figure~\ref{F:theo:data} using the absolute value of the $x$ coordinate as the covariate.
Also shown is the theoretical C-ROC curve calculated from
\eqref{eq:TP:true:cts} and \eqref{eq:FPhat:cts}.
The right panel shows the simulation envelope of C-ROC curves obtained from 50
simulated realisations of the model which produced Figure~\ref{F:theo:data}.

\begin{figure}[!h]
  \centering
  \centerline{
    \includegraphics*[width=0.3\refwidth,bb=0 0 410 420]{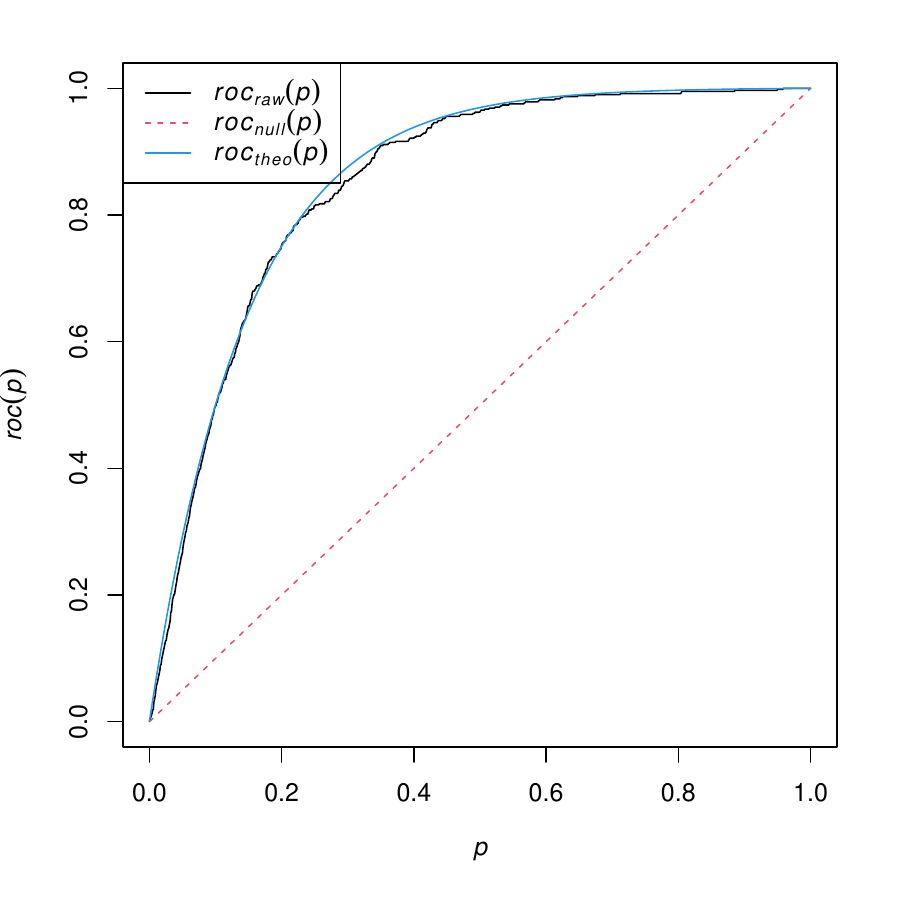}
    \includegraphics*[width=0.3\refwidth,bb=0 0 410 420]{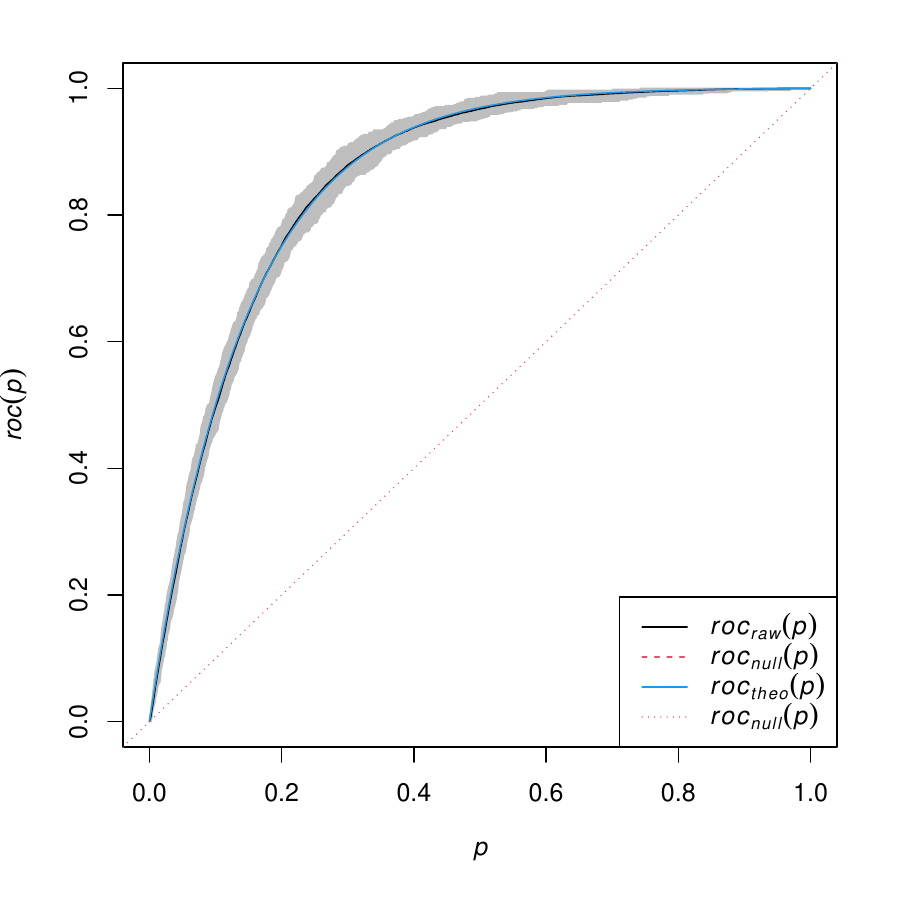}
  }
  \caption{
    \emph{Left:} empirical C-ROC curve for the simulated point pattern
    in the bottom panel of Figure~\ref{F:theo:data}.
    \emph{Right:} envelope of C-ROC curves from 50 simulated realisations
    of the same model.
  }
  \label{F:theo:ROC}
\end{figure}

Figure~\ref{F:theo:depends} shows the C-ROC curves
derived for different subregions $[-a,a] \times [-1,1]$
in the synthetic example.
The left panel shows empirical C-ROC curves, and the right panel
shows the corresponding theoretical C-ROC curves. The C-ROC curve
and AUC value depend greatly on the choice of sub-region. 

\begin{figure}[!hb]
  \centering
  \centerline{
    \includegraphics*[width=0.3\refwidth,bb=0 0 410 420]{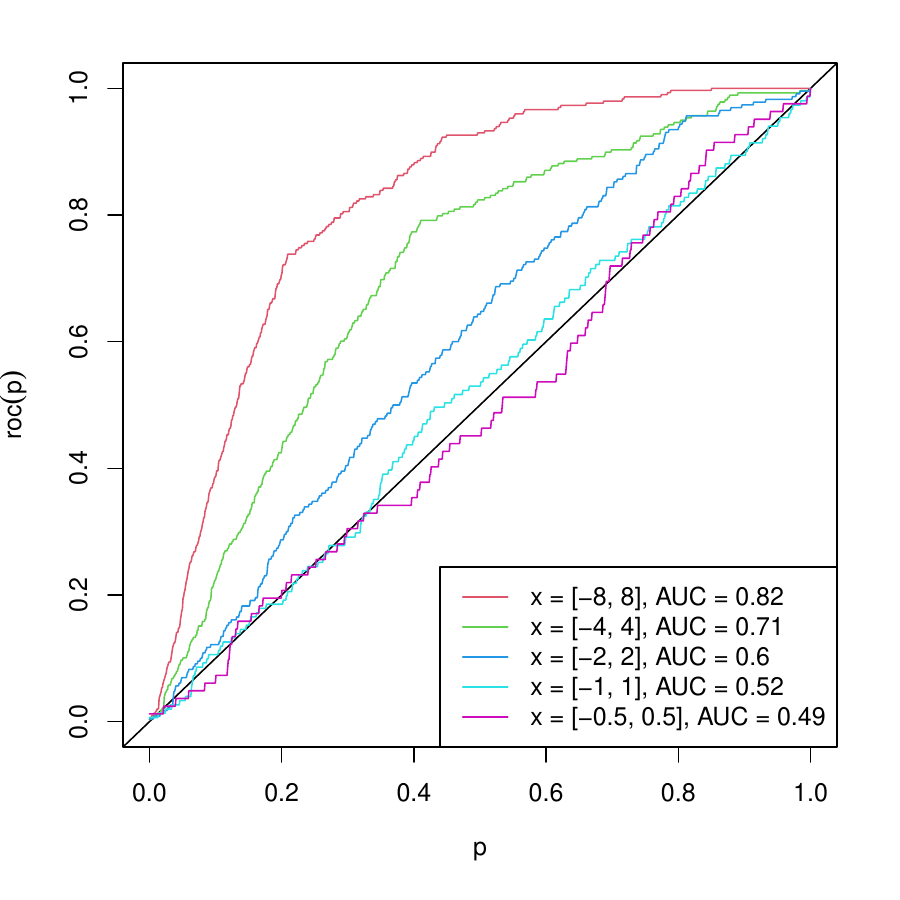}
    \hspace*{3mm}
    \includegraphics*[width=0.3\refwidth,bb=0 0 410 420]{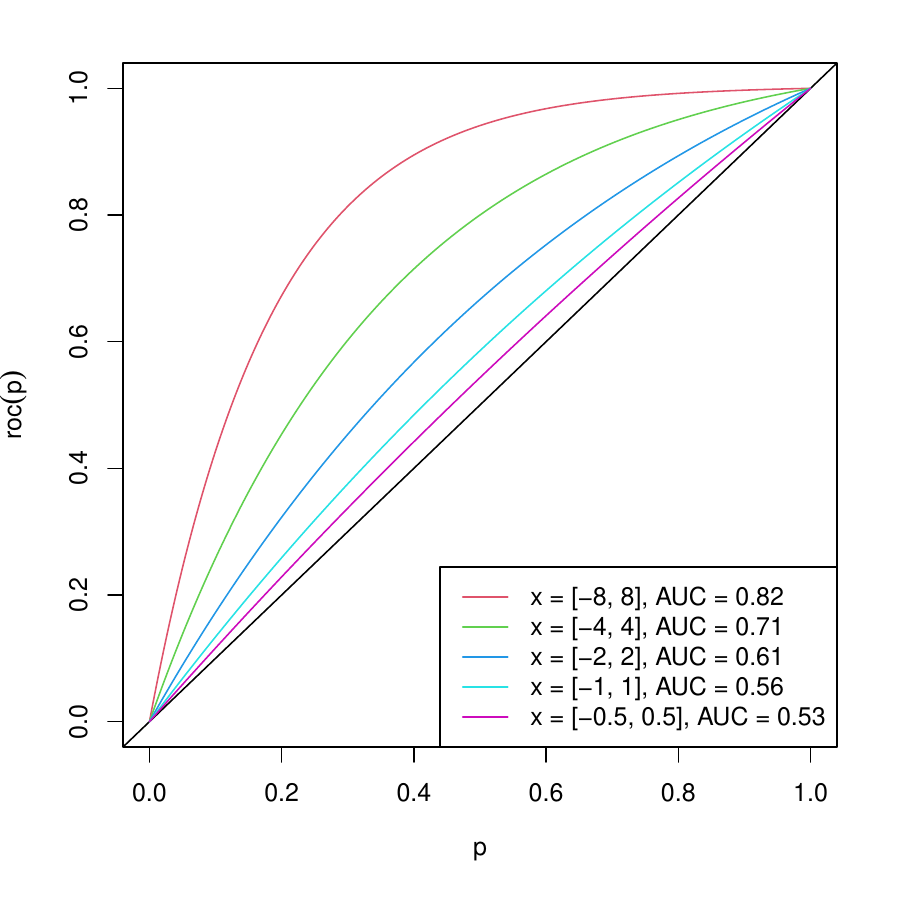}
  }
  \caption{
    C-ROC curves for the synthetic example in Figure~\ref{F:theo:data}
    restricted to different sub-regions.
    \emph{Left:} empirical C-ROC curves for the simulated point pattern
    in the bottom panel of Figure~\ref{F:theo:data}.
    \emph{Right:} theoretical C-ROC curves for the model
    with intensity in the upper panel of Figure~\ref{F:theo:data}.
  }
  \label{F:theo:depends}
\end{figure}

\end{document}